\newtheorem{thm}{Theorem}[section]
\newtheorem{lem}[thm]{Lemma}
\newtheorem{cor}[thm]{Corollary}
\newtheorem{pro}[thm]{Proposition}
\theoremstyle{definition}
\newtheorem{ex}[thm]{Example}
\newtheorem{rmk}[thm]{Remark}
\newtheorem{defi}[thm]{Definition}
\newcommand{\nc}{\newcommand}
\newcommand{\delete}[1]{}
\nc{\mlabel}[1]{\label{#1}}  
\nc{\mcite}[1]{\cite{#1}}  
\nc{\mref}[1]{\ref{#1}}  
\nc{\meqref}[1]{\eqref{#1}}  
\nc{\mbibitem}[1]{\bibitem{#1}} 
\nc{\mlabel}[1]{\label{#1}{\hfill \hspace{1cm}{\bf{{\ }\hfill(#1)}}}}
\nc{\mcite}[1]{\cite{#1}{{\bf{{\ }(#1)}}}}  
\nc{\mref}[1]{\ref{#1}{{\bf{{\ }(#1)}}}}  
\nc{\meqref}[1]{\eqref{#1}{{\bf{{\ }(#1)}}}}  
\nc{\mbibitem}[1]{\bibitem[\bf #1]{#1}} 
\newcommand {\emptycomment}[1]{}
\newcommand {\yh}[1]{{\textcolor{purple}{yh: #1}}}
\newcommand{\hl}[1]{\textcolor{blue}{hl: #1}}
\newcommand {\yh}[1]{{yh: #1}}
\newcommand{\hl}[1]{{hl: #1}}
\nc{\oprn}{\theta}
\newcommand{\emptycomment}[1]{}
\newcommand{\yh}[1]{{\marginpar{*}\scriptsize\textcolor{purple}{yh: #1}}}
\nc{\calo}{\mathcal{O}}
\nc{\oop}{$\mathcal{O}$-operator\xspace}
\nc{\oops}{$\mathcal{O}$-operators\xspace}
\nc{\mrho}{{\bm{\varrho}}}
\nc{\bfk}{\mathbf{K}}
\nc{\invlim}{\displaystyle{\lim_{\longleftarrow}}\,}
\nc{\ot}{\otimes}
\nc{\eval}[1]{\Big|_{#1}}
\newcommand{\add}{\frka\frkd}
\newcommand{\lon }{\,\rightarrow\,}
\newcommand{\be }{\begin{equation}}
\newcommand{\ee }{\end{equation}}
\newcommand{\g}{\mathfrak g}
\newcommand{\h}{\mathfrak h}
\nc{\RR}{\mathbb{R}}
\nc{\CC}{\mathbb{C}}
\newcommand{\huaS}{\mathcal{S}}
\newcommand{\huaL}{\mathcal{L}}
\newcommand{\huaR}{\mathcal{R}}
\newcommand{\frka}{\mathfrak a}
\newcommand{\frkd}{\mathfrak d}
\newcommand{\frkk}{\mathfrak k}
\newcommand{\frkl}{\mathfrak l}
\newcommand{\frkp}{\mathfrak p}
\newcommand{\iii}{{\mathbf{1}}}
\newcommand{\br}[1]{   [ \cdot,    \cdot  ]   }
\newcommand{\Der}{\mathrm{Der}}
\newcommand{\Ad}{\mathrm{Ad}}
\newcommand{\Aut}{\mathrm{Aut}}
\newcommand{\gl}{\mathfrak {gl}}
\newcommand{\End}{\mathrm{End}}
\newcommand{\ad}{\mathrm{ad}}
\nc{\CV}{\mathbf{C}}
\begin{document}

\title[Reflections on Rota-Baxter Lie algebras]{Reflections on Rota-Baxter Lie algebras, the classical reflection equation and Poisson homogeneous spaces}

\author{Honglei Lang}
\address{College of Science, China Agricultural University, Beijing, 100083, China
}
\email{hllang@cau.edu.cn}

\author{Yunhe Sheng}
\address{Department of Mathematics, Jilin University, Changchun 130012, Jilin, China}
\email{shengyh@jlu.edu.cn}


\begin{abstract}
In this paper, first we introduce the notion of reflections on quadratic Rota-Baxter Lie algebras of weight $\lambda$,  and show that they give rise to solutions of the classical reflection equation for the corresponding triangular Lie bialgebra ($\lambda=0$) and factorizable Lie bialgebra ($\lambda\neq0$). Then we study reflections on relative Rota-Baxter Lie algebras, and also show that they give rise to solutions of the classical reflection equation for certain Lie bialgebras determined by the relative Rota-Baxter operators. In particular, involutive automorphisms on pre-Lie algebras and post-Lie algebras naturally lead to reflections on the induced relative Rota-Baxter Lie algebras.  Finally, we derive Poisson Lie groups and Poisson homogeneous spaces from quadratic Rota-Baxter Lie algebras and relative Rota-Baxter Lie algebras.
\end{abstract}

\renewcommand{\thefootnote}{}
\footnotetext{2020 Mathematics Subject Classification.
17B38, 
17B62, 
53D17
}

\keywords{the classical reflection equation, Lie bialgebra,  quadratic Rota-Baxter Lie algebra,  Poisson homogeneous space}

\maketitle

\tableofcontents
\section{Introduction}

The purpose of this paper is to construct solutions of the classical reflection equation and Poisson homogeneous spaces using Rota-Baxter Lie algebras.

\subsection{The classical reflection equation and Poisson homogeneous spaces}

  A Poisson Lie group is a Lie group with a Poisson structure such that the group multiplication is a Poisson
map. It was introduced by Drinfeld as the semi-classical limit of quantum groups.
A Poisson homogeneous space is a Poisson manifold with a transitive Poisson Lie group action, such that the action is a Poisson map.
They can be classified in terms of Dirac structures (Lagrangian subalgebras) of the Lie bialgebra associated with a Poisson Lie group \cite{D2}. Every Poisson homogeneous space can be integrated to a symplectic groupoid \cite{BIL}.

The quantum reflection equation was first introduced by Cherednik in the study of factorizing scattering on the half line \cite{C}, and by Sklyanin in the investigation of quantum integrable models with boundary conditions \cite{Sk}. As Drinfeld's quantum groups \cite{D1} are closely related with the  Yang-Baxter equation,  quantum symmetric pairs give universal solutions of the quantum reflection equation \cite{BK, BW}.
 To construct integrable Hamiltonian systems on Poisson homogeneous spaces, Schrader  introduced the classical reflection equation in \cite{S}, which is the semiclassical limit of Sklyanin's quantum reflection equation after some appropriate choices. To be specific,  for a coboundary Poisson Lie group $(G, r)$ with $r\in \g\otimes \g$ (equivalently a coboundary Lie bialgebra $(\g,r)$), a
 Lie algebra automorphism $\tau: \g\to \g$ is called a solution of the classical reflection equation  if it satisfies
 \[(\tau\otimes \tau) (r)+r-(\tau\otimes \iii_\g+\iii_\g\otimes \tau)(r)=0.\]

 The fixed points set $\h$ of a solution $\tau$ of the classical reflection equation is a coideal subalgebra of the Lie bialgebra $(\g,r)$. Denote by $H$ the  Lie group integrating the Lie algebra  $\h$. Then the homogeneous space $G/H$ inherits a Poisson structure from $G$ so that it is a Poisson homogeneous space.  Taking particular $G$ and $\tau$, Schrader recovered the semi-classical limit of the quantum integrable system, the XXZ spin chain with reflecting boundary conditions.

 Recently the reflection equation was extensively studied. Caudrelier and Zhang introduced the set-theoretical reflection equation in \cite{CZ} motivated from the study of integrable systems.  Combinatorial solutions to the reflection equation were studied by Smoktunowicz,  Vendramin and
 Weston in \cite{SVW}. Doikou and  Smoktunowicz explored the relation between the set-theoretical  reflection equation and quantum group symmetries \cite{DS}. Lebed and  Vendramin used the reflection equation as a tool for studying set-theoretical solutions to the Yang-Baxter equation \cite{LV}. In \cite{AMS}, the authors studied reflections to set-theoretic solutions of the Yang-Baxter equation by exploring their connections with their derived solutions. 

 \subsection{Rota-Baxter operators and solutions of the classical Yang-Baxter equation}

The Rota-Baxter operator on associative algebras generalizes the integral operator on continuous functions, which
was introduced in 1960 by Baxter in his probability study of fluctuation theory \cite{Bax}.  Its connection with combinatorics was
then found in the 1960s and 1970s by Cartier and Rota's school.  In the area of mathematical physics,  Rota-Baxter algebras  appeared in Connes and Kreimer's work  in the late 1990s in their Hopf algebra approach to renormalization of quantum field theory \cite{CK}. Independently, the Rota-Baxter operator on Lie algebras naturally appeared in Belavin-Drinfeld and Semenov-Tian-Shansky's works \cite{BD, STS}.  A Rota-Baxter operator on a Lie algebra $\g$ is a linear map $B: \g\to \g$ satisfying
\[[B(x),B(y)]_\g=B([B(x),y]_\g+[x,B(y)]_\g+\lambda [x,y]_\g),\qquad \forall x,y\in \g,\]
where $\lambda$ is a scalar and called the weight of the Rota-Baxter operator. To better understand the classical Yang-Baxter equation and
 related integrable systems, the more general notion of an \oop (later also called
a relative Rota-Baxter operator)
on a Lie algebra $\g$ with respect to a representation $\rho:\g\to\gl(V)$ was introduced by Kupershmidt~\cite{Ku}.

There is a close relationship between (relative) Rota-Baxter Lie algebras and solutions of the classical Yang-Baxter equation, and we summarize it as follows:
\begin{itemize}
  \item[(i)] A quadratic Rota-Baxter Lie algebra of weight $0$ yields a skew-symmetric  solution of the classical Yang-Baxter equation on the underlying Lie algebra \cite{STS,G1};

   \item[(ii)]In \cite{Bai}, Bai showed that the skew-symmetrization of a relative Rota-Baxter operator $T:V\to \g$ of weight 0 gives a skew-symmetric  solution of the classical Yang-Baxter equation in the semidirect product Lie algebra $\g\ltimes_{\rho^*} V^*$. The nonzero weight case was studied in \cite{BNG};

    \item[(iii)] A quadratic Rota-Baxter Lie algebra $(\g, B, S)$ of nonzero weight gives rise to a factorizable Lie bialgebra $(\g,r_B)$ \cite{LS}. See Theorem \ref{converse} for details. See also \cite{G2,Gon25} for similar studies.
\end{itemize}

Rota-Baxter operators also have various applications in the areas of noncommutative
symmetric functions and noncommutative Bohnenblust-Spitzer identities \cite{Fard,Yu-Guo}, splitting of operads \cite{BBGN}, double Lie algebras \cite{GG,GK} and etc. In particular, pre-Lie algebras and post-Lie algebras (\cite{Val}) are the underlying structures of (relative) Rota-Baxter operators \cite{Bai,BGN2010}. See the book \cite{Gub} for more details.


 \subsection{Main results and outline of the paper}

This paper aims to deepen and advance the applications of Rota-Baxter operators in mathematical physics. We establish the connection between Rota-Baxter operators and solutions of the classical reflection equation,
building upon the known relation  with the classical Yang-Baxter equation. Moreover, we  construct explicit examples of Poisson Lie groups and Poisson homogeneous spaces from Rota-Baxter operators.

Explicitly, we first introduce the notion of a  reflection on a quadratic Rota-Baxter Lie algebra $(\g, B, S)$, which is a Lie algebra automorphism $\tau: \g\to \g$ satisfying some compatibility conditions. They yield skew-symmetric solutions of the classical reflection equation for the corresponding Lie bialgebra $(\g,r_B)$, coideal subalgebras of $(\g,r_B)$ and  Rota-Baxter subalgebras of the descendent Rota-Baxter Lie algebra $(\g_B,B)$.  Reflections on the double of a quadratic Rota-Baxter Lie algebra are studied in detail, which include $\mathrm{sl}(n,\mathbb{C})$ with reflections given by $X\mapsto -\overline{X}^T$ and $X\mapsto -X^T$ as examples. As applications, reflections on an arbitrary Rota-Baxter Lie algebra $(\g,B)$ of weight $\lambda$ are naturally defined so that they give rise to reflections on the quadratic Rota-Baxter Lie algebra $(\g\ltimes_{\ad^*} \g^*, B\oplus (-\lambda \iii_{\g^*}-B^*), \huaS)$. Then similar ideas have been applied to obtain reflections on relative Rota-Baxter Lie algebras with both zero weight and nonzero weight. In particular, involutive pre-Lie algebra and post-Lie algebra automorphisms give reflections on the corresponding relative Rota-Baxter Lie algebras.  We also derive the Lie bialgebra structures arising from relative Rota-Baxter Lie algebras of arbitrary weight. In the end, we construct explicit Poisson Lie groups and Poisson homogeneous spaces from (relative) Rota-Baxter Lie algebras and reflections on them.

This paper is organized as follows. In Section \ref{sec:base}, we recall  Lie bialgebras, the classical reflection equations, Poisson Lie groups and Poisson homogeneous spaces. In Section \ref{sec:qua}, we introduce the notion of reflections on quadratic Rota-Baxter Lie algebras and show that reflections lead to solutions of the classical reflection equation,  coideal subalgebras and Rota-Baxter Lie subalgebras. Then reflections on Rota-Baxter Lie algebras are introduced so that they induce reflections on the semi-direct product quadratic Rota-Baxter Lie algebra. Section \ref{sec:rel} deals with reflections on relative Rota-Baxter Lie algebras of weight zero and nonzero weight, respectively.  In Section \ref{sec:PL}, Poisson Lie groups related with Rota-Baxter Lie algebras and Poisson homogeneous spaces generated from reflections on Rota-Baxter Lie algebras are expressed explicitly.

\vspace{3mm}

\noindent
{\bf Acknowledgements. } This research is supported by NSFC (12471060, W2412041). We give warmest thanks to Maxim Goncharov and Xiaomeng Xu for helpful discussions.

\section{The classical reflection equation}\label{sec:base}

In this section, we briefly recall the classical reflection equation in a coboundary Lie bialgebra introduced by Schrader in \cite{S} and Poisson homogeneous spaces.

A {\bf Lie bialgebra} is a Lie algebra $\g$ with a linear map $\delta: \g\to \wedge^2 \g$ such that the dual map $\delta^*: \wedge^2 \g^*\to \g^*$ defines a Lie bracket on $\g^*$ and
\[\delta[x,y]_\g=[\delta (x),y]_\g+[x, \delta(y)]_\g,\qquad \forall x,y\in \g.\]
Here the Lie bracket on $\g$ is extended to $\oplus_k(\wedge^k \g)$ by the Leibniz rule. A Lie bialgebra is denoted by $(\g,\g^*)$ or $(\g,\delta)$.

Let $\g$ be a Lie algebra.  For $r=\sum_i a_i\otimes b_i\in \g\otimes \g$, we introduce the notations
$r^{21}=\sum_i b_i\otimes a_i$, 
\[r_{12}=\sum_{i} a_i\otimes b_i\otimes 1,\qquad r_{13}=\sum_{i} a_i\otimes 1\otimes b_i,\qquad r_{23}=\sum_{i} 1\otimes a_i\otimes b_i,\]
and 
\[[r_{12}, r_{13}]=\sum_{i,j} [a_i,a_j]_\g\otimes b_i\otimes b_j, \quad [r_{13}, r_{23}]=\sum_{i,j} a_i\otimes a_j\otimes [b_i,b_j]_\g,\quad [r_{12},r_{23}]=\sum_{i,j} a_i\otimes[b_i, a_j]_\g\otimes b_j.\]

For a Lie algebra $\g$, given an element $r\in \g\otimes \g$, define $\delta: \g\to \wedge^2 \g$ by $\delta(x)=[x,r]$. Then $\delta^*$ defines a Lie algebra structure on $\g^*$ if and only if the symmetric part $r+r^{21}\in \g\otimes \g$ and
\[[r,r]:=[r_{12},r_{13}]+[r_{13},r_{23}]+[r_{12},r_{23}]\in \g\otimes \g\otimes \g\]
are $\ad$-invariant, respectively. The pair $(\g,\g^*)$ or $(\g,\delta)$ composes a Lie bialgebra, called a {\bf coboundary Lie bialgebra} and denoted by $(\g, \g_r^*)$,  or $(\g,r)$.

In particular, if $[r,r]=0$, we say that $r$ is a solution of the {\bf classical Yang-Baxter equation} and the Lie bialgebra $(\g,r)$ is called {\bf quasitriangular}.  Introduce $r_+: \g^*\to \g$ by $r_+(\xi)=r(\xi,\cdot)$ for $\xi\in \g^*$ and $r_-:=-r_+^*$. Denote by $I$ the linear map
\[I=r_+-r_-: \g^*\to \g.\]
Note that $I^*=I$ and it is the operator form of  $r+r^{21}$. So the $\ad$-invariance  is equivalent to
\begin{equation*}\label{eq:invI}
  I\circ \ad_x^*=\ad_x\circ I,\quad \forall x\in \g.
\end{equation*}
Quasitriangular Lie bialgebras have two important particular cases:
\begin{itemize}
\item[\rm (1)] If $r$ is skew-symmetric ($I=0$), then  $(\g,r)$ called a {\bf triangular} Lie bialgebra;
\item[\rm (2)] If  $I$ is an isomorphism of vector spaces, then $(\g,r)$ is called a {\bf factorizable Lie bialgebra}; see \cite{RS}.
  \end{itemize}

Given a Lie bialgebra $(\g,\delta)$, a Lie subalgebra $\h\subset \g$ is called a {\bf coideal} if \[\delta(\h)\subset \g\otimes \h\oplus \h\otimes \g,\]
which is equivalent to that $\h^\perp$, the annihilator space of $\h$,  is a Lie subalgebra of $\g^*$. Such an $\h$ is also called a {\bf coisotropic subalgebra}; see \cite{Lu}.

One  main theorem in \cite{S} is as follows:
\begin{thm}\label{Theorem 1}\rm (\cite{S})
Let $(\g,r)$ be a coboundary Lie bialgebra and $\tau: \g\to \g$ a Lie algebra automorphism. The fixed point set
\[\h=\g^\tau=\{x\in \g|\tau(x)=x\}\]
is a coideal subalgebra of $(\g,r)$ if and only if
\[C_\tau(r):=(\tau\otimes \tau)(r)+r-(\tau\otimes \textbf{1}_\g+\textbf{1}_\g\otimes \tau)(r)\]
is $\h$-invariant in $\g\otimes \g$.
\end{thm}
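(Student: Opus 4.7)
The plan is to reduce the coideal condition to a single identity relating the cobracket $\delta$ to the tensor $C_\tau(r)$, and then use a clean description of the kernel of $(\tau-\iii_\g)\otimes(\tau-\iii_\g)$.

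First I would record the key algebraic identity
\[
C_\tau(r) \;=\; \bigl((\tau-\iii_\g)\otimes(\tau-\iii_\g)\bigr)(r),
\]
which is immediate from expanding the tensor product. Since $\h=\g^\tau$ is the fixed point set of a Lie algebra automorphism, it is automatically a Lie subalgebra, so the whole issue is the coideal property $\delta(\h)\subset \h\otimes\g+\g\otimes\h$.

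Next I would carry out the pivotal computation. For any $x\in\h$, the automorphism relation $\tau\circ\ad_x=\ad_{\tau(x)}\circ\tau=\ad_x\circ\tau$ shows that $\tau$ commutes with $\ad_x$, and hence $\tau\otimes\iii_\g$, $\iii_\g\otimes\tau$ and $\tau\otimes\tau$ all commute with the derivation $\ad_x\otimes\iii_\g+\iii_\g\otimes\ad_x$ on $\g\otimes\g$. Applying $(\tau-\iii_\g)\otimes(\tau-\iii_\g)$ to $\delta(x)=[x,r]$ and commuting it past $\ad_x$ yields the master identity
\[
\bigl((\tau-\iii_\g)\otimes(\tau-\iii_\g)\bigr)\bigl(\delta(x)\bigr)\;=\;\bigl[x,\,C_\tau(r)\bigr],\qquad x\in\h.
\]
This equation contains both directions of the theorem once the kernel of $(\tau-\iii_\g)\otimes(\tau-\iii_\g)$ is identified.

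Then I would prove the lemma
\[
\ker\bigl((\tau-\iii_\g)\otimes(\tau-\iii_\g)\bigr)\;=\;\h\otimes\g+\g\otimes\h.
\]
Choose a vector space complement $\g=\h\oplus V$ of $\h$ in $\g$. Since $\tau-\iii_\g$ vanishes on $\h$ and is injective on $V$ (any fixed vector lies in $\h$, hence in $\h\cap V=0$), the tensor square $(\tau-\iii_\g)|_V\otimes(\tau-\iii_\g)|_V$ remains injective on $V\otimes V$, while the other three summands of $(\h\oplus V)\otimes(\h\oplus V)$ are annihilated. This gives the claimed equality.

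Finally I would combine the master identity with the kernel lemma. The forward direction is the easier one: if $\delta(x)\in\h\otimes\g+\g\otimes\h$ for $x\in\h$, apply $(\tau-\iii_\g)\otimes(\tau-\iii_\g)$ to obtain $0=[x,C_\tau(r)]$, so $C_\tau(r)$ is $\h$-invariant. Conversely, $\h$-invariance of $C_\tau(r)$ forces $((\tau-\iii_\g)\otimes(\tau-\iii_\g))(\delta(x))=0$ for all $x\in\h$, and the kernel lemma then yields $\delta(x)\in\h\otimes\g+\g\otimes\h$, i.e.\ $\h$ is a coideal. The only step with any real content is the injectivity of the tensor square on $V\otimes V$, and even there the argument is elementary; the rest is a careful bookkeeping of how $\tau$ intertwines with the $\ad$-action, which is the step I would write out in most detail to avoid sign or factor errors.
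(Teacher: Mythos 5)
Your proof is correct. The paper does not reprove this theorem (it simply cites \cite{S}), and your argument via the identity $C_\tau(r)=\bigl((\tau-\iii_\g)\otimes(\tau-\iii_\g)\bigr)(r)$, the master identity $\bigl((\tau-\iii_\g)\otimes(\tau-\iii_\g)\bigr)\delta(x)=[x,C_\tau(r)]$ for $x\in\h$, and the kernel computation $\ker\bigl((\tau-\iii_\g)\otimes(\tau-\iii_\g)\bigr)=\h\otimes\g+\g\otimes\h$ is exactly the standard route, whose operator form is precisely what the paper records in Remark~\ref{relaxCRE}.
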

\begin{defi}
Let $(\g,r)$ be a coboundary Lie bialgebra and $\tau: \g\to \g$ a Lie algebra automorphism.
The equation \begin{eqnarray}\label{re}
(\tau\otimes \tau)(r)+r-(\tau\otimes \textbf{1}_\g+\textbf{1}_\g\otimes \tau)(r)=0
\end{eqnarray}
is called the {\bf classical reflection equation} and $\tau$ is called a {\bf solution} of the classical reflection equation. \end{defi}

\begin{rmk}\label{relaxCRE}
In terms of $r_+: \g^*\to \g$ defined by $r_+(\xi)=r(\xi,\cdot)$,  the equation \eqref{re} is equivalent to
\begin{eqnarray}\label{reeq}
(\tau-\textbf{1}_\g)\circ r_+\circ (\tau^*-\textbf{1}_{\g^*})=0.
\end{eqnarray}
And $\h=\g^\tau$ is a coideal subalgebra of $(\g, r)$ if and only of
\[\ad_x\circ (\tau-\textbf{1}_\g)\circ r_+\circ (\tau^*-\textbf{1}_{\g^*})-(\tau-\textbf{1}_\g)\circ r_+\circ (\tau^*-\textbf{1}_{\g^*})\circ \ad_x^*=0, \quad \forall x\in \h.\]
\end{rmk}

Poisson Lie groups are the semi-classical limits of quantum groups \cite{D1}.
A {\bf Poisson Lie group} is a Lie group $G$ with a Poisson bivector field $\pi\in \mathfrak{X}^2(G)$  ($[\pi,\pi]=0$) such that the multiplication $G\times G\to G$ is a Poisson map, where $G\times G$ is equipped with the product Poisson structure. It is by Drinfeld that there is a one-one correspondence between connected and simply-connected Poisson Lie groups and Lie bialgebras.

Let $(G,\pi)$ be a Poisson Lie group and $(P,\pi_P)$ a Poisson manifold. A {\bf Poisson action} of $G$ on $P$ is an action $\rho: G\times P\to P$ such that $\rho$ is a Poisson map, where $G\times P$ is equipped with the product Poisson structure. If the action is transitive, $(P,\pi_P)$ is called a {\bf Poisson homogeneous space} of $(G,\pi)$.

For a closed subgroup $H$ of a Poisson Lie group $(G,\pi)$, denote its Lie algebra by $\h$. Then
\begin{itemize}
\item[\rm (i)] $H$ is a Poisson subgroup of $G$ if and only if the annihilator space $\h^\perp \subset \g^*$ is an ideal;
\item[\rm (ii)] if $\h^\perp\subset \g^*$ is a Lie subalgebra, i.e.  $\h$ is a coideal subalgebra of $(\g, \g^*)$, then there is a unique Poisson structure $\pi_{G/H}$ on $G/H$ such that the projection $G\to G/H$ is a Poisson map. We thus obtain a Poisson homogeneous space $(G/H, \pi_{G/H})$ of $(G,\pi)$; see \cite{D2}.
\end{itemize}

In this paper, we only focus on this simple class of Poisson homogeneous spaces. We refer to \cite{D2} for the classification of Poisson homogeneous spaces.

\section{Reflections on quadratic Rota-Baxter Lie algebras and solutions of the classical reflection equation}\label{sec:qua}

In this section, we study reflections on quadratic Rota-Baxter Lie algebras, by which we construct solutions of the classical reflection equation in the corresponding triangular (weight 0 case) and factorizable Lie bialgebra (nonzero weight case).

\subsection{Reflections on quadratic Rota-Baxter Lie algebras}\label{qua}

A linear map $B:\g\lon\g$ is called a {\bf Rota-Baxter operator of weight $\lambda$} on a Lie algebra $\g$ if
\begin{equation*}
[B(x), B(y)]_\g=B([B(x),y]_\g+[x,B(y)]_\g+\lambda[x,y]_\g), \quad \forall x,y\in\g.
\end{equation*}
Let $(\g, [\cdot,\cdot]_\g)$ be a Lie algebra  and $B:\g\lon\g$   a   Rota-Baxter operator of weight $\lambda$ on   $\g$. Then there is a new Lie bracket $[\cdot,\cdot]_B$ on $\g$ defined by
$$
[x,y]_B=[B(x),y]_\g+[x,B(y)]_\g+\lambda[x,y]_\g.
$$
The Lie algebra $(\g,[\cdot,\cdot]_B)$ is called the {\bf descendent Lie algebra}, and denoted by $\g_B$. It is obvious that $B$ is a Lie algebra homomorphism from $\g_B$ to $\g$:
$$
B[x,y]_B=[B(x),B(y)]_\g.
$$
Recall that a nondegenerate symmetric bilinear form $S\in \otimes ^2\g^*$ on a Lie algebra $\g$ is said to be invariant if
\begin{eqnarray}
\label{RBmanin1}S([x,y]_\g,z)+S(y,[x,z]_\g)&=&0,\quad \forall x,y,z \in \g.
\end{eqnarray}
A quadratic Lie algebra $(\g,S)$ is a Lie algebra $\g$ equipped with a nondegenerate symmetric invariant bilinear form $S\in \otimes ^2\g^*$.
\begin{defi}\label{defi:qua}\cite[Definition 2.4]{LS} Let $(\g,[\cdot,\cdot]_\g,B)$ be a  Rota-Baxter Lie algebra of weight $\lambda$, and $S\in \otimes ^2\g^*$ a nondegenerate symmetric bilinear form.   The triple  $(\g,B,S)$ is called a {\bf quadratic Rota-Baxter Lie algebra of weight $\lambda$} if $(\g,S)$ is a quadratic Lie algebra and the following compatibility condition holds:
\begin{eqnarray}
\label{RBmanin}
S( x, {B} y)+S({B}x, y)+\lambda S(x,y)&=&0,\quad \forall x,y\in \g.
\end{eqnarray}
\end{defi}

There is a close connection between quadratic Rota-Baxter Lie algebras and triangular/factorizable Lie bialgebras; see \cite{G2} and \cite[Theorem 2.10]{LS} for more details.

\begin{thm}\label{converse}
Let $(\g, B,S)$ be a quadratic Rota-Baxter Lie algebra of weight $\lambda$, and $I_S:\g^*\to\g$ the induced linear isomorphism given by $\langle I_S^{-1} x,y\rangle:=S(x,y)$.

\begin{itemize}
  \item[{\rm(i)}] If $\lambda=0,$ then $r_B\in \wedge^2 \g$ defined by
\[r_{B+}=B\circ I_S,\qquad r_{B+}(\xi)=r_B(\xi,\cdot),\quad \forall\xi\in \g^*\]
is a skew-symmetric solution of  the classical Yang-Baxter equation and it gives rise to a triangular Lie bialgebra $(\g, r_B)$.

 \item[{\rm(ii)}] If $\lambda\neq 0,$ then  $r_B\in \g\otimes \g$ determined by
\begin{eqnarray}\label{rfromB}
r_{B+}:=\frac{1}{\lambda}(B+\lambda \iii_\g )\circ I_S:\g^*\to \g, \quad r_{B+}(\xi)=r_B(\xi,\cdot),\quad \forall\xi\in \g^*
\end{eqnarray}
satisfies the classical Yang-Baxter equation, and gives rise to a  quasitriangular Lie bialgebra $(\g,r_B)$, which is factorizable.
\end{itemize}
\end{thm}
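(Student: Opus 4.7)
My plan is to verify, in both cases, that $r_B$ defines a coboundary Lie bialgebra $(\g, r_B)$ which is in fact quasitriangular, using the characterisation recalled in Section~\ref{sec:base}. All the work will be done in the operator formalism: pass to $r_{B+}\colon \g^*\to \g$, and use $r_{B-}=-r_{B+}^*$ together with $I := r_{B+}+r_{B+}^*$ for the operator corresponding to the symmetric part $r_B+r_B^{21}$.

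Two algebraic inputs do the work. Translating the $\ad$-invariance of $S$ and the compatibility \eqref{RBmanin} through the pairing $\langle I_S^{-1}\cdot,\cdot\rangle = S(\cdot,\cdot)$ yields the operator identities
\[I_S\circ \ad_x^* = \ad_x\circ I_S,\qquad I_S\circ B^* = -(B+\lambda\iii_\g)\circ I_S\qquad (\forall x\in\g).\]
The second identity immediately gives $r_{B+}^* = -r_{B+}$ in case (i), so $r_B\in\wedge^2\g$ (the triangular picture); and $I = r_{B+}+r_{B+}^* = I_S$ in case (ii), so the symmetric part corresponds to the invariant tensor $S$, which is $\ad$-invariant by the first identity, while $I_S$ being an isomorphism makes the bialgebra factorizable as soon as CYBE is checked.

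My strategy for CYBE is to reduce the cubic tensor identity $[r_B,r_B]=0$ to the quadratic weight-$\lambda$ Rota-Baxter identity on $B$. In case (i), skew-symmetry of $r_B$ makes CYBE equivalent to $r_{B+}$ being a relative Rota-Baxter operator on $\g$ with respect to $\ad^*$, i.e.\ $[r_{B+}\xi, r_{B+}\eta]_\g = r_{B+}\bigl(\ad^*_{r_{B+}\xi}\eta-\ad^*_{r_{B+}\eta}\xi\bigr)$; plugging in $r_{B+}=B\circ I_S$ and using $I_S\ad_x^*=\ad_x I_S$ to pull $I_S$ past the coadjoint actions, the right-hand side becomes $B\bigl([BI_S\xi, I_S\eta]_\g+[I_S\xi, BI_S\eta]_\g\bigr)$, which is the weight-$0$ Rota-Baxter identity evaluated at $I_S\xi, I_S\eta$. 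In case (ii) I write $r_{B+}=\tfrac{1}{\lambda}B\circ I_S+I_S$, compute $[r_{B+}\xi, r_{B+}\eta]_\g$ and $r_{B+}(\ad^*_{r_{B+}\xi}\eta-\ad^*_{r_{B+}\eta}\xi)$ separately, and check that their difference is precisely the contribution forced by the $\ad$-invariant symmetric part $I_S$, so that the full CYBE tensor collapses to the weight-$\lambda$ Rota-Baxter identity for $B$.

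The step I expect to be the main obstacle is the reassembly in case (ii). Since $r_B$ is no longer skew-symmetric, $[r_B,r_B]$ has contributions from all three summands $[(r_B)_{12},(r_B)_{13}]$, $[(r_B)_{13},(r_B)_{23}]$ and $[(r_B)_{12},(r_B)_{23}]$ that mix the skew and symmetric parts, and the bookkeeping of the $\lambda^{-2}$, $\lambda^{-1}$ and $\lambda^{0}$ terms must line up exactly. A shortcut I keep in reserve if the direct expansion becomes unwieldy: decompose $r_B$ as a skew part $r_a$ plus the $\ad$-invariant symmetric piece $\tfrac12 I_S$; because $I_S$ is $\ad$-invariant, the difference $[r_B,r_B]-[r_a,r_a]$ is an $\ad$-invariant cubic expression involving only $I_S$, and CYBE for $r_B$ becomes a modified classical Yang-Baxter equation for $r_a$ that is directly equivalent to the weight-$\lambda$ Rota-Baxter identity for $B$.
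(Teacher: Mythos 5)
There is no in-paper proof to compare against: the paper states Theorem \ref{converse} as a recalled result, with the proof delegated to \cite[Theorem 2.10]{LS} and \cite{G2}. Judged on its own, your plan is essentially correct and follows the same operator-theoretic route used in those references. Your two identities $I_S\circ\ad_x^*=\ad_x\circ I_S$ and $I_S\circ B^*=-(B+\lambda\iii_\g)\circ I_S$ are exactly the translations of the invariance of $S$ and of \eqref{RBmanin}, and they do give $r_{B+}^*=-r_{B+}$ when $\lambda=0$ and $r_{B+}+r_{B+}^*=I_S$ when $\lambda\neq0$, which settles skew-symmetry, the $\ad$-invariance of the symmetric part, and factorizability. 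In case (i), the reduction of the CYBE to $r_{B+}=B\circ I_S$ being an \oop with respect to $\ad^*$ is the standard Kupershmidt--Bai equivalence, and your computation does collapse it, via $I_S\circ\ad^*_x=\ad_x\circ I_S$ and surjectivity of $I_S$, to the weight-$0$ Rota-Baxter identity. In case (ii) two points need to be made precise. First, for non-skew $r_B$ with $\ad$-invariant symmetric part the correct criterion is $[r_{B+}\xi,r_{B+}\eta]_\g=r_{B+}\big(\ad^*_{r_{B+}\xi}\eta-\ad^*_{r_{B-}\eta}\xi\big)$ with $r_{B-}=-r_{B+}^*=\frac{1}{\lambda}B\circ I_S$ in the second slot (this asymmetry is visible in the bracket used in Lemma \ref{iso}); with $x=I_S\xi$, $y=I_S\eta$ it becomes $[(B+\lambda\iii_\g)x,(B+\lambda\iii_\g)y]_\g=(B+\lambda\iii_\g)\big([(B+\lambda\iii_\g)x,y]_\g+[x,By]_\g\big)$, which the weight-$\lambda$ identity yields at once, so your ``difference is the symmetric contribution'' gloss is right in substance but should be replaced by this stated criterion (proved or cited). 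Second, your reserve route needs the standard cross-term lemma that $[a+s,a+s]=[a,a]+[s,s]$ when $s$ is symmetric and $\ad$-invariant, after which the mCYBE for $a_+=\frac{1}{\lambda}\big(B+\frac{\lambda}{2}\iii_\g\big)\circ I_S$ again reduces to the Rota-Baxter identity; this is precisely the Goncharov/Semenov-Tian-Shansky picture. With either of these ingredients made explicit, your argument is complete.
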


\begin{defi}\label{refqua}
A {\bf reflection} on a quadratic Rota-Baxter Lie algebra $(\g, B,S)$ is a Lie algebra automorphism $\tau: \g\to \g$ satisfying
\begin{eqnarray}
\label{ref1}
\tau\circ B\circ \tau-B+\tau\circ B-B\circ \tau+\lambda \tau^2-\lambda \textbf{1}_\g&=&0,\\
\label{tauinv}S(\tau x,y)+S(x,\tau y)&=&0.
\end{eqnarray}
\end{defi}
\begin{rmk}
Equations \eqref{ref1}  and \eqref{tauinv} can be reformulated as
\begin{eqnarray}\label{ref1eq}
(\tau-\textbf{1}_\g)\circ (B+\lambda\textbf{1}_\g)\circ (\tau+\textbf{1}_\g)&=&0,\\
\label{tauinveq}I_S \circ \tau^*+\tau\circ I_S&=&0,
\end{eqnarray}
where $\textbf{1}_\g:\g\to \g$ is the identity map and $I_S:\g^*\to \g$ is the isomorphism induced by $S$.
\end{rmk}

\begin{rmk}\label{relaxad}
  In the current definition of a reflection on a quadratic Rota-Baxter Lie algebra, $\tau$ is required to be skew-symmetric with respect to the bilinear form $S$. In fact, there is an alternative definition in which $\tau$ is symmetric. More precisely, one can define a reflection on a quadratic Rota-Baxter Lie algebra to be a Lie algebra automorphism $\tau:\g\to\g$ such that
  \begin{eqnarray*}
(\tau-\textbf{1}_\g)\circ (B+\lambda\textbf{1}_\g)\circ (\tau-\textbf{1}_\g)&=&0,\\
I_S \circ \tau^*-\tau\circ I_S&=&0,
\end{eqnarray*}
and all the results in the sequel also hold for this definition. On the other hand,   \eqref{ref1eq} can be relaxed to the $\ad$-invariance condition: $[\ad_x, (\tau-\textbf{1}_\g)\circ (B+\lambda\textbf{1}_\g)\circ (\tau+\textbf{1}_\g)]=0$ for $x\in \ker (\tau-\iii_\g)$ for our later application. We will give Example \ref{invariant} to illustrate this fact.
\end{rmk}

\begin{ex}\label{rmkinvolution}
Given an involution $\tau: \g\to \g$, i.e. $\tau^2=\iii_\g$, we have the decomposition $\g=\h\oplus \frkp$ of $\h$-modules, where $\h$ and $\frkp$ are the $\pm 1$-eigenspaces of $\tau$. If we decompose the Rota-Baxter operator $B$ into the form
\[B=\begin{pmatrix}
B_{\h\h} & B_{\frkp\h}\\
B_{\h\frkp} &B_{\frkp \frkp}
\end{pmatrix}: \h\oplus \frkp \to \h\oplus \frkp,\]
where $B_{\h\h}: \h\to \h,B_{\h\frkp}:\h\to \frkp,B_{\frkp\h}: \frkp\to \h$ and $B_{\frkp\frkp}: \frkp\to \frkp $ are linear maps.
Then \eqref{ref1} holds if and only if $B_{\h \frkp}=0$ and \eqref{tauinv} holds if and only if $\h$ and $\frkp$ are isotropic with respect to $S$.

So an involutive Lie algebra automorphism $\tau: \g\to \g$ is a reflection on a quadratic Rota-Baxter Lie algebra $(\g,B,S)$ if and only if $B_{\h \frkp}=0$ and $\h,\frkp$ are isotropic with respect to $S$.

 If moreover, the involutive Lie algebra automorphism $\tau$ satisfies $\tau\circ B\circ \tau=B$, namely, $B_{\h\frkp}=B_{\frkp \h}=0$, then $(\g, r_B,-\tau)$ is a symmetric Lie bialgebra studied in \cite{Xu}.
\end{ex}

\begin{rmk}
For a quadratic Rota-Baxter Lie algebra $(\g,B,S)$, there associates a Lie bialgebra $(\g,r_B)$ and furthermore a Manin triple $(\g\bowtie \g_{r_B}^*, \g,\g_{r_B}^*)$. Then an involutive reflection $\tau$ on  $(\g,B,S)$ gives rise to an anti-invariant Manin triple twist $\tau\oplus (-\tau^*)$ for  $(\g\bowtie \g_{r_B}^*, \g,\g_{r_B}^*)$, where the notion of Manin triple twists was introduced in \cite{BC} to study coideal subalgebras of a Hopf algebra.
\end{rmk}

 The first important role that reflections on quadratic Rota-Baxter Lie algebras play is that they give rise to solutions of the classical reflection equation.
\begin{thm}\label{quathm}
Let $\tau$ be a reflection on a quadratic Rota-Baxter Lie algebra $(\g,B,S)$ of weight $\lambda$.
\begin{itemize}
 \item[{\rm(i)}]
In the case $\lambda=0$,  $\tau$ is  a skew-symmetric solution (with respect to the bilinear form $S$) of the classical reflection equation for the triangular Lie bialgebra $(\g,r_B)$ given in  (i) in Theorem \ref{converse}.
 \item[{\rm(ii)}]
 In the case $\lambda\neq0$, $\tau$ is  a skew-symmetric solution (with respect to the bilinear form $S$) of the classical reflection equation for the factorizable  Lie bialgebra $(\g,r_B)$ given in  (ii) in Theorem \ref{converse}.
\end{itemize}
\end{thm}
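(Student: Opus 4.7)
The plan is to use the operator reformulation of the classical reflection equation from Remark \ref{relaxCRE}, namely
\[(\tau-\iii_\g)\circ r_{B+}\circ(\tau^*-\iii_{\g^*})=0,\]
and to reduce it directly to the reformulated reflection condition \eqref{ref1eq} by means of the skew-symmetry relation \eqref{tauinveq}. Both items (i) and (ii) will be handled uniformly, since the formulas for $r_{B+}$ in Theorem \ref{converse} differ only by the shift $B\rightsquigarrow B+\lambda\iii_\g$ and an overall scalar $1/\lambda$.

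The key maneuver is to apply \eqref{tauinveq} to transport $\tau^*$ from the right of $I_S$ to the left of $I_S$ (where it becomes $-\tau$): one computes
\[I_S\circ(\tau^*-\iii_{\g^*})=I_S\circ\tau^*-I_S=-\tau\circ I_S-I_S=-(\tau+\iii_\g)\circ I_S.\]
This converts the factor $\tau^*-\iii_{\g^*}$ on the $\g^*$ side into exactly the factor $\tau+\iii_\g$ on the $\g$ side that already appears in \eqref{ref1eq}.

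In case (i), substituting $r_{B+}=B\circ I_S$ and using the above identity yields
\[(\tau-\iii_\g)\circ r_{B+}\circ(\tau^*-\iii_{\g^*})=-(\tau-\iii_\g)\circ B\circ(\tau+\iii_\g)\circ I_S,\]
which vanishes by \eqref{ref1eq} specialized to $\lambda=0$. In case (ii), substituting $r_{B+}=\tfrac{1}{\lambda}(B+\lambda\iii_\g)\circ I_S$ gives
\[(\tau-\iii_\g)\circ r_{B+}\circ(\tau^*-\iii_{\g^*})=-\tfrac{1}{\lambda}(\tau-\iii_\g)\circ(B+\lambda\iii_\g)\circ(\tau+\iii_\g)\circ I_S=0,\]
again by \eqref{ref1eq}. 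The $S$-skew-symmetry of $\tau$ is immediate from the hypothesis \eqref{tauinv}.

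There is no serious obstacle once the two operator identities \eqref{ref1eq} and \eqref{tauinveq} are in hand; in fact the theorem essentially says that Definition \ref{refqua} was designed so that its two defining conditions are precisely the operator identities needed to collapse the classical reflection equation, after transposing across $I_S$. The only step requiring actual care is bookkeeping the sign produced by that transposition and observing that the scalar $1/\lambda$ in the factorizable case is harmless.
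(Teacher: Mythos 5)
Your proposal is correct and follows essentially the same route as the paper: both invoke the operator form $(\tau-\iii_\g)\circ r_{B+}\circ(\tau^*-\iii_{\g^*})=0$ of the reflection equation, use \eqref{tauinveq} to convert $I_S\circ(\tau^*-\iii_{\g^*})$ into $-(\tau+\iii_\g)\circ I_S$, and then apply \eqref{ref1eq} to conclude in both the $\lambda=0$ and $\lambda\neq 0$ cases. The sign bookkeeping and the harmless factor $1/\lambda$ are handled exactly as in the paper's argument.
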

\begin{proof}

(i) By \eqref{ref1eq} with $\lambda=0$, we have $(\tau-\textbf{1}_\g)\circ B\circ (\tau+\textbf{1}_\g)=0$.  Then applying \eqref{tauinveq} and (i) in Theorem \ref{converse}, we find
\begin{eqnarray*}
(\tau-\textbf{1}_\g)\circ r_{B+}\circ  (\tau^*-\textbf{1}_\g)&=&(\tau-\textbf{1}_\g)\circ B\circ I_S \circ (\tau^*-\textbf{1}_\g)\\ &=&-(\tau-\textbf{1}_\g)\circ B\circ (\tau+\textbf{1}_\g)\circ I_S\\ &=&0.
\end{eqnarray*}
Namely, $\tau$ is a skew-symmetric solution of the classical reflection equation for the triangular Lie bialgebra $(\g,r_B)$.

(ii) Similarly, by \eqref{ref1eq}, \eqref{tauinveq} and (ii) in Theorem \ref{converse}, we have
\begin{eqnarray*}
(\tau-\textbf{1}_\g)\circ r_{B+} \circ (\tau^*-\textbf{1}_\g)&=&\frac{1}{\lambda}(\tau-\textbf{1}_\g)\circ (B+\lambda \textbf{1}_\g)\circ I_S \circ (\tau^*-\textbf{1}_\g)\\ &=&-\frac{1}{\lambda}(\tau-\textbf{1}_\g)\circ (B+\lambda \textbf{1}_\g)\circ (\tau+\textbf{1}_\g)\circ I_S\\
&=&0,
\end{eqnarray*}
which implies that $\tau$ is a skew-symmetric solution of the classical reflection equation for the factorizable  Lie bialgebra $(\g,r_B)$.
\end{proof}

\begin{cor}\label{Sss}
Let $\tau$ be a reflection on a quadratic Rota-Baxter Lie algebra $(\g,B,S)$ of weight $\lambda$. Denote by $\h$ the fixed point set of $\tau$. Then $\h$ is a coideal subalgebra of $(\g,r_B)$.
\end{cor}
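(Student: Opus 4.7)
The plan is to derive this corollary immediately from the preceding Theorem \ref{quathm} combined with Schrader's criterion (Theorem \ref{Theorem 1}), so no new computation is really required.

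First, I would verify that $\h=\g^\tau$ is a Lie subalgebra of $\g$. This is automatic: since $\tau$ is a Lie algebra automorphism, whenever $\tau(x)=x$ and $\tau(y)=y$ we have $\tau([x,y]_\g)=[\tau(x),\tau(y)]_\g=[x,y]_\g$, so $[x,y]_\g\in\h$. Thus the subalgebra condition required by the definition of a coideal subalgebra holds for free.

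Next, by Theorem \ref{quathm}, whether $\lambda=0$ or $\lambda\neq 0$, the reflection $\tau$ is a solution of the classical reflection equation for the coboundary Lie bialgebra $(\g,r_B)$. In the operator form recorded in Remark \ref{relaxCRE}, this means
\[
(\tau-\textbf{1}_\g)\circ r_{B+}\circ(\tau^*-\textbf{1}_{\g^*})=0,
\]
or equivalently $C_\tau(r_B)=0\in\g\otimes\g$.

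Finally, since $C_\tau(r_B)=0$, it is trivially $\h$-invariant in $\g\otimes\g$. Applying Theorem \ref{Theorem 1} to the coboundary Lie bialgebra $(\g,r_B)$ with the Lie algebra automorphism $\tau$ therefore yields that $\h=\g^\tau$ is a coideal subalgebra of $(\g,r_B)$, completing the proof. There is no real obstacle here: all the content lives in Theorem \ref{quathm}, and this corollary is just the observation that a genuine solution of the classical reflection equation is in particular an $\h$-invariant one, so Schrader's theorem applies immediately.
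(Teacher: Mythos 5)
Your proposal is correct and follows essentially the same route as the paper, which likewise deduces the corollary directly from Theorem \ref{quathm} together with Schrader's criterion (Theorem \ref{Theorem 1}). The extra details you supply (that $\h$ is automatically a Lie subalgebra since $\tau$ is an automorphism, and that $C_\tau(r_B)=0$ is trivially $\h$-invariant) just make explicit what the paper leaves implicit.
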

\begin{proof}
Following from Theorem \ref{quathm} and Theorem \ref{Theorem 1}, we see that $\h^\perp\subset \g^*_{r_B}$ is a Lie subalgebra so that $\h$ is a coideal subalgebra.
\end{proof}

Another important property for reflections on quadratic Rota-Baxter Lie algebras is that they give rise to Rota-Baxter subalgebras.
Recall from \cite[Theorem 3.3]{LS} that given a Rota-Baxter Lie algebra $(\g,B)$ of weight $\lambda$, we have a Lie algebra $\g_B\bowtie \g$ such that $\g_B$ and $\g$ are Lie subalgebras and the bracket between $\g_B$ and $\g$ is \[[(\xi,0),(0,a)]_{\bowtie}=([\xi,a]_\g,[B\xi,a]_\g-B[\xi,a]_\g),\qquad \forall \xi\in \g_B,a\in \g.\]

\begin{pro}\label{h^0}
Let $\tau$ be a reflection on a quadratic Rota-Baxter Lie algebra $(\g,B,S)$ of arbitrary weight.  Denote by  $\h$ the fixed point set of $\tau$ and $\h^0:=\{x\in \g|S(x,u)=0, \forall u\in \h\}$ the orthogonal complement of $\h$ in $\g$.
Then
\begin{itemize}
\item[\rm (i)] $(\h^0, B)$ is a Rota-Baxter Lie subalgebra of $(\g_B,B)$;
\item[\rm (ii)] $\h^0\bowtie \h$ is a Lie subalgebra of $\g_B\bowtie \g$.
\end{itemize}
\end{pro}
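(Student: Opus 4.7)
The plan is to derive both parts from the stability of $\h$ and $\h^0$ under $B$, together with the Manin triple correspondence induced by the Lie bialgebra $(\g,r_B)$. First, I would establish $B(\h)\subset\h$: for any $u\in\h$ we have $(\tau+\iii_\g)u=2u$, and applying the reformulated reflection identity $(\tau-\iii_\g)\circ(B+\lambda\iii_\g)\circ(\tau+\iii_\g)=0$ to $u$ yields $\tau(Bu)=Bu$, i.e.\ $Bu\in\h$. Next, I would use the quadratic compatibility \eqref{RBmanin} to deduce $B(\h^0)\subset\h^0$: for $x\in\h^0$ and $u\in\h$, one has $S(Bx,u)=-S(x,Bu)-\lambda S(x,u)=0$, since $Bu\in\h$ and $x\in\h^0$.

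For (i), it remains to prove $[\h^0,\h^0]_B\subset\h^0$. By Corollary~\ref{Sss}, $\h$ is a coideal subalgebra of $(\g,r_B)$, so $\h^\perp\subset\g^*$ is a Lie subalgebra with respect to the dual bracket $[\cdot,\cdot]_*$ induced by $r_B$. The isomorphism $I_S:\g^*\to\g$ identifies $(\g^*,[\cdot,\cdot]_*)$ with the descendent Lie algebra $(\g_B,[\cdot,\cdot]_B)$ and sends $\h^\perp$ to $\h^0$; this identification is part of the Manin triple structure underlying $\g_B\bowtie\g$ for the Lie bialgebra $(\g,r_B)$, established in \cite[Theorem~3.3]{LS} and already invoked in Theorem~\ref{converse}. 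Therefore $\h^0$ is a Lie subalgebra of $\g_B$, and together with $B(\h^0)\subset\h^0$ this gives (i).

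For (ii), I would apply the standard Manin triple principle that for any coideal subalgebra $\h$ of a Lie bialgebra $(\g,\g^*)$, the sum $\h\oplus\h^\perp$ is a Lie subalgebra of the Drinfeld double. Under the identification of the double with $\g_B\bowtie\g$ via $I_S$, $\h^\perp$ corresponds to $\h^0$, so $\h^0\bowtie\h$ is a Lie subalgebra. One can also verify this directly: closure within each factor follows from (i) and the subalgebra property of $\h$, while the mixed bracket $[(\xi,0),(0,a)]_\bowtie=([\xi,a]_\g,[B\xi,a]_\g-B[\xi,a]_\g)$ has first component in $\h^0$ (since $[\h,\h^0]_\g\subset\h^0$), and its second component being in $\h$ is equivalent to (i) via the identity $S([x,y]_B,u)=S(y,B[x,u]_\g-[Bx,u]_\g)$ for $x,y\in\h^0$, $u\in\h$, which is a consequence of the invariance of $S$ together with $B^T=-B-\lambda\iii_\g$. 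The main obstacle is the Manin triple identification $I_S:(\g^*,[\cdot,\cdot]_*)\cong(\g_B,[\cdot,\cdot]_B)$, but since this is structural input already developed in Theorem~\ref{converse} and \cite{LS}, the present proof only needs to cite it.
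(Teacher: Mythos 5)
Your proof is correct, but it takes a more structural route than the paper at the two key closure steps. The preliminary steps coincide: you get $B\h\subset\h$ from \eqref{ref1eq} applied to fixed points and then $B(\h^0)\subset\h^0$ from \eqref{RBmanin}, exactly as in the paper. For the closure of $\h^0$ under $[\cdot,\cdot]_B$, the paper explicitly mentions your route (Corollary \ref{Sss} plus the isomorphism between $\g^*_{r_B}$ and $\g_B$) and then deliberately gives a direct verification instead, using $\h^0=\mathrm{Im}(\tau+\iii_\g)$, $(B+\lambda\iii_\g)(\h^0)\subset\h$ and the invariance of $S$; your version buys brevity at the price of importing the identification $I_S(\h^\perp)=\h^0$ together with the Lie-algebra identification of $\g^*_{r_B}$ with $\g_B$ as external input. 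Two caveats there: for $\lambda\neq 0$ the isomorphism is $\tfrac{1}{\lambda}I_S$, not $I_S$ (cf.\ Lemma \ref{iso}), which is harmless since you only use the image of a subalgebra; and for part (ii) the identification of the full double $\g\bowtie\g^*_{r_B}$ with $\g_B\bowtie\g$ is only made explicit in the paper (via \cite[Proposition 3.5]{LS} in Proposition \ref{ex}) for nonzero weight, so in the weight-zero case the "Manin triple principle" argument needs that identification to be checked separately (it is true, but not quotable from the paper as stated). Your fallback direct verification of (ii) closes this gap and is essentially the paper's own proof: the identity $S([x,y]_B,u)=S(y,B[x,u]_\g-[Bx,u]_\g)$ for $x,y\in\h^0$, $u\in\h$, combined with $[x,y]_B\in\h^0$ from (i) and $(\h^0)^0=\h$, is exactly the computation in the paper, so overall the proposal is sound.
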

\begin{proof}
For (i), first we show that $B(\h^0)\subset \h^0$. Indeed, for $u\in \h$, by \eqref{ref1eq}, we have
\[0=(\tau-\textbf{1}_\g)(B+\lambda \textbf{1}_\g)(\tau+\textbf{1}_\g)(u)=2(\tau-\textbf{1}_\g)(B+\lambda \textbf{1}_\g)(u),\]
which implies that $Bu+\lambda u\in \h$. Therefore, by \eqref{RBmanin}, for all $x\in \h^0$ and $u\in \h$, we have
$S(Bx,u)=-S(x,\lambda u+Bu)=0$ and thus $Bx\in \h^0$.

Then we check that $\h^0$  is a Lie subalgebra of $\g_B$. It can be deduced from Corollary \ref{Sss} and the isomorphism between $\g_{r_B}^*$ and $\g_B$. But  we prefer to provide a straightforward verification.
By the invariance of $S$ and $\h$ being a Lie subalgebra of $\g$, we have that
$[\h^0,\h]_\g\subset \h^0$. Then we claim  $\h^0=\mathrm{Im}(\tau+\textbf{1}_\g)$. Indeed, 
 since $I_S\circ \tau^*=-\tau\circ I_S$, we see
\[\h^0=I_S (\h^\perp)=I_S(\mathrm{ker}(\tau-\textbf{1}_\g)^\perp)=I_S (\mathrm{Im}(\tau^*-\textbf{1}_{\g^*}))=\mathrm{Im}(\tau+\textbf{1}_\g)\circ I_S,\]
which implies that $\h^0=\mathrm{Im}(\tau+\textbf{1}_\g)$. Together with \eqref{ref1eq}, we obtain
\begin{eqnarray}\label{ImB}
(B+\lambda \textbf{1}_\g)(\h^0)\subset \h.
\end{eqnarray}
Then
for $x,y\in \h^0$ and $u\in \h$, by the invariance of $S$ and the compatibility of $S$ and $B$, we have
\begin{eqnarray*}
S([x,y]_B,u)&=&S([Bx+\lambda x, y]_\g+[x,By]_\g,u)\\ &=&-S(y,[(B+\lambda \textbf{1}_\g)x,u]_\g)-S(By,[x,u]_\g)\\ &=&
-S(y,[(B+\lambda \textbf{1}_\g)x,u]_\g)+S(y,(B+\lambda \textbf{1}_\g)[x,u]_\g)\\ &=&0,
\end{eqnarray*}
where in the last equation we used the facts that $\h$ is a Lie subalgebra, $[\h^0,\h]_\g\subset \h^0$ and \eqref{ImB}.  Hence we see that $[x,y]_B\in \h^0$. All together, we proved that $(\h^0,[\cdot,\cdot]_B,B)$ is a Rota-Baxter Lie subalgebra of $(\g_B, B)$.

For (ii), note that $[\h^0,\h]_\g\subset \h^0$. It suffices to show that $[Bx,u]_\g-B[x,u]_\g\in \h$ for $x\in \h^0, u\in \h$. In fact, by the invariance of $S$ and the compatibility condition of $B$ and $S$, for all $y\in \h^0$, we have
\begin{eqnarray*}
S([Bx,u]_\g-B[x,u]_\g,y)&=&-S(u, [Bx,y]_\g)+S([x,u]_\g,By+\lambda y)
\\ &=&-S(u,[Bx,y]_\g+[x, By+\lambda y]_\g)=-S(u, [x,y]_B)\\
&=&0,
\end{eqnarray*}
which implies that $[Bx,u]_\g-B[x,u]_\g\in \h$ since $[x,y]_B\in \h^0$. So we see that $\h^0\bowtie \h$ is a Lie subalgebra of $\g_B\bowtie \g$. \end{proof}


\begin{pro}\label{ex}
Let $(\g,B,S)$ be an arbitrary quadratic Rota-Baxter Lie algebra of nonzero weight $\lambda$. Then
\begin{itemize}
\item [\rm (i)] the triple $(\g\oplus \g, \hat{B}, \hat{S})$ of the direct product Lie algebra $\g\oplus \g$ with $\hat{B}$ and $\hat{S}$  forms a quadratic Rota-Baxter Lie algebra of  weight $\lambda$, where
\[\hat{B}(u,x)=\big(Bu-(B+\lambda \iii_\g)x,Bu-(B+\lambda \iii_\g)x\big),\quad \hat{S}((u,x),(v,y))=S(u,v)-S(x,y).\]
And  the map $\tau: \g\oplus \g\to \g\oplus \g$ given by \[\tau(u,x)=(x,u)\] is a reflection on $(\g\oplus \g, \hat{B}, \hat{S})$.  Moreover, the Lie algebra $\h$ of fixed points of $\tau$ is $\g_{diag}:=\{(x,x)\in \g\oplus \g\}$ and its orthogonal component $\h^0=\h$ is an ideal of the descendent Lie algebra $(\g\oplus \g)_{\hat{B}}$.
\item[\rm (ii)] the triple $(\g_B\bowtie \g, B_{\bowtie}, S_{\bowtie})$ is a quadratic Rota-Baxter Lie algebra of weight $\lambda$, where 
$B_{\bowtie}, S_{\bowtie}$ are given by
\[B_{\bowtie}(\xi,x)=(0,-\lambda x),\qquad S_{\bowtie}((\xi,x),(\eta,y))=\lambda(S(\xi,y)+S(x,\eta)).\]
Moreover, the map $\tau_{\bowtie}: \g_B\bowtie \g\to \g_B\bowtie \g$ given by \[\tau_{\bowtie} (\xi,x)=(-\xi,(2B+\lambda \iii_\g)\xi+x)\]
is a reflection on $(\g_B\bowtie \g, B_{\bowtie}, S_{\bowtie})$.
\item[\rm (iii)] the double Lie algebra $(\g^*_{r_B}\bowtie \g,B_D,\huaS)$ is a quadratic Rota-Baxter Lie algebra of weight $\lambda$, where
\[B_D(\alpha,x)=(0,-\lambda x),\qquad \huaS((\alpha,x),(\beta,y))=\langle \alpha,y\rangle+\langle x,\beta\rangle.\]
The map $\tau_D: \g^*_{r_B}\bowtie \g\to\g^*_{r_B}\bowtie \g$ defined by
\[\tau_D(\alpha,x)=(-\alpha, \frac{1}{\lambda}(2B+\lambda \iii_\g)I_S \alpha+x),\]
is a reflection on $(\g^*_{r_B}\bowtie \g,B_D,\huaS)$.
\end{itemize}
\end{pro}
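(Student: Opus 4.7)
Each of the three parts requires the same list of verifications: the Rota-Baxter identity of weight $\lambda$ for the operator, nondegeneracy, symmetry and $\ad$-invariance of the bilinear form, the compatibility \eqref{RBmanin} between the form and the operator, and the two reflection conditions \eqref{ref1eq} and \eqref{tauinveq}. My plan is to prove (i) and (ii) by direct computation and to deduce (iii) from (ii) by transport along a natural isomorphism $\g^*_{r_B}\bowtie\g\cong\g_B\bowtie\g$.

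For (i) the central observation is that $\hat B=\iota\circ\pi_+$, where $\iota:\g\hookrightarrow\g\oplus\g$, $z\mapsto(z,z)$, is a Lie algebra homomorphism and $\pi_+(u,x):=Bu-(B+\lambda\iii_\g)x$. Using that $-(B+\lambda\iii_\g)$ is itself a Rota-Baxter operator of weight $\lambda$ on $\g$ (a direct consequence of the Rota-Baxter axiom for $B$), I would expand $[\pi_+(u,x),\pi_+(v,y)]_\g$ into four pieces, apply the Rota-Baxter identity to each pure piece and regroup the two cross terms as $\pi_+$ of an appropriate combination, yielding the Rota-Baxter identity for $\hat B$. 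The $\hat S$ axioms and \eqref{RBmanin} for $(\hat B,\hat S)$ are short calculations on each summand, the sign discrepancy reconciling the contributions. Both reflection conditions are essentially free: $(\tau+\iii)(u,x)=(u+x,u+x)\in\g_{diag}$, while $(\hat B+\lambda\iii)$ sends $(z,z)\mapsto(-\lambda z+\lambda z,-\lambda z+\lambda z)=(0,0)$, so it annihilates $\g_{diag}$; this gives \eqref{ref1eq}, and \eqref{tauinveq} is immediate from the definition of $\hat S$. Finally, $\h=\g_{diag}$ is the fixed set of $\tau$, nondegeneracy of $S$ forces $\h^0=\h$, and a one-line computation shows that the descendent bracket $[(u,u),(v,y)]_{\hat B}$ equals $([u,c_0]_\g,[u,c_0]_\g)$ with $c_0=Bv-(B+\lambda)y$, proving $\h$ is an ideal of $(\g\oplus\g)_{\hat B}$.

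For (ii) I will work with the bracket of $\g_B\bowtie\g$ from \cite[Theorem 3.3]{LS}. The reflection conditions are again cheap: $(\tau_{\bowtie}+\iii)(\xi,x)=(0,(2B+\lambda)\xi+2x)$ lies in $\{0\}\oplus\g$, on which $B_{\bowtie}+\lambda\iii$ vanishes, and \eqref{tauinveq} reduces to $S((2B+\lambda)\xi,\eta)+S(\xi,(2B+\lambda)\eta)=0$, a direct consequence of \eqref{RBmanin}. Verifying the Rota-Baxter identity for $B_{\bowtie}$ amounts to comparing $[B_{\bowtie}(\xi,x),B_{\bowtie}(\eta,y)]_{\bowtie}=(0,\lambda^2[x,y]_\g)$ with $B_{\bowtie}$ applied to the relevant bracket combination; the coupling terms $[B\xi,y]_\g-B[\xi,y]_\g$ and $[B\eta,x]_\g-B[\eta,x]_\g$ cancel across the three summands, leaving $(\lambda[\xi,\eta]_B,-\lambda[x,y]_\g)$, and then $B_{\bowtie}$ kills the first coordinate and scales the second by $-\lambda$, producing the required $(0,\lambda^2[x,y]_\g)$. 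Symmetry, nondegeneracy and $B_{\bowtie}$-compatibility of $S_{\bowtie}$ are immediate; the invariance of $S_{\bowtie}$ under the $\g_B\bowtie\g$-bracket reduces on generators of the form $(\xi,0)$ and $(0,x)$ to a combination of invariance of $S$ in $\g$ and \eqref{RBmanin}. The main obstacle is checking that $\tau_{\bowtie}$ is a Lie algebra automorphism: since $\tau_{\bowtie}$ acts as $-\iii$ on the $\g_B$-component but shifts the $\g$-component by $(2B+\lambda)\xi$, matching the coupling terms in $[\tau_{\bowtie}(\xi,x),\tau_{\bowtie}(\eta,y)]_{\bowtie}$ with $\tau_{\bowtie}[(\xi,x),(\eta,y)]_{\bowtie}$ requires repeated use of the Rota-Baxter identity for $B$; I would reduce this to a check on the four types of generator pairs $(\xi,0)$--$(\eta,0)$, $(\xi,0)$--$(0,y)$, $(0,x)$--$(\eta,0)$, $(0,x)$--$(0,y)$ and then extend by bilinearity.

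For (iii) my plan is to deduce the statement from (ii) via the linear map $\Phi:\g^*_{r_B}\bowtie\g\to\g_B\bowtie\g$, $\Phi(\alpha,x):=(\tfrac{1}{\lambda}I_S\alpha,x)$. Direct substitution gives $B_{\bowtie}\circ\Phi=\Phi\circ B_D$, $\Phi^*S_{\bowtie}=\huaS$ via the identity $S(I_S\alpha,y)=\langle\alpha,y\rangle$, and $\tau_{\bowtie}\circ\Phi=\Phi\circ\tau_D$ (the factor $\tfrac{1}{\lambda}$ in the definition of $\tau_D$ exactly compensates for the same factor in $\Phi$, since $(2B+\lambda)\tfrac{1}{\lambda}I_S\alpha=\tfrac{1}{\lambda}(2B+\lambda)I_S\alpha$). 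The remaining point is that $\Phi$ is a Lie algebra isomorphism, which is the factorizable realization of the classical double encoded in \cite[Theorem 2.10]{LS}: writing the $\g^*_{r_B}$-bracket via $r_{B+}$ from \eqref{rfromB} and comparing with $[\cdot,\cdot]_B$ on $\g_B$, the scaling by $\tfrac{1}{\lambda}$ together with the compatibility $I_SB^*=-(B+\lambda)I_S$ converts one into the other, and the cross brackets between the two factors match likewise. Once $\Phi$ intertwines all four pieces of structure, the conclusion (iii) is immediate from (ii).
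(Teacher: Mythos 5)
Your proposal is correct, but it is organized differently from the paper's proof. For (i) both you and the paper argue by direct computation (the paper checks the Rota--Baxter identity for $\hat B$ by showing the companion operator $-\lambda\iii-\hat B$ is Rota--Baxter, while you expand $\hat B=\iota\circ\pi_+$ using that $-(B+\lambda\iii_\g)$ is Rota--Baxter; your reflection check, the identification $\h^0=\h$, and the one-line ideal computation $[(x,x),(v,y)]_{\hat B}=([x,c_0]_\g,[x,c_0]_\g)$ with $c_0=Bv-(B+\lambda\iii_\g)y$ all match the paper). The real divergence is in (ii) and (iii): the paper obtains both at once by transporting the structure of (i) along the commutative diagram of Lie algebra isomorphisms $\psi,\phi$ and $(\tfrac1\lambda I)\oplus\iii_\g$ from \cite[Proposition 3.5]{LS}, explicitly omitting the verifications, whereas you prove (ii) from scratch (your computation reducing the Rota--Baxter identity for $B_{\bowtie}$ to $(\lambda[\xi,\eta]_B,-\lambda[x,y]_\g)$ is correct, as are the reductions of invariance and compatibility to \eqref{RBmanin1} and \eqref{RBmanin}) and only transport (iii) from (ii) via $\Phi=(\tfrac1\lambda I_S)\oplus\iii_\g$, which is the same vertical arrow the paper uses. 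Your route makes explicit what the paper leaves implicit, namely that the transported operator, form and involution really are the displayed $B_{\bowtie},S_{\bowtie},\tau_{\bowtie}$; the price is the heaviest step you flag, verifying directly that $\tau_{\bowtie}$ is a Lie algebra automorphism, which the paper gets for free since $\phi\circ\tau_{\bowtie}=\tau\circ\phi$ with $\phi(\xi,x)=((B+\lambda\iii_\g)\xi+x,B\xi+x)$ conjugates $\tau_{\bowtie}$ into the flip of (i) — you may want to use this shortcut rather than the four-case generator check. Note also that your step for (iii) still rests on $\Phi$ being a Lie algebra isomorphism; your sketch via $r_{B\pm}$ and $I_SB^*=-(B+\lambda\iii_\g)I_S$ handles the bracket on $\g^*_{r_B}$, but the mixed brackets of the double must be matched too, so in the end you are re-deriving (or citing) the same result of \cite{LS} that the paper invokes.
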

\begin{proof}
For (i),  it is known that  $\hat{B}$ is a Rota-Baxter operator on $\g\oplus \g$ if and only if
\[C=-\lambda \iii_{\g\oplus \g}-\hat{B}: \g\oplus \g\to \g\oplus \g,\qquad (u,x)\mapsto \big((B+\lambda \iii_\g)(x-u),B(x-u)\big)\] is a Rota-Baxter operator. Actually, since $B$ is a Rota-Baxter operator on $\g$, we have
\begin{eqnarray*}
[C(u,x),C(v,y)]_{\g\oplus \g}&=&\big([(B+\lambda \iii_\g)(x-u),(B+\lambda \iii_\g)(y-v)]_\g,[B(x-u),B(y-v)]_\g\big)\\ &=&
((B+\lambda\iii_\g)f, Bf), \end{eqnarray*}
where $f=[B(x-u), y-v]_\g+[x-u, (B+\lambda\iii_\g)(y-v)]_\g$.
And
\begin{eqnarray*}
&&C([C(u,x), (v,y)]_{\g\oplus \g}+[(u,x), (C+\lambda \iii_{\g\oplus \g})(v,y)]_{\g\oplus \g})\\ &=&C\big(([(B+\lambda \iii_\g)(x-u),v]_\g,[B(x-u),y]_\g)+([u, (B+\lambda \iii_\g)(y-v)+\lambda v]_\g,[x, B(y-v)+\lambda y]_\g)\big)\\ &=&\big((B+\lambda \iii_\g)f', Bf'),
\end{eqnarray*}
where
\[f'=[B(x-u),y-v]_\g-\lambda [x-u, v]_\g+[x-u, (B+\lambda \iii_\g)(y-v)]_\g+\lambda[x,v]_\g-\lambda [u,v]_\g=f.\]
Therefore, we showed that $C$ is a Rota-Baxter operator, so $\hat{B}$ is also a Rota-Baxter operator.  Then depending on  the invariance of $S$, we have
\begin{eqnarray*}
&&\hat{S}([(u,x),(v,y)]_{\g\oplus \g}, (w,z))+\hat{S}((v,y),[(u,x), (w,z)]_{\g\oplus \g})\\ &=&S([u,v]_\g,w)-S([x,y]_\g,z)+S(v,[u,w]_\g)-S(y, [x,z]_\g)=0.
\end{eqnarray*}
It suffices to show \eqref{RBmanin} for $\hat{S}$ and $\hat{B}$.  Indeed, since \eqref{RBmanin} holds for $S$ and $B$, we have
\begin{eqnarray*}
&&\hat{S}(\hat{B}(u,x), (v,y))+\hat{S}((u,x), (\hat{B}+\lambda\iii)(v,y))\\ &=&S(Bu-(B+\lambda \iii_\g)x,v)-S(Bu-(B+\lambda \iii_\g)x,y)+S(u, (B+\lambda \iii_\g)(v-y))-S(x, B(v-y))\\ &=&0.
\end{eqnarray*}
So we proved that $(\g\oplus \g, \hat{B}, \hat{S})$ is a quadratic Rota-Baxter Lie algebra.

It is obvious that $\tau$ is an involutive  Lie algebra automorphism. By direct calculation, we have
\begin{eqnarray*}
(\tau-\iii_{\g\oplus \g})\hat{B}(\tau+\iii_{\g\oplus \g})(u,x)&=&(\tau-\iii_{\g\oplus \g})(-\lambda (x+u),-\lambda(x+u))=0,\\
\hat{S}(\tau(u,x),(v,y))+\hat{S}((u,x),\tau(v,y))&=&S(x,v)-S(u,y)+S(u, y)-S(x,v)=0.
\end{eqnarray*}
Hence $\tau$ is a reflection on the quadratic Rota-Baxter Lie algebra $(\g\oplus \g,\hat{B}, \hat{S})$. It is easy to see that $\h=\{(x,x); x\in \g\}$ is isotropic and $\h^0=\h$. To show that $\h^0=\h$ is not only a Lie subalgebra of $\g\oplus \g$, but also an ideal, we first note that
\begin{eqnarray*}
[(u,x),(v,y)]_{\hat{B}}&=&\big([B(u-x)-\lambda x, v]_\g,[B(u-x)-\lambda x,y]_\g\big)\\ &&+\big([u, B(v-y)-\lambda y+\lambda v]_\g, [x, B(v-y)-\lambda y+\lambda y]_\g\big)\\ &=&
\big([u,v]_B-[(B+\lambda \iii_\g)x, v]_\g-[u, (B+\lambda \iii_\g)y]_\g,-[x,y]_B+[Bu,y]_\g+[x,Bv]_\g\big).
\end{eqnarray*}
Then, we obtain
\[[(x,x),(v,y)]_{\hat{B}}=\big([x, Bv]_\g-[x,(B+\lambda \iii_\g)y]_\g, [x, Bv]_\g-[x,(B+\lambda \iii_\g)y]_\g)\in \h^0,\]
which implies that $\h^0\subset (\g\oplus \g)_{\hat{B}}$ is an ideal.

For (ii) and (iii), by \cite[Proposition 3.5]{LS}, given a quadratic Rota-Baxter Lie algebra $(\g,B,S)$ with nonzero weight $\lambda$,  we have a commutative diagram of Lie algebra isomorphisms:
\begin{eqnarray*}\label{diagram}
		\vcenter{\xymatrix{
			\g_{r_B}^*\bowtie \g \ar[d]_(0.45){(\frac{1}{\lambda}I)\oplus \iii_\g}\ar[r]^(0.45){\psi} &\g\oplus \g\\
			\g_B\bowtie \g,\ar[ur]_{\phi}&
		}}
\end{eqnarray*}
where $\psi$ and $\phi$ are given by
\[\psi(\alpha,x)=(r_{B+} \alpha+x, r_{B-} \alpha+x),\qquad
\phi(\xi, x)=(B\xi+\lambda \xi+x, B\xi+x).\]
Applying (i) and the isomorphisms $\psi$ and $\phi$, we obtain the quadratic
Rota-Baxter Lie algebra structures on $\g_B\bowtie \g$ and $\g_{r_B}^*\bowtie \g$ so that $\psi$ and $\phi$ become isomorphisms of quadratic Rota-Baxter Lie algebras.  We omit the details.
\end{proof}

\begin{ex}\label{sln}
Let $\g=\mathrm{sl}(n,\mathbb{C})$ be the quadratic Lie algebra with the scalar product $S(X,Y)=\mathrm{Im} tr(XY)$. Define
\[B:\g\to \g, \quad X=(x_{ij})\mapsto -\lambda A=-\lambda(a_{ij}),\]
where \[  a_{ii}=\frac{x_{ii}+\overline{x_{ii}}}{2}, \quad a_{ij}=x_{ij}+\overline{x_{ji}},~ \mbox{if}~ i<j, ~\mbox{and}~  a_{ij}=0,~ \mbox{if}~ i>j.\]
Then  $(\g, B, S)$ is a quadratic Rota-Baxter Lie algebra. In fact, $\g$ has the decomposition $\mathrm{sl}(n,\mathbb{C})=\mathrm{su}(n)\oplus \mathrm{sb}(n,\mathbb{C})$ of vector spaces,  where
$\mathrm{su}(n):=\{X\in  \mathrm{sl}(n,\mathbb{C})|\overline{X}^T+X=0\}$ and $\mathrm{sb}(n,\mathbb{C})$ consists of upper-triangular matrices in $\mathrm{sl}(n,\mathbb{C})$ with real diagonals. Moreover, $\mathrm{su}(n)$ and $\mathrm{sb}(n,\mathbb{C})$ are isotropic Lie subalgebras of $\g$; see \cite{LPV} for details. And the above $B$ is the projection from $\mathrm{sl}(n,\mathbb{C})$ to $\mathrm{sb}(n,\mathbb{C})$ multiplied by $-\lambda$.  
Based on this, for $X=X_1+X_2, Y=Y_1+Y_2$, with $X_1,Y_1\in \mathrm{su}(n)$ and $X_2,Y_2\in \mathrm{sb}(n,\mathbb{C})$, we have
\begin{eqnarray*}
S(BX,Y)+S(X,(B+\lambda \iii_\g)Y)&=&S(-\lambda X_2, Y_1+Y_2)+S(X_1+X_2,\lambda Y_1)\\ &=&-\lambda S(X_2,Y_1)+\lambda S(X_2, Y_1)=0.
\end{eqnarray*}
So $(\g, B, S)$ is a quadratic Rota-Baxter Lie algebra. The descendent Lie bracket $[\cdot,\cdot]_B$ on $\g_B$ is
\[[X,Y]_B=[X_1+X_2,Y_1+Y_2]_B=\lambda[X_1, Y_1]_\g-\lambda[X_2,Y_2]_\g.\]

Then by Proposition \ref{ex}, we see that $(\mathrm{sl}(n,\mathbb{C})\oplus \mathrm{sl}(n,\mathbb{C}), \hat{B},\hat{S})$ is a quadratic Rota-Baxter Lie algebra and $\tau(u,x)=(x,u)$ is a reflection on it. Here
\begin{eqnarray*}
\hat{S}((U,X),(V,Y))&=&\mathrm{Im} tr(UV)-\mathrm{Im}tr(XY),\\ \hat{B}(U,X)&=&-\lambda(C, C), \qquad C=(c_{ij}),
\end{eqnarray*}
where $$c_{ii}=\frac{u_{ii}+\overline{u_{ii}}+x_{ii}-\overline{x_{ii}}}{2},\quad c_{ij}=u_{ij}+\overline{u_{ji}}-\overline{x_{ji}} ~\mbox{ for}~ i<j, ~\mbox{and}~ c_{ij}=x_{ij}~\mbox{ for}~ i>j.$$
The fixed point set of $\tau$ is $\mathrm{sl}(n,\mathbb{C})_{diag}\cong \mathrm{sl}(n,\mathbb{C})$ and its orthogonal component $\h^0=\h$ is an ideal of $(\mathrm{sl}(n,\mathbb{C})\oplus \mathrm{sl}(n,\mathbb{C}))_{\hat{B}}$.\end{ex}
\begin{ex}\label{sln2}
Let $(\g=\mathrm{sl}(n,\mathbb{C}), B, S)$ be the quadratic  Rota-Baxter Lie algebra in Example \ref{sln}. Then the involutive Lie algebra automorphism
\[\tau: \g\to \g, \qquad X\mapsto -\overline{X}^T\]
 is a reflection on  $(\mathrm{sl}(n,\mathbb{C}), B, S)$.

 To see this, note that the $\pm 1$-eigenspaces of $\tau$ are $\h=\mathrm{su}(n)$ and $\frkp$ and we have $\g=\h\oplus \frkp$, where $\frkp$ consists of traceless Hermitian matrices.  By Example \ref{rmkinvolution}, it suffices to show $B_{\h\frkp}=0$ and $S(\h,\h)=S(\frkp,\frkp)=0$. Indeed, for $X\in \h$, i.e. $\overline{X}^T=-X$, we have $x_{ij}+\overline{x_{ji}}=0$ and thus  $B(X)=0$ by the definition of $B$, which implies $B_{\h\frkp}=0$.  By Example \ref{sln} we have $S(\h,\h)=0$. For $X,Y\in \frkp$, we have $\overline{X}^T=X$ and $\overline{Y}^T=Y$. Then we find
\[\overline{tr(XY)}=tr(\overline{X}\overline{Y})=tr(X^TY^T)=tr(YX)=tr(XY),\]
so that $S(X,Y)=\mathrm{Im} tr(XY)=0$. Therefore, $\tau$ is a reflection on $(\mathrm{sl}(n,\mathbb{C}), B, S)$. Moreover, $\h^0\cong \h$ is not just a Lie subalgebra of $\mathrm{sl}(n,\mathbb{C})_{B}$, but an ideal. \end{ex}

\emptycomment{
\begin{rmk}
There is an alternative way to introduce reflections on a quadratic Rota-Baxter Lie algebra.
An {\bf invariant reflection} \hl{symmetric or invariant ?} on a quadratic Rota-Baxter Lie algebra $(\g,B,S)$ of weight $\lambda$ is a Lie algebra automorphism $\tau: \g\to \g$ such that
\begin{eqnarray}
\label{ref1anti}
\tau\circ B\circ \tau+B-\tau\circ B-B\circ \tau+\lambda \tau^2-2\lambda \tau+\lambda \iii_\g&=&(\tau-\iii_\g)(B+\lambda \iii_\g)(\tau-\iii_\g)=0,\\
\label{tauanti}S(\tau x,y)-S(x,\tau y)&=&0.
\end{eqnarray}
We could obtain parallel results as in  Theorem \ref{quathm} and Proposition \ref{h^0}. Such an invariant reflection $\tau$ is a symmetric solution of the classical reflection equation for the Lie bialgebra $(\g, r_B)$.  Denote by $\h$ the fixed point set of $\tau$ and $\h^0:=\{x\in \g|S(x,u)=0, \forall u\in \h\}$. Then $\h^0$ is a  Lie subalgebra of $(\g_B,B)$.

In fact, to make $\h^0$ a Lie subalgebra of $(\g_B,B)$, the condition \eqref{ref1anti} can be relaxed to
\begin{eqnarray}\label{relax}
[\ad_x, (\tau-\iii_\g)(B+\lambda \iii_\g)(\tau-\iii_\g)]_C=0,\quad x\in \h,
\end{eqnarray}
where $[\cdot,\cdot]_C$ is the commutator bracket of two linear maps from $\g$ to $\g$. See Remark \ref{relexCRE}.
\end{rmk}
}

\begin{ex}\label{invariant}
Consider the complex Lie algebra $\g=\mathrm{sl}(2,\mathbb{C})$ with a basis $H=\begin{pmatrix}
1& 0 \\
0 & -1
\end{pmatrix},  X=\begin{pmatrix}
0& 1 \\
0 & 0
\end{pmatrix}, Y=\begin{pmatrix}
0& 0 \\
1 & 0
\end{pmatrix}$.
Define
\begin{eqnarray*}S(aH+bX+cY, a'H+b'X+c'Y)&=&8aa'+4(bc'+cb'),\\
B(aH+bX+cY)&=&-\frac{1}{2}\lambda aH-\lambda bX.\end{eqnarray*}
Then $(\mathrm{sl}(2,\mathbb{C}), S,B)$ is a quadratic Rota-Baxter Lie algebra of weight $\lambda$, which corresponds to the factorizable Lie bialgebra $(\g, r=\frac{1}{8}(H\otimes H+4X\otimes Y))$ in \cite{K}.

The involutive Lie algebra automorphism
\[\tau: \g\to \g, \qquad U\mapsto -U^T\] satisfies the condition $S(\tau x,y)-S(x,\tau y)=0$ and
\[(\tau-\iii_\g)(B+\lambda \iii_\g)(\tau-\iii_\g)=\lambda(\iii_\g-\tau).\]
Although it does not vanish, it actually satisfies the ad-invariance condition:
\[[\ad_x, \lambda(\iii_\g-\tau)](u)=\lambda[x,u-\tau u]-\lambda (\iii_\g-\tau)[x,u]=0,\quad \forall x\in \h, u\in \g, \]
where $\h=\mathrm{so}(n,\mathbb{C})$, the fixed point set of $\tau$, is the Lie algebra of skew-symmetric matrices.
We could still obtain the result that $\h^0$ is  a Lie subalgebra of $\g_B$. Note that $\h^0$ is composed of all traceless symmetric matrices. We show $\h^0$ is not an ideal. In fact, the descendent Lie bracket $[\cdot,\cdot]_B$ on the basis is:
\[[H,X]_B=-\lambda X,\quad [H,Y]_B=-\lambda Y,\quad [X,Y]_B=0.\]
For $aH+bX+bY\in \h^0$, we have
\[[aH+bX+bY, X]_B=-\lambda a X\notin \h^0.\]
So $\h^0$ is not an ideal.
This example can be generalized to $\mathrm{sl}(n,\mathbb{C})$.
\end{ex}

\subsection{Reflections on Rota-Baxter Lie algebras}
A Rota-Baxter Lie algebra $(\g,B)$ of weight $\lambda$ gives rise to a quadratic Rota-Baxter Lie algebra
\[\big(\g\ltimes_{\ad^*} \g^*, B\oplus (-\lambda \textbf{1}_{\g^*}-B^*), \huaS\big)\] of the same weight, where $\huaS$ is the symmetric bilinear form induced by the natural pairing:
\[\huaS(x+\xi,y+\eta)=\langle x,\eta\rangle+\langle \xi, y\rangle,\quad \forall x,y\in \g, \xi,\eta\in \g^*.\]
Note that in this special case, the isomorphism $I_\huaS: (\g\oplus \g^*)^*=\g\oplus\g^*\to \g\oplus\g^*$ is exactly the identity map. Then applying the results in Subsection \ref{qua}, we see that Rota-Baxter operators give solutions of the classical Yang-Baxter equation. We also study reflections and solutions of the classical reflection equations in this case.

\begin{pro}\label{r}
Let $(\g,B)$ be a Rota-Baxter Lie algebra of  weight $\lambda$, and $\g\ltimes_{\ad^*} \g^*$ be the semi-direct product Lie algebra with respect to the coadjoint action.

\begin{itemize}
  \item[{\rm(i)}] In the case $\lambda=0$,  $r_B\in \wedge^2(\g\ltimes_{\ad^*} \g^*)$ given by
\[r_B(x+\xi,y+\eta)= \langle Bx,\eta\rangle-\langle \xi, By\rangle\]
is a skew-symmetric solution of the classical Yang-Baxter equation in the Lie algebra $\g\ltimes_{\ad^*} \g^*$.
It further gives a triangular Lie bialgebra $(\g\ltimes_{\ad^*} \g^*, r_B)$.

 \item[{\rm(ii)}] In the case $\lambda\neq0$,  $r_B\in \otimes^2(\g\ltimes_{\ad^*} \g^*)$ given by
\[r_B(x+\xi,y+\eta)=\frac{1}{\lambda}\big(\langle Bx,\eta\rangle-\langle \xi, By\rangle+\lambda\langle x,\eta\rangle\big),\]
is a solution of the classical Yang-Baxter equation in the Lie algebra $\g\ltimes_{\ad^*} \g^*$.
It further gives a factorizable Lie bialgebra $(\g\ltimes_{\ad^*} \g^*, r_B)$.
\end{itemize} \end{pro}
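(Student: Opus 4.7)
The plan is to specialize Theorem \ref{converse} to the quadratic Rota-Baxter Lie algebra
\[\big(\g \ltimes_{\ad^*} \g^*,\ \tilde B,\ \huaS\big),\qquad \tilde B := B \oplus (-\lambda \iii_{\g^*} - B^*),\]
which was recalled at the very opening of the subsection. Once the isomorphism $I_\huaS$ associated to $\huaS$ is made explicit, both formulas for $r_B$ are read off directly from $r_{B+} = \tilde B \circ I_\huaS$ (weight zero) and $r_{B+} = \tfrac{1}{\lambda}(\tilde B + \lambda \iii) \circ I_\huaS$ (nonzero weight), and the triangular / factorizable conclusions then follow automatically.

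Concretely, I would first record that because $\huaS$ is built purely from the canonical pairing between $\g$ and $\g^*$, the associated flat map $\g \oplus \g^* \to (\g \oplus \g^*)^* \cong \g^* \oplus \g$ is simply the swap $x + \xi \mapsto \xi + x$; equivalently $I_\huaS(\alpha + a) = a + \alpha$. In the case $\lambda = 0$, one then computes $r_{B+}(\alpha + a) = \tilde B(a + \alpha) = Ba - B^*\alpha$, and evaluating this element of $\g \oplus \g^*$ on a second dual vector $\beta + b \in \g^* \oplus \g$ yields exactly $\langle Ba, \beta\rangle - \langle \alpha, Bb\rangle$, matching the formula in (i) after a harmless relabeling. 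For $\lambda \neq 0$, the same calculation with the shifted operator $\tilde B + \lambda\iii = (B + \lambda \iii_\g) \oplus (-B^*)$ produces exactly the additional $\lambda\langle x, \eta\rangle$ term appearing in (ii).

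Once the formula is established, Theorem \ref{converse} gives the rest for free: in (i) the element $r_B$ is automatically a skew-symmetric solution of the classical Yang-Baxter equation producing a triangular Lie bialgebra, while in (ii) the operator $I = r_{B+} - r_{B-}$ is, up to the nonzero factor $1/\lambda$, essentially $I_\huaS$ itself and is therefore a linear isomorphism, forcing $(\g \ltimes_{\ad^*} \g^*, r_B)$ to be factorizable. The only genuine difficulty is bookkeeping: one must fix once and for all the convention for the dual $(\g \oplus \g^*)^* \cong \g^* \oplus \g$ and the direction of the $\huaS$-induced identification, because otherwise signs and transposes get scrambled between $B$ and $B^*$. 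With that convention pinned down, the proof reduces to the two one-line computations sketched above.
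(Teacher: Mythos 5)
Your proposal is correct and follows essentially the same route as the paper: it specializes Theorem \ref{converse} to the quadratic Rota-Baxter Lie algebra $(\g\ltimes_{\ad^*}\g^*,\, B\oplus(-\lambda\iii_{\g^*}-B^*),\,\huaS)$, uses that the $\huaS$-induced identification is the canonical one, and reads off $r_B$ from $r_{B+}$ in each case, with the triangular/factorizable conclusions supplied by that theorem. The only cosmetic difference is that the paper phrases $I_\huaS$ as the identity map under its chosen identification of $(\g\oplus\g^*)^*$ with $\g\oplus\g^*$, whereas you describe it as the swap — these are the same convention issue you already flag, and your computation matches the paper's.
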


\begin{proof}
By Theorem \ref{converse},  the quadratic Rota-Baxter Lie algebra $(\g\ltimes_{\ad^*} \g^*, B\oplus (-\lambda \textbf{1}_{\g^*}-B^*), \huaS)$   induces  solutions of the classical Yang-Baxter equation in the Lie algebra $\g\ltimes_{\ad^*} \g^*$.

(i) In the case of $\lambda=0$, we have
$$
r_{B+}=  B\oplus ( -B^*) :\g\oplus \g^*\to \g\oplus \g^*,
$$
which implies that
\[r_B(x+\xi,y+\eta)=\huaS(r_{B+}( x+  \xi),y+\eta)=\huaS(Bx-B^*\xi,y+\eta\rangle=\langle Bx,\eta\rangle-\langle \xi, By\rangle.\]

(ii) In the case of $\lambda\neq0$,
the $r$-matrix is given by
\begin{eqnarray}\label{rB}
r_{B+}=\frac{1}{\lambda} \big(B\oplus (-\lambda \textbf{1}_{\g^*}-B^*)+\lambda \textbf{1}_{\g\oplus \g^*}\big):\g\oplus \g^*\to \g\oplus \g^*,
\end{eqnarray}
which implies that
\[r_B(x+\xi,y+\eta)=\frac{1}{\lambda}\huaS(Bx-\lambda \xi-B^*\xi+\lambda x+\lambda \xi,y+\eta)=\frac{1}{\lambda}\big(\langle Bx,\eta\rangle-\langle \xi, By\rangle+\lambda\langle x,\eta\rangle\big).\]
The proof is finished.
\end{proof}

Denote by $(\g\ltimes_{\ad^*} \g^*)_{B\oplus (-\lambda \iii_{\g^*}-B^*)}$ the descendent Lie algebra on $\g\oplus \g^*$. By \cite{LS}, we see that
\[(\g\ltimes_{\ad^*} \g^*)_{B\oplus (-\lambda \textbf{1}_{\g^*}-B^*)}=\g_B\ltimes_{\rho_B} \g^*,\]
where $\rho_B: \g_B \to \mathrm{End}(\g^*)$ is the representation of the descendent Lie algebra $\g_B$ on $\g^*$ given by \[\rho_B(x)(\xi):=\ad_{Bx}^*\xi-\ad_{x}^*B^*\xi,\quad \forall x\in \g, \xi\in \g^*.\]
This Lie algebra is actually isomorphic to the Lie algebra $(\g\oplus \g^*)_{r_B}$ induced by the classical $r$-matrix $r_B$. We refer to \cite[Corollary 2.8]{LS} for details. Here we give a direct proof for this particular case.
\begin{lem}\label{iso}
There is a Lie algebra isomorphism from the dual Lie algebra induced by the $r$-matrix $r_B$ given in Proposition \ref{r} to the descendent Lie algebra of the Rota-Baxter Lie algebra $(\g\ltimes_{\ad^*} \g^*, B\oplus (-\lambda \iii_{\g^*}-B^*))$:
\begin{itemize}   \item[{\rm(i)}] When $\lambda=0$, we have $(\g\oplus \g^*)_{r_B}= (\g\ltimes_{\ad^*} \g^*)_{B\oplus (-B^*)}=\g_{B}\ltimes_{\rho_B}\g^*$.
    \item[{\rm(ii)}] When $\lambda\neq 0$, the isomorphism is
$$
\frac{1}{\lambda} \iii_{\g\oplus  \g^*}: (\g\oplus \g^*)_{r_B}\to (\g\ltimes_{\ad^*} \g^*)_{B\oplus (-\lambda \iii_{\g^*}-B^*)}=\g_{B}\ltimes_{\rho_B}\g^*.
$$  \end{itemize}\end{lem}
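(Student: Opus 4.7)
The plan is to compute both Lie brackets explicitly and verify they match up to the stated isomorphism. First I would unravel the descendent bracket on $\g\ltimes_{\ad^*}\g^*$: writing $\widehat{B}:=B\oplus(-\lambda\iii_{\g^*}-B^*)$ for brevity and expanding
\[
[u,v]_{\widehat{B}}=[\widehat{B}u,v]+[u,\widehat{B}v]+\lambda[u,v]
\]
with the semidirect bracket $[x+\xi,y+\eta]=[x,y]_\g+\ad^*_x\eta-\ad^*_y\xi$, a short calculation shows that the $\g$-component becomes $[x,y]_B$ and the $\g^*$-component becomes $\rho_B(x)\eta-\rho_B(y)\xi$, which establishes the equality $(\g\ltimes_{\ad^*}\g^*)_{\widehat{B}}=\g_B\ltimes_{\rho_B}\g^*$.

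Next I would compute the dual Lie bracket $[\cdot,\cdot]_{r_B}$ transferred to $\g\oplus\g^*$ through the identification $I_\huaS=\iii$. The key ingredient is the general formula
\[
[\phi,\psi]_{r_B}=[r_{B+}(\phi),\psi]+[\phi,r_{B-}(\psi)],
\]
with $r_{B-}=-r_{B+}^*$ (adjoint with respect to $\huaS$). This is obtained from the cobracket identity $\langle\delta(u),\phi\otimes\psi\rangle=-r_B(\ad^*_u\phi,\psi)-r_B(\phi,\ad^*_u\psi)$ by using $\ad^*=\ad$ (from invariance of $\huaS$) together with the invariance identity $\huaS([a,b],c)=\huaS(a,[b,c])$ to put each term on the right into the form $\huaS(u,\cdot)$.

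For case (i), Proposition \ref{r}(i) gives $r_{B+}=B\oplus(-B^*)=\widehat{B}$; skew-symmetry of $r_B$ then forces $r_{B-}=r_{B+}=\widehat{B}$; and the formula immediately yields $[\phi,\psi]_{r_B}=[\widehat{B}\phi,\psi]+[\phi,\widehat{B}\psi]=[\phi,\psi]_{\widehat{B}}$, so the two Lie algebras literally coincide. For case (ii), a direct computation from \eqref{rB} gives $r_{B+}=\tfrac{1}{\lambda}\widehat{B}+\iii$ and $r_{B-}=\tfrac{1}{\lambda}\widehat{B}$, whence
\[
[\phi,\psi]_{r_B}=\tfrac{1}{\lambda}[\widehat{B}\phi,\psi]+[\phi,\psi]+\tfrac{1}{\lambda}[\phi,\widehat{B}\psi]=\tfrac{1}{\lambda}[\phi,\psi]_{\widehat{B}}.
\]
Bilinearity of $[\cdot,\cdot]_{\widehat{B}}$ then renders $\tfrac{1}{\lambda}\iii$ a Lie algebra homomorphism, since $[\tfrac{1}{\lambda}\phi,\tfrac{1}{\lambda}\psi]_{\widehat{B}}=\tfrac{1}{\lambda^2}[\phi,\psi]_{\widehat{B}}=\tfrac{1}{\lambda}[\phi,\psi]_{r_B}$, and bijectivity is manifest.

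The main obstacle I anticipate is the careful derivation of the expression for $r_{B-}=-r_{B+}^*$ in the identified setting, and the systematic use of invariance of $\huaS$ to rewrite pairings $\huaS([a,b],c)$ into the form $\huaS(u,\cdot)$: together these steps convert the abstract cobracket identity into the concrete expressions used above. Once these manipulations are in place, both cases reduce to short computations with the explicit form of $\widehat{B}$.
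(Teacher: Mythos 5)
Your proposal is correct and takes essentially the same route as the paper: both compute $r_{B+}$ and $r_{B-}=-r_{B+}^*$ explicitly (here $r_{B+}=\tfrac{1}{\lambda}\widehat{B}+\iii$, $r_{B-}=\tfrac{1}{\lambda}\widehat{B}$ for $\lambda\neq0$, and $r_{B\pm}=\widehat{B}$ for $\lambda=0$), plug them into the standard dual bracket $[\phi,\psi]_{r_B}=\add^*_{r_{B+}\phi}\psi-\add^*_{r_{B-}\psi}\phi$, and compare with the descendent bracket to get $[\phi,\psi]_{r_B}=\tfrac{1}{\lambda}[\phi,\psi]_{\widehat B}$ (respectively equality when $\lambda=0$). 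The only differences are cosmetic: you use invariance of $\huaS$ to recast the coadjoint action as the adjoint action and argue at the operator level, where the paper computes componentwise on $x,y\in\g$, $\xi,\eta\in\g^*$, and you verify $(\g\ltimes_{\ad^*}\g^*)_{\widehat B}=\g_B\ltimes_{\rho_B}\g^*$ directly where the paper cites \cite{LS}.
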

\begin{proof}
By definition, the induced Lie bracket of $r_B$ on the dual space of $\g\ltimes_{\ad^*} \g^*$ takes the formula:  \[[a,b]_{r_B}=\add_{r_{B+} (a)}^*b-\add_{r_{B-} (b)}^*a, \qquad a,b\in \g\oplus  \g^*,\]
 where $\add^*$ is the coadjoint action of $\g\ltimes_{\ad^*} \g^*$ on its dual space.  To be explicit, we have
 \[\add^*_x y=[x,y]_\g,\quad \add_x^*\xi=\ad_x^*\xi,\quad \add_\xi^*x=-\ad_x^*\xi,\quad \add_\xi^*\eta=0,\quad \forall x,y\in \g, \xi,\eta \in \g^*.\]
For (i), we have $r_{B+}(x+\xi)=r_{B-}(x+\xi)=Bx-B^*\xi$, and thus
\begin{eqnarray*}[x+\xi, u+\eta]_{r_B}&=&\add^*_{Bx-B^*\xi} (y+\eta)-\add^*_{By-B^*\eta} (x+\xi)\\
&=&[Bx,y]_\g+\ad_{Bx}^*\eta+\ad_{y}^* B^*\xi-[By,x]_\g-\ad_{By}^*\xi-\ad_x^*B^*\eta.\end{eqnarray*}
Therefore, we have $(\g\oplus \g^*)_{r_B}=\g_{B}\ltimes_{\rho_B}\g^*$.

 For (ii),
by \eqref{rB}, we have
\[r_{B+}(x+\xi)=\frac{1}{\lambda}(Bx+\lambda x-B^*\xi),\qquad r_{B-}(x+\xi)=\frac{1}{\lambda}(Bx-\lambda \xi-B^*\xi).\]
It follows that
\begin{eqnarray*}
{[x,y]_{r_B}}&=&\frac{1}{\lambda}(\add^*_{\lambda x+Bx} y-\add^*_{By} x)=\frac{1}{\lambda}([Bx,y]+[x,By]+\lambda[x,y])=\frac{1}{\lambda}[x,y]_B;\\
{[x,\xi]_{r_B}}&=&\frac{1}{\lambda}(\add^*_{\lambda x+Bx} \xi+\add^*_{\lambda \xi+B^*\xi}x)=\frac{1}{\lambda}(\ad_{\lambda x+Bx}^*\xi- \ad_x^*(\lambda \xi+B^*\xi))=\frac{1}{\lambda}(\ad_{Bx}^*\xi-\ad_x^*B^*\xi);\\
{[\xi,\eta]_{r_B}}&=&\frac{1}{\lambda}(\add^*_{-B^*\xi} \eta+\add^*_{\lambda \eta+B^*\eta} \xi)=0.
\end{eqnarray*}
This implies that $\frac{1}{\lambda}[a,b]_{r_B}=[\frac{1}{\lambda} a,\frac{1}{\lambda} b]_{B\oplus (-\lambda \textbf{1}_{\g^*}-B^*)}$ for $a,b\in \g\oplus \g^*$. So we see that $\frac{1}{\lambda}\textbf{1}_{\g\ltimes_{\ad^*} \g^*}$ is a Lie algebra isomorphism.
\end{proof}

\begin{defi}\label{reflectiononRB}
A {\bf reflection} on a Rota-Baxter Lie algebra $(\g,B)$ of arbitrary weight $\lambda$ is a Lie algebra automorphism $\tau: \g\to \g$ such that
\begin{eqnarray}
\label{homo}\tau[\tau x,y]_\g-[x,\tau y]_\g&=&0,\\
\label{involution}\lambda \tau^2-\lambda \textbf{1}_\g&=&0,\\
\label{ref}\tau \circ B\circ \tau-B+\tau\circ B-B\circ \tau&=&0.
\end{eqnarray}
\end{defi}
\begin{ex}If $\tau: \g\to \g$ is an involutive automorphism  (i.e. $\tau^2=\textbf{1}_\g$) and $B\circ \tau=\tau\circ B$, then it is a reflection on a
Rota-Baxter Lie algebra $(\g,B)$.
\end{ex}

A reflection $\tau$ on a  Rota-Baxter Lie algebra $(\g,B)$ naturally leads to a reflection on the quadratic Rota-Baxter Lie algebra $(\g\ltimes_{\ad^*} \g^*, B\oplus (-\lambda \textbf{1}_{\g^*}-B^*), \huaS)$.
\begin{thm}\label{rrr}
Let $(\g,B)$ be a Rota-Baxter Lie algebra of weight $\lambda$ and $\tau:\g\to \g$ a Lie algebra automorphism. Then $\tilde{\tau}:=\tau\oplus (-\tau^*): \g\ltimes_{\ad^*} \g^*\to \g\ltimes_{\ad^*} \g^*$ is a solution of the classical reflection equation of the Lie bialgebra $(\g\ltimes_{\ad^*} \g^*,r_B)$ given in Proposition \ref{r} if and only if $\tau$ is a reflection on the Rota-Baxter Lie algebra  $(\g,B)$.
\end{thm}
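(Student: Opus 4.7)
The plan is to unwind the classical reflection equation \eqref{reeq} for $\tilde\tau$ directly on the Lie bialgebra $(\g\ltimes_{\ad^*}\g^*,r_B)$ and then to read off what each piece forces on the pair $(\tau,B)$.

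The first step is the key observation that $\tilde\tau$ is automatically skew-adjoint with respect to $\huaS$: using $\huaS(x+\xi,y+\eta)=\langle x,\eta\rangle+\langle\xi,y\rangle$, a one-line computation gives $\huaS(\tilde\tau a,b)+\huaS(a,\tilde\tau b)=0$ for all $a,b\in\g\oplus\g^*$. Since $I_\huaS\colon(\g\oplus\g^*)^*\to\g\oplus\g^*$ is the identity under the natural identification, this says exactly $\tilde\tau^*=-\tilde\tau$, and Remark~\ref{relaxCRE} then reduces the CRE to $(\tilde\tau-\iii)\circ r_{B+}\circ(\tilde\tau+\iii)=0$. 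Plugging in the formula for $r_{B+}$ from Proposition~\ref{r}, and absorbing the harmless scalar $1/\lambda$ when $\lambda\neq 0$, this becomes
\[
(\tilde\tau-\iii)\circ\bigl((B+\lambda\iii_\g)\oplus(-B^*)\bigr)\circ(\tilde\tau+\iii)=0
\]
in both weight cases simultaneously.

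The second step exploits that all three operators above are block-diagonal for the splitting $\g\oplus\g^*$, so the equation decouples into two independent block-identities. The $\g$-block expands to $\tau B\tau+\tau B-B\tau-B+\lambda(\tau^2-\iii_\g)=0$, while the $\g^*$-block is $(\tau^*+\iii_{\g^*})B^*(\tau^*-\iii_{\g^*})=0$, which upon dualization is $(\tau-\iii_\g)B(\tau+\iii_\g)=0$, i.e.\ equation \eqref{ref}. Subtracting the second block identity from the first leaves $\lambda(\tau^2-\iii_\g)=0$, which is \eqref{involution}; conversely, \eqref{ref} and \eqref{involution} together recover both block-identities.

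Finally, a solution of the CRE must be a Lie algebra automorphism of the ambient algebra $\g\ltimes_{\ad^*}\g^*$. Comparing $\tilde\tau[x+\xi,y+\eta]$ with $[\tilde\tau(x+\xi),\tilde\tau(y+\eta)]$ in the semi-direct bracket and using that $\tau$ is already a Lie automorphism of $\g$, the only remaining requirement is the intertwining $\tau^*\circ\ad_x^*=\ad_{\tau x}^*\circ\tau^*$, whose dual is $[x,\tau y]_\g=\tau[\tau x,y]_\g$, that is, \eqref{homo}. Combining these three pieces yields the claimed equivalence. The only conceptual content is the skew-adjointness $\tilde\tau^*=-\tilde\tau$ which collapses the CRE into two block equations; everything else is routine bookkeeping, the main subtlety being to dualize the $\g^*$-block back to an identity on $\g$ in order to recognize \eqref{ref} and \eqref{involution}.
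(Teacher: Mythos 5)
Your proposal is correct and follows essentially the paper's own route: the same skew-adjointness of $\tilde\tau$ with respect to $\huaS$, the same block decomposition of $(\tilde\tau-\iii)\circ\big((B+\lambda\iii_\g)\oplus(-B^*)\big)\circ(\tilde\tau+\iii)$ into a $\g$-block and a dualized $\g^*$-block yielding \eqref{ref} and \eqref{involution}, and the same identification of the automorphism condition for $\tilde\tau$ with \eqref{homo}. The only cosmetic difference is that the paper packages the forward direction through Theorem \ref{quathm}, by showing $\tilde\tau$ is a reflection on the quadratic Rota-Baxter Lie algebra $(\g\ltimes_{\ad^*}\g^*, B\oplus(-\lambda\iii_{\g^*}-B^*),\huaS)$, whereas you unwind the classical reflection equation directly via Remark \ref{relaxCRE}; the underlying computation is identical.
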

\begin{proof}
Let $\tau$ be a reflection on a  Rota-Baxter Lie algebra $(\g,B)$. By Theorem \ref{quathm}, we only need to show
that $\tilde{\tau}$ is a reflection on the quadratic Rota-Baxter Lie algebra $(\g\ltimes_{\ad^*} \g^*, B\oplus (-\lambda\textbf{1}_{\g^*}-B^*),\huaS)$.

First, following from $\tau[\tau x,y]_\g=[x,\tau y]_\g$, we have $\tau\circ \ad_{\tau x}=\ad_x\circ \tau$ and thus
\[[\tilde{\tau} x,\tilde{\tau}\xi]_{\g\ltimes_{\ad^*} \g^*}=-\ad^*_{\tau x}\tau^*\xi=-\tau^*\ad_x^*\xi=\tilde{\tau}[x,\xi]_{\g\ltimes_{\ad^*} \g^*},\quad x\in \g, \xi\in \g^*.\] Together with the fact that $\tau: \g\to \g$ is a Lie algebra automorphism, we see that $\tilde{\tau}$ is a Lie algebra automorphism.

Then, \eqref{tauinveq} is obvious from the definition of $\tilde{\tau}$.

At last, the relation  \eqref{ref1eq} holds for $\tilde{\tau}$ if and only if \eqref{involution} and \eqref{ref} are satisfied. In fact, the left hand side of $\eqref{ref1eq}$ in this case  is equivalent to
\begin{eqnarray*}
&&(\tilde{\tau}-\textbf{1}_{\g\ltimes_{\ad^*} \g^*})\circ (\tilde{B}+\lambda \textbf{1}_{\g\ltimes_{\ad^*} \g^*})\circ (\tilde{\tau}+\textbf{1}_{\g\ltimes_{\ad^*} \g^*})\\ &=&
\big((\tau-\textbf{1}_\g)\circ (B+\lambda \textbf{1}_\g)\circ (\tau+\textbf{1}_\g)\big) \oplus \big((-\tau^*-\textbf{1}_{\g^*})\circ (-\lambda \textbf{1}_{\g^*}-B^*+\lambda \textbf{1}_{\g^*})\circ (-\tau^*+\textbf{1}_{\g^*})\big)\\ &=&
\big((\tau-\textbf{1}_\g)\circ (B+\lambda \textbf{1}_\g)\circ (\tau+\textbf{1}_\g)\big) \oplus -\big((\tau-\textbf{1}_\g)\circ B\circ (\tau+\textbf{1}_\g)\big)^*\\
&=&(\tau \circ B\circ \tau-B+\tau\circ B-B\circ \tau+\lambda \tau^2-\lambda \textbf{1}_\g)\oplus-(\tau \circ B\circ \tau-B+\tau\circ B-B\circ \tau)^*,\end{eqnarray*}
which is equal to zero if and only if \eqref{involution} and \eqref{ref} hold. The other direction is easily obtained from the above argument.
\end{proof}
Theorem \ref{rrr} and Theorem \ref{Theorem 1} together imply the following result.
\begin{cor}\label{corRBcor}
Let $\tau$ be a reflection on the Rota-Baxter Lie algebra $(\g,B)$. Denote by $\h\subset \g$ and $\frkl\subset \g^*$ the fixed point sets of $\tau$ and $-\tau^*$, respectively. Then $\h\oplus \frkl$ is a coideal subalgebra of $(\g\ltimes_{\ad^*} \g^*, r_B)$.
\end{cor}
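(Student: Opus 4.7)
The plan is to combine the two theorems cited in the corollary directly, with only a small bookkeeping step to identify the fixed point set.

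First I would invoke Theorem \ref{rrr} to promote the reflection $\tau$ on $(\g,B)$ to a solution $\tilde{\tau}=\tau\oplus(-\tau^*)$ of the classical reflection equation for the coboundary Lie bialgebra $(\g\ltimes_{\ad^*}\g^*,r_B)$ described in Proposition \ref{r}. Being a solution of \eqref{re} means that the element
\[
C_{\tilde\tau}(r_B)=(\tilde\tau\otimes\tilde\tau)(r_B)+r_B-(\tilde\tau\otimes\iii+\iii\otimes\tilde\tau)(r_B)
\]
vanishes in $(\g\oplus\g^*)^{\otimes 2}$, so it is trivially invariant under the adjoint action of every subspace. Consequently, the hypothesis of Theorem \ref{Theorem 1} is satisfied, and the fixed point set of $\tilde\tau$ is a coideal subalgebra of $(\g\ltimes_{\ad^*}\g^*,r_B)$.

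The second step is to identify this fixed point set. Since $\tilde\tau$ acts as $\tau$ on the $\g$-component and as $-\tau^*$ on the $\g^*$-component, an element $(x,\xi)\in\g\oplus\g^*$ satisfies $\tilde\tau(x,\xi)=(x,\xi)$ if and only if $\tau(x)=x$ and $-\tau^*(\xi)=\xi$, that is, $x\in\h$ and $\xi\in\frkl$. Hence the fixed point set is precisely $\h\oplus\frkl$, which finishes the proof.

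There is no substantive obstacle here: the corollary is a direct consequence of the two results already established, and no extra calculation is required beyond reading off the fixed point set of the product map $\tau\oplus(-\tau^*)$. All the analytical content, namely verifying that $\tilde\tau$ satisfies \eqref{ref1eq} and \eqref{tauinveq} with respect to $(B\oplus(-\lambda\iii_{\g^*}-B^*),\huaS)$, has already been carried out in the proof of Theorem \ref{rrr}.
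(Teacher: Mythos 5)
Your proposal is correct and follows exactly the paper's route: the paper also deduces this corollary directly from Theorem \ref{rrr} combined with Theorem \ref{Theorem 1}, with the fixed point set of $\tilde\tau=\tau\oplus(-\tau^*)$ read off componentwise as $\h\oplus\frkl$. Nothing further is needed.
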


\begin{cor}\label{RBcor} Let $ \tau$ be a reflection on a  Rota-Baxter Lie algebra $(\g,B)$. Denote by $\h\subset \g$ the fixed points of $\tau$.
Then \[\mathrm{Im} (\tau+\iii_\g)\oplus \h^\perp \subset (\g\ltimes_{\ad^*} \g^*)_{B\oplus (-\lambda \iii_{\g^*}-B^*)}=\g_B\ltimes_{\rho_B} \g^*\] is a Rota-Baxter Lie subalgebra. 
\end{cor}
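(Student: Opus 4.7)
The plan is to reduce this statement to the already-established Proposition \ref{h^0}(i) applied to the semi-direct quadratic Rota-Baxter Lie algebra $(\g\ltimes_{\ad^*}\g^*, B\oplus(-\lambda\iii_{\g^*}-B^*), \huaS)$, using the lift $\tilde\tau=\tau\oplus(-\tau^*)$ provided by Theorem \ref{rrr}.

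First, I would invoke Theorem \ref{rrr} to conclude that $\tilde\tau$ is a reflection on $(\g\ltimes_{\ad^*}\g^*, B\oplus(-\lambda\iii_{\g^*}-B^*), \huaS)$, since $\tau$ is a reflection on $(\g,B)$. Proposition \ref{h^0}(i) then says directly that the orthogonal complement, with respect to $\huaS$, of the fixed-point set of $\tilde\tau$ is a Rota-Baxter subalgebra of the descendent Lie algebra $(\g\ltimes_{\ad^*}\g^*)_{B\oplus(-\lambda\iii_{\g^*}-B^*)}=\g_B\ltimes_{\rho_B}\g^*$. So the whole problem reduces to identifying this orthogonal complement explicitly.

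The fixed-point set of $\tilde\tau$ is $\h\oplus\frkl$, where
\[
\frkl=\ker(\tau^*+\iii_{\g^*}).
\]
Since $\xi\in\ker(\tau^*+\iii_{\g^*})$ iff $\langle\xi,(\tau+\iii_\g)x\rangle=0$ for every $x\in\g$, we have $\frkl=\mathrm{Im}(\tau+\iii_\g)^{\perp}$. Using the pairing formula $\huaS(u+\eta,x+\xi)=\langle u,\xi\rangle+\langle \eta,x\rangle$, an element $(u,\eta)\in\g\oplus\g^*$ lies in $(\h\oplus\frkl)^{0}$ exactly when $\langle \eta,x\rangle=0$ for all $x\in\h$ (taking $\xi=0$) and $\langle u,\xi\rangle=0$ for all $\xi\in\frkl$ (taking $x=0$). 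The first condition gives $\eta\in\h^{\perp}$, and the second, combined with $\frkl=\mathrm{Im}(\tau+\iii_\g)^{\perp}$ and finite-dimensional reflexivity, gives $u\in\mathrm{Im}(\tau+\iii_\g)$. Hence
\[
(\h\oplus\frkl)^{0}=\mathrm{Im}(\tau+\iii_\g)\oplus\h^{\perp}.
\]

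Applying Proposition \ref{h^0}(i) to $\tilde\tau$ on the quadratic Rota-Baxter Lie algebra $(\g\ltimes_{\ad^*}\g^*, B\oplus(-\lambda\iii_{\g^*}-B^*), \huaS)$ then yields the desired conclusion. The only non-routine step is the orthogonal-complement calculation, and that is immediate from the explicit form of $\huaS$ together with the description $\frkl=\mathrm{Im}(\tau+\iii_\g)^{\perp}$; the rest of the work is entirely absorbed by Theorem \ref{rrr} and Proposition \ref{h^0}.
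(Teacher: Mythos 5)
Your proposal is correct and follows essentially the same route as the paper: invoke Theorem \ref{rrr} to see that $\tilde\tau=\tau\oplus(-\tau^*)$ is a reflection on the quadratic Rota-Baxter Lie algebra $(\g\ltimes_{\ad^*}\g^*, B\oplus(-\lambda\iii_{\g^*}-B^*),\huaS)$, apply Proposition \ref{h^0}, and identify the $\huaS$-orthogonal complement of the fixed-point set of $\tilde\tau$ as $\mathrm{Im}(\tau+\iii_\g)\oplus\h^\perp$. Your explicit computation of that orthogonal complement is exactly the identification the paper makes in one line, so the two arguments coincide.
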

\begin{proof}
By Theorem \ref{rrr} and Proposition \ref{h^0},   the annihilator space of the fixed point set of
 $\tilde{\tau}=\tau\oplus (-\tau^*)$, which is \[\ker(\tilde{\tau}-\textbf{1}_{\g\ltimes_{\ad^*} \g^*})^\perp=\ker(\tau^*+\textbf{1}_{\g^*})^\perp\oplus \ker(\tau-\textbf{1}_\g)^\perp =\mathrm{Im}(\tau+\textbf{1}_\g)\oplus \h^\perp,\]
 is a Rota-Baxter Lie subalgebra of $\big(\g_B\ltimes_{\rho_B} \g^*, B\oplus (-\lambda \iii_{\g^*}-B^*)\big)$.
 \end{proof}
\emptycomment{
\begin{proof} This result is a corollary of  Proposition \ref{h^0}. Here we give a more straightforward verification. First we show that $B\oplus (-\lambda \textbf{1}_{\g^*}-B^*)$ maps the space $\mathrm{Im} (\tau+\textbf{1}_\g)\oplus \h^\perp$ to itself. By \eqref{ref}, we have $(\tau-\textbf{1}_\g)\circ B\circ (\tau+\textbf{1}_\g)=0$, so
\begin{eqnarray}\label{eqB}
\mathrm{Im}(B
\circ (\tau+\textbf{1}_\g))\subset \h\subset \mathrm{Im}(\tau+\iii_\g).
\end{eqnarray}
Then we have
\[B (\tau+\textbf{1}_\g)(x)=\frac{1}{2}(\tau+\textbf{1}_\g)(B (\tau+\textbf{1}_\g)(x))\in \mathrm{Im}(\tau+\textbf{1}_\g).\]Moreover, applying  \eqref{ref} on any $u\in \h$, we obtain $\tau (Bu)=Bu$ and $Bu\in \h$. Thus, for $\xi\in \h^\perp$, we have
\[\langle (-\lambda \textbf{1}_{\g^*}-B^*)\xi,u\rangle=-\langle \xi, Bu\rangle=0.\]
Thus $(-\lambda \textbf{1}_{\g^*}-B^*)\xi\in \h^\perp$. So $B\oplus (-\lambda \textbf{1}_{\g^*}-B^*)$ restricts to a map on $\mathrm{Im} (\tau+\textbf{1}_\g)\oplus \h^\perp$. Then we show it is a Lie subalgebra.
To see $\mathrm{Im}(\tau+\textbf{1}_\g)\subset \g_B$ is a Lie subalgebra, for $x,y\in \g$, we have
\begin{eqnarray*}
[\tau x+x,\tau y+y]_B&=&[B(\tau+\textbf{1}_\g)x,(\tau+\textbf{1}_\g)y]_\g+[(\tau+\textbf{1}_\g)x,B(\tau+\textbf{1}_\g)y]_\g+\lambda[(\tau+\textbf{1}_\g)x,(\tau+\textbf{1}_\g)y]_\g\\ &=&
\tau[B(\tau+\textbf{1}_\g)x,y]_\g+[B(\tau+\textbf{1}_\g)x,y]_\g+\tau[x,B(\tau+\textbf{1}_\g)y]_\g+[x,B(\tau+\textbf{1}_\g)y]_\g\\ &&+\lambda(\tau[x,y]_\g+\tau[\tau x,y]_\g+[\tau x,y]_\g+[x,y]_\g)\\
&\in& \mathrm{Im}(\tau+\textbf{1}_\g),
\end{eqnarray*}
where we used \eqref{eqB} and \eqref{homo}. Then for $\xi\in \h^\perp$, we show $[\tau x+x,\xi]_{B\oplus (-\lambda \iii_{\g^*}-B^*)}\in \h^\perp$. For $u\in \h$, we have
\begin{eqnarray*}
\langle [\tau x+x,\xi]_B, u\rangle&=&\langle \ad^*_{B(\tau x+x)} \xi-\ad_{\tau x+x}^* B^*\xi, u\rangle\\ &=&-\langle \xi, [B(\tau+\textbf{1}_\g) x,u]-B[\tau x+x,u]\rangle\\ &=&
-\langle \xi, [B(\tau+\textbf{1}_\g) x,u]-B(\tau+\textbf{1}_\g)[x,u]\rangle\\ &=&0+0=0,
\end{eqnarray*}
where we used \eqref{eqB} and the fact that $\h$ is a Lie subalgebra.
 In summary, we see that $\mathrm{Im} (\tau+\textbf{1}_\g)\oplus \h^\perp$ is a Rota-Baxter Lie subalgebra.\end{proof}}
\begin{rmk}
If the reflection is an involution, then we have $\frkl=\h^\perp$ and $\mathrm{Im}(\tau+\iii_\g)=\h$. Therefore,  $\h\oplus \h^\perp$ is a coideal subalgebra of $(\g\ltimes_{\ad^*} \g^*, r_B)$ and a Rota-Baxter Lie subalgebra of $(\g_B\ltimes_{\rho_B} \g^*, B\oplus (-\lambda \iii_{\g^*}-B^*))$.
\end{rmk}

\begin{ex}
Any Lie algebra $\g$ with the negative identity map $-\iii_\g: \g\to \g$ is a Rota-Baxter Lie algebra of weight $1$. By Proposition \ref{r}, it generates a factorizable Lie bialgebra $(\g\ltimes_{\ad^*} \g^*, r_{-\iii_\g})$ with $r_{-\iii_\g}=\sum_{i} e_i\otimes e_i^*$, where $\{e_1,\cdots,e_n\}$ is a basis of $\g$ and $\{e_1^*,\cdots e_n^*\}$ is its dual basis. In fact, $(\g\ltimes_{\ad^*} \g^*, r_{-\iii_\g})$ is the double Lie bialgebra of the Lie bialgebra $(\g,\delta=0)$. In this situation, we have
\[(\g\oplus \g^*)_{r_{-\iii_\g}}=\g_{-\iii_\g}\ltimes_{\rho_{-\iii_\g}} \g^*=\overline{\g}\oplus \g^*,\]
where $  \overline{\g}=(\g,-[\cdot,\cdot]_\g)$ is the opposite Lie algebra.

Reflections on the Rota-Baxter Lie algebra $(\g,-\iii_\g)$ are precisely involutive automorphisms $\tau: \g\to \g$. Given an involutive automorphism $\tau$, the map $\tau\oplus (-\tau^*)$ is a solution of the classical reflection equation in the Lie bialgebra $(\g\ltimes_{\ad^*} \g^*, r_{-\iii_\g})$. Denote by $\h$ the fixed point set of $\tau$. Then
$\h\oplus \h^\perp$ is a coideal subalgebra of $(\g\ltimes_{\ad^*} \g^*, r_{-\iii_\g})$ and a Rota-Baxter Lie subalgebra of the direct sum Rota-Baxter Lie algebra $(\overline{\g}\oplus \g^*, -\iii_{\g}\oplus 0_{\g^*})$.

In particular, if $\g$ is a complex Lie algebra and the involution $\tau: \g\to \g$ is given by$ x\mapsto \overline{x}$, then $\h$ is the real form of $\g$.
\end{ex}

\section{Reflections on relative Rota-Baxter Lie algebras}\label{sec:rel}

In this section, we study reflections on relative Rota-Baxter Lie algebras, by which we construct solutions of the classical reflection equation in the Lie bialgebras induced by the relative Rota-Baxter Lie algebras.

\subsection{Reflections on relative Rota-Baxter Lie algebras of weight $0$}

Let $\g$ be a Lie algebra  and $\rho: \g\to \End(V)$ be a representation. A linear map $T: V\to \g$ is called a {\bf relative Rota-Baxter operator} (of weight $0$) if it satisfies
\[T(\rho(Tu)v-\rho(Tv) u)=[Tu,Tv]_\g,\quad \forall u,v\in V.\]
A {\bf relative Rota-Baxter Lie algebra} consists of a Lie algebra $\g$, a representation $\rho: \g\to \End(V)$ and a relative Rota-Baxter operator $T: V\to \g$. We denote by $(V\xrightarrow{T}\g,\rho)$ a relative Rota-Baxter Lie algebra.

Given  a relative Rota-Baxter  Lie algebra  $(V\xrightarrow{T}\g,\rho)$, it is well-known that $V$ is associated with a {\bf descendent Lie algebra} structure $[\cdot,\cdot]_T:\wedge^2V\to V$ given by
\[[u,v]_T=\rho(Tu)(v)-\rho(Tv)(u),\quad \forall u, v \in V.\]
Denote this Lie algebra by $V_T$.
Moreover, there is a Lie algebra representation $\theta: V_T\to \End(\g)$ of $V_T$ on $\g$ given by \[\theta(u)(x)=[Tu,x]_\g+T(\rho(x)u),\quad \forall u\in V,x\in \g.\]
We thus have the semi-direct product Lie algebra $V_T\ltimes_{\theta^*} \g^*$. The representation $\theta$ plays fundamental roles in the study of the cohomology theory of relative Rota-Baxter operators. See \cite{TBGS} for more details.

Moreover, for a relative Rota-Baxter Lie algebra $(V\xrightarrow{T}\g,\rho)$, treating $T\in V^*\otimes \g$, it was shown in \cite{Bai} that the skew-symmetrization $r_T=T-T^{21}\in \wedge^2(\g\oplus V^*)$ of $T$ is a solution of the classical Yang-Baxter equation in the semi-direct product Lie algebra $\g\ltimes_{\rho^*} V^*$. This leads to a triangular Lie bialgebra $(\g\ltimes_{\rho^*} V^*, r_T)$.

\begin{rmk}
We give an alternative approach to understand the solution $r_T=T-T^{21}$ of the classical Yang-Baxter equation for the Lie algebra $\g\ltimes_{\rho^*} V^*$ given by a relative Rota-Baxter Lie algebra $(V\xrightarrow{T}\g,\rho)$.
Introduce
 \[\overline{T}: \g^*\oplus V\to \g\ltimes_{\rho^*} V^*,\quad \overline{T}(\xi+u)=-T^*\xi+Tu,\quad \forall \xi\in \g^*, u\in V.\]
 Then $\overline{T}$ is a skew-symmetric relative Rota-Baxter operator on the Lie algebra $\g\ltimes_{\rho^*} V^*$ with respect to the coadjoint action on its dual $\g^*\oplus V$. Therefore,  the element $r\in \wedge^2 (\g\ltimes_{\rho^*} V^*)$,  determined by  $r_+:=\overline{T}$,  yields a skew-symmetric solution  of the classical Yang-Baxter equation, specifically  $r=T-T^{21}$.
\end{rmk}

  \begin{pro}\label{weight0case}
 The induced Lie bracket $[\cdot,\cdot]_{r_T}$ on the dual space $V\oplus \g^*$ in the triangular Lie bialgebra $(\g\ltimes_{\rho^*} V^*, r_T)$ is the same with the one on $V_T\ltimes_{\theta^*} \g^*$. In other words, the pair  $(\g\ltimes_{\rho^*} V^*, V_T\ltimes_{\theta^*} \g^*)$ of Lie algebras composes a triangular Lie bialgebra.
  \end{pro}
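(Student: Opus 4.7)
The plan is to directly compute the Lie bracket $[\cdot,\cdot]_{r_T}$ on the dual space $V\oplus\g^*$ and verify, case by case on the three types of generator pairs, that it agrees with the semi-direct product bracket on $V_T\ltimes_{\theta^*}\g^*$.

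First I would fix the identification of the dual of $\g\ltimes_{\rho^*}V^*$ with $V\oplus\g^*$ via the pairing $\langle u+\xi,\,x+\alpha\rangle=\langle\alpha,u\rangle+\langle\xi,x\rangle$. Writing $T=\sum_i\alpha_i\otimes x_i\in V^*\otimes\g$ so that $r_T=T-T^{21}$, the formula $r_{T+}(a)=r_T(a,\cdot)$ unwinds immediately to $r_{T+}(u)=Tu$ for $u\in V$ and $r_{T+}(\xi)=-T^*\xi$ for $\xi\in\g^*$, which recovers the operator $\overline{T}$ of the preceding remark. Since $r_T$ is skew-symmetric, $r_{T-}=r_{T+}$, so the induced dual bracket reads
\[
[a,b]_{r_T}=\add^*_{r_{T+}(a)}b-\add^*_{r_{T+}(b)}a,\qquad a,b\in V\oplus\g^*.
\]

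Next I would read off the coadjoint action of $\g\ltimes_{\rho^*}V^*$ on its dual $V\oplus\g^*$ directly from the semidirect bracket $[x+\alpha,y+\beta]=[x,y]_\g+\rho^*(x)\beta-\rho^*(y)\alpha$. A routine calculation gives $\add^*_x u=\rho(x)u$, $\add^*_x\xi=\ad_x^*\xi$, $\add^*_\alpha\xi=0$, and the mixed formula $\langle\add^*_\alpha u,\,y+\beta\rangle=-\langle\alpha,\rho(y)u\rangle$, the last of which shows that $\add^*_\alpha u$ lands entirely inside the $\g^*$-summand. Substituting these into the displayed formula for $[a,b]_{r_T}$ in the three cases then reduces to a short direct verification: for $u,v\in V$ one obtains $\rho(Tu)v-\rho(Tv)u=[u,v]_T$; for $\xi,\eta\in\g^*$ both terms vanish and so $[\xi,\eta]_{r_T}=0$; and for $u\in V$, $\xi\in\g^*$, pairing $[u,\xi]_{r_T}$ against an arbitrary $y\in\g$ yields
\[
\langle[u,\xi]_{r_T},\,y\rangle=-\langle\xi,[Tu,y]_\g\rangle-\langle\xi,T(\rho(y)u)\rangle=-\langle\xi,\theta(u)y\rangle=\langle\theta^*(u)\xi,\,y\rangle,
\]
using the defining identity $\theta(u)y=[Tu,y]_\g+T(\rho(y)u)$. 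These three identifications match the brackets of $V_T\ltimes_{\theta^*}\g^*$ term by term, which proves the Lie algebra part of the statement.

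The step requiring most care is the mixed case, where one has to recognise the two-term expression $[Tu,y]_\g+T(\rho(y)u)$ as precisely $\theta(u)y$, and where the sign convention $\langle\rho^*(x)\alpha,v\rangle=-\langle\alpha,\rho(x)v\rangle$ must be kept straight in order to place $\add^*_\alpha u$ in the correct summand of $V\oplus\g^*$; everything else is bookkeeping. The final assertion that $(\g\ltimes_{\rho^*}V^*,\,V_T\ltimes_{\theta^*}\g^*)$ is a triangular Lie bialgebra is then automatic: Bai's theorem, recalled just before the proposition, already guarantees that $r_T$ solves the classical Yang-Baxter equation, so the cobracket $\delta(\cdot)=[\cdot,r_T]$ endows $\g\ltimes_{\rho^*}V^*$ with a triangular Lie bialgebra structure whose dual Lie algebra is, by what we have just computed, exactly $V_T\ltimes_{\theta^*}\g^*$.
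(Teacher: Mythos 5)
Your proposal is correct and follows essentially the same route as the paper: both compute the dual bracket via $[a,b]_{r_T}=\add^*_{r_{T+}a}b-\add^*_{r_{T+}b}a$ with $r_{T+}(u+\xi)=Tu-T^*\xi$, work out the coadjoint action of $\g\ltimes_{\rho^*}V^*$ on $V\oplus\g^*$, and identify the three cases with $[\cdot,\cdot]_T$, $\theta^*$, and the abelian bracket on $\g^*$. Your sign bookkeeping in the mixed case and the appeal to Bai's result for the CYBE part match the paper's argument.
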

  \begin{proof}
We show that the Lie bracket $[a,b]_{r_T}=\add_{r_{T+} a}^*b-\add_{r_{T+} b}^*a$ on the dual space $V\oplus \g^*$ is the same with the one on $V_T\ltimes_{\theta^*} \g^*$. Here $\add^*$ is the coadjoint action of $\g\ltimes_{\rho^*} V^*$ on its dual and $r_{T+}(a)=r_T(a,\cdot)$. Indeed, by definition, for $u\in V$ and $\xi\in \g^*$, we have $r_{T+}(u+\xi)=Tu-T^*\xi$. Therefore, for $x+\alpha\in \g\oplus V^*$, one obtains
\begin{eqnarray*}
\langle [u,v+\xi]_{r_T}, x+\alpha\rangle &=&\langle \add_{Tu}^* (v+\xi)-\add_{Tv-T^*\xi}^* u, x+\alpha\rangle \\ &=&
-\langle  v+\xi, [Tu, x+\alpha]_{\g\ltimes_{\rho^*} V^*}\rangle +\langle  u,  [Tv-T^*\xi, x+\alpha]_{\g\ltimes_{\rho^*} V^*}\rangle\\ &=&
-\langle v, \rho^*(Tu)\alpha\rangle +\langle u, \rho^*(Tv)\alpha\rangle-\langle \xi, [Tu,x]_\g\rangle+\langle u, \rho^*(x) T^*\xi\rangle\\ &=&
\langle \rho(Tu)(v)-\rho(Tv)(u), \alpha\rangle-\langle \xi,[Tu,x]_\g+T(\rho(x)u)\rangle\\ &=&
\langle \rho(Tu)(v)-\rho(Tv)(u), \alpha\rangle+\langle \theta^*(u)(\xi),x\rangle,\end{eqnarray*}
which implies that
\[[u,v]_{r_T}=[u,v]_T\in V,\qquad [u,\xi]_{r_T}=\theta^*(u)(\xi)\in \g^*.\]
 For all $\xi,\eta\in \g^*$, we have
\begin{eqnarray*}
\langle [\xi,\eta]_{r_T},x+\alpha\rangle&=&\langle -\add_{T^*\xi}^*\eta+\add_{T^*\eta}^* \xi, x+\alpha\rangle\\
&=&\langle\eta, [T^*\xi,x+\alpha]_{\g\ltimes_{\rho^*} V^*}\rangle-\langle \xi,[T^*\eta,x+\alpha]_{\g\ltimes_{\rho^*} V^*}\rangle\\
&=&0,
\end{eqnarray*}
 which implies that  $[\xi,\eta]_{r_T}=0$. Therefore, the Lie bracket $[\cdot,\cdot]_{r_T}$ on the dual $V\oplus \g^*$ in the triangular Lie bialgebra $(\g\ltimes_{\rho^*} V^*, r_T)$ is the same with the semidirect product $V_T\ltimes_{\theta^*} \g^*$.
 \end{proof}

\begin{defi}\label{refzero}
A {\bf reflection} on a relative Rota-Baxter Lie algebra $(V\xrightarrow{T}\g,\rho)$ consists  of a linear isomorphism $\tau:V\to V$ and a Lie algebra automorphism $\sigma: \g\to \g$ such that
\begin{eqnarray}
\label{eq1}\tau(\rho(\sigma x)v)-\rho(x)(\tau v)&=&0,\quad \forall x\in \g, v\in V,\\
\label{eq2}\sigma\circ T\circ \tau-T+\sigma\circ T-T\circ \tau&=&0.
\end{eqnarray}
\end{defi}

\begin{thm}\label{thm:rRB-sol}
Let  $(\tau,\sigma)$ be a reflection on  a relative Rota-Baxter Lie algebra $(V\xrightarrow{T}\g,\rho)$. Then $\tilde{\tau}:=\sigma\oplus (-\tau^*): \g\ltimes_{\rho^*} V^*\to  \g\ltimes_{\rho^*} V^*$ is a solution of the classical reflection equation in the triangular Lie bialgebra $(\g\ltimes_{\rho^*} V^*,r_T)$, where $r_T=T-T^{21}$.
 \end{thm}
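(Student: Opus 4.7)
The plan is to verify the two things demanded by Definition of a solution of the classical reflection equation, namely that $\tilde{\tau}=\sigma\oplus (-\tau^*)$ is a Lie algebra automorphism of $\g\ltimes_{\rho^*} V^*$, and that it satisfies the classical reflection equation \eqref{re} for the $r$-matrix $r_T=T-T^{21}$. Rather than work with \eqref{re} itself, I would use the equivalent operator form \eqref{reeq}, namely
\[
(\tilde{\tau}-\textbf{1}_{\g\ltimes_{\rho^*} V^*})\circ r_{T+}\circ (\tilde{\tau}^*-\textbf{1}_{\g^*\oplus V})=0,
\]
together with the fact (pointed out in the remark just after Proposition \ref{weight0case}'s preceding paragraph) that
\[
r_{T+}:\g^*\oplus V\longrightarrow \g\ltimes_{\rho^*} V^*,\qquad r_{T+}(\xi+u)=Tu-T^*\xi.
\]

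First I would verify that $\tilde{\tau}$ is a Lie algebra automorphism. The only non-trivial check concerns the mixed brackets: one must show $(-\tau^*)\rho^*(x)\alpha=\rho^*(\sigma x)(-\tau^*)\alpha$ for $x\in\g,\alpha\in V^*$, i.e.\ $\tau^*\rho^*(x)=\rho^*(\sigma x)\tau^*$. Taking transposes, this is exactly $\rho(x)\tau=\tau\rho(\sigma x)$, which is condition \eqref{eq1}. Combined with the hypothesis that $\sigma$ is a Lie algebra automorphism of $\g$, this gives that $\tilde{\tau}$ preserves the semidirect product bracket.

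Second, I would compute the composition in \eqref{reeq} block-wise. Since $\tilde{\tau}^*=\sigma^*\oplus (-\tau)$, one has
\[
(\tilde{\tau}^*-\textbf{1})(\xi+u)=(\sigma^*-\textbf{1}_{\g^*})\xi-(\tau+\textbf{1}_V)u,
\]
and therefore
\[
r_{T+}\circ(\tilde{\tau}^*-\textbf{1})(\xi+u)=-T(\tau+\textbf{1}_V)u-T^*(\sigma^*-\textbf{1}_{\g^*})\xi.
\]
Applying $\tilde{\tau}-\textbf{1}=(\sigma-\textbf{1}_\g)\oplus(-\tau^*-\textbf{1}_{V^*})$ and separating the $\g$- and $V^*$-components, the vanishing of the whole expression reduces to the two scalar identities
\[
(\sigma-\textbf{1}_\g)\circ T\circ(\tau+\textbf{1}_V)=0,\qquad (\tau^*+\textbf{1}_{V^*})\circ T^*\circ(\sigma^*-\textbf{1}_{\g^*})=0.
\]
The second is just the transpose of the first, so only one identity has to be checked; expanding it reads $\sigma T\tau+\sigma T-T\tau-T=0$, which is precisely \eqref{eq2}.

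No step looks genuinely obstructive; the main care is bookkeeping of signs and transposes (the $-\tau^*$ factor combines with the $-T^*$ in $r_{T+}$ to produce the correct sign so that both diagonal blocks reduce to the same condition \eqref{eq2}). I would also note in passing that, via Proposition \ref{weight0case}, the Lie bialgebra structure on the dual side is $V_T\ltimes_{\theta^*}\g^*$, so Corollary-type consequences (coideal subalgebras from fixed-point sets) follow immediately from Theorem \ref{Theorem 1}, in complete parallel with Corollary \ref{corRBcor}.
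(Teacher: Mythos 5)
Your proposal is correct and follows essentially the same route as the paper's proof: verify that $\tilde{\tau}$ is an automorphism using \eqref{eq1} (plus $\sigma$ being an automorphism), then check the operator form \eqref{reeq} block-wise with $r_{T+}(\xi+u)=Tu-T^*\xi$, so that both blocks reduce (one as the adjoint of the other) to $(\sigma-\textbf{1}_\g)\circ T\circ(\tau+\textbf{1}_V)=0$, i.e.\ \eqref{eq2}. Your sign and transpose bookkeeping matches the paper's computation exactly.
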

\begin{proof}
Following from \eqref{eq1} and the fact that $\sigma$ is a Lie algebra automorphism, we see that
 $\tilde{\tau}$ is a Lie algebra automorphism. It is left to show the map
 \[(\tilde{\tau}-\textbf{1}_{\g\oplus V^*})\circ r_{T+}\circ (\tilde{\tau}^*-\textbf{1}_{\g^*\oplus V}): \g^*\oplus V\to \g\oplus V^*\]
 is zero. In fact, we have
 \begin{eqnarray*}
&& (\tilde{\tau}-\textbf{1}_{\g\oplus V^*})\circ r_{T+}\circ (\tilde{\tau}^*-\textbf{1}_{\g^*\oplus V})\\
 &=&\big((\sigma-\textbf{1}_\g)\oplus (-\tau^*-\textbf{1}_{V^*})\big)\circ \big((T\oplus (-T^*))\circ (-\tau-\textbf{1}_V)\oplus (\sigma^*-\textbf{1}_{\g^*})\big)\\ &=&\big(-(\sigma-\textbf{1}_\g)\circ T\circ (\tau+\textbf{1}_V)\big)\oplus \big((\sigma-\textbf{1}_\g)\circ T\circ (\tau+\textbf{1}_V)\big)^*\\ &=&0,
    \end{eqnarray*}
 where we used \eqref{eq2} in the last equation.
\end{proof}

\begin{cor}\label{relweight0}
Let  $(\tau,\sigma)$ be a reflection on  a relative Rota-Baxter Lie algebra $(V\xrightarrow{T}\g,\rho)$. Denote by $\h$ and $\frkl$ the fixed point sets of $\sigma$ and $-\tau^*$, respectively. Then $\h\oplus \frkl$ is a coideal subalgebra of $(\g\ltimes_{\rho^*} V^*, r_T)$.
\end{cor}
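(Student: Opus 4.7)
The plan is to combine Theorem \ref{thm:rRB-sol} with the ``if'' direction of Theorem \ref{Theorem 1}. By Theorem \ref{thm:rRB-sol} already proved, the map $\tilde\tau=\sigma\oplus(-\tau^*)$ is a Lie algebra automorphism of $\g\ltimes_{\rho^*}V^*$ which satisfies the classical reflection equation \eqref{re} for the triangular Lie bialgebra $(\g\ltimes_{\rho^*}V^*,r_T)$. In particular, the associated tensor satisfies $C_{\tilde\tau}(r_T)=0\in (\g\oplus V^*)^{\otimes 2}$, so a fortiori $C_{\tilde\tau}(r_T)$ is invariant under the action of any Lie subalgebra of $\g\ltimes_{\rho^*}V^*$.

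Next I would identify the fixed point set of $\tilde\tau$. Since $\tilde\tau$ acts block-diagonally on $\g\oplus V^*$ (as $\sigma$ on the $\g$ summand and as $-\tau^*$ on the $V^*$ summand), an element $x+\alpha\in\g\oplus V^*$ is fixed by $\tilde\tau$ if and only if $\sigma(x)=x$ and $(-\tau^*)(\alpha)=\alpha$, i.e.\ if and only if $x\in\h$ and $\alpha\in\frkl$. Hence the fixed point set of $\tilde\tau$ is exactly $\h\oplus\frkl$.

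Finally, applying Theorem \ref{Theorem 1} to the coboundary (in fact triangular) Lie bialgebra $(\g\ltimes_{\rho^*}V^*,r_T)$ with the automorphism $\tilde\tau$, the $\h\oplus\frkl$-invariance of $C_{\tilde\tau}(r_T)=0$ immediately yields that the fixed point set $\h\oplus\frkl$ is a coideal subalgebra of $(\g\ltimes_{\rho^*}V^*,r_T)$, which is the desired conclusion. No delicate step is involved: the work has been done in Theorem \ref{thm:rRB-sol}, and the present corollary is essentially a packaging of that result through Schrader's criterion.
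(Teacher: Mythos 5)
Your argument is correct and matches the paper's own proof: both combine Theorem \ref{thm:rRB-sol} (so that $C_{\tilde\tau}(r_T)=0$, hence trivially invariant) with Schrader's criterion in Theorem \ref{Theorem 1}, after noting that the fixed point set of $\tilde\tau=\sigma\oplus(-\tau^*)$ is exactly $\h\oplus\frkl$. The paper merely phrases the conclusion via the equivalent statement that $\frkl^\perp\oplus\h^\perp$ is a Lie subalgebra of the dual Lie algebra, which is the same content.
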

\begin{proof}
Combining  Theorem \ref{thm:rRB-sol} and  Theorem \ref{Theorem 1}, we see that the $\frkl^\perp\oplus \h^\perp \subset (\g\oplus V^*)_{r_{T}}$ is a Lie subalgebra so that $\h\oplus \frkl$ is a coideal subalgebra.
\end{proof}

 \begin{thm}\label{perpzero}
Let  $(\tau,\sigma)$ be a reflection on  a relative Rota-Baxter Lie algebra $(V\xrightarrow{T}\g,\rho)$. Denote by $\h$ the space of fixed points of $\sigma$. Then
\[\mathrm{Im}(\tau+\iii_V)\oplus \h^\perp \subset V_T\ltimes_{\theta^*} \g^*\]
is a Lie subalgebra.
\end{thm}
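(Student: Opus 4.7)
The plan is to derive this result as a dual counterpart of Corollary \ref{relweight0}, supplemented by a direct verification sketch in case a self-contained argument is preferred.

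First I would identify $\mathrm{Im}(\tau+\iii_V)\oplus \h^\perp$ as the annihilator in $V\oplus \g^*$ of the coideal subalgebra $\h\oplus \frkl\subset \g\ltimes_{\rho^*} V^*$ from Corollary \ref{relweight0}. Indeed, $\frkl = \ker(\tau^*+\iii_{V^*})$, and the annihilator of $\ker(\tau^*+\iii_{V^*})\subset V^*$ inside $V$ equals $\mathrm{Im}\bigl((\tau^*+\iii_{V^*})^*\bigr) = \mathrm{Im}(\tau+\iii_V)$, while the $\g$-side annihilator of $\h$ in $\g^*$ is $\h^\perp$ by definition. Via the identification in Proposition \ref{weight0case} of the dual Lie algebra of $(\g\ltimes_{\rho^*} V^*, r_T)$ with $V_T\ltimes_{\theta^*} \g^*$, the annihilator of a coideal subalgebra is automatically a Lie subalgebra, which gives the conclusion at once.

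For a direct verification, I would first record two consequences of the reflection axioms. Rewriting \eqref{eq2} as $(\sigma-\iii_\g)\circ T\circ (\tau+\iii_V)=0$ yields $T(\tau+\iii_V)V\subset \h$. Specializing \eqref{eq1} to $x\in \h$ (where $\sigma x = x$) gives $\tau\circ \rho(x)=\rho(x)\circ \tau$, so $\rho(x)$ commutes with $\tau+\iii_V$. Closure of $\mathrm{Im}(\tau+\iii_V)$ under $[\cdot,\cdot]_T$ is then immediate: for $u,v\in V$, since $T(\tau+\iii_V)u\in \h$, the element $\rho\bigl(T(\tau+\iii_V)u\bigr)\bigl((\tau+\iii_V)v\bigr)=(\tau+\iii_V)\rho\bigl(T(\tau+\iii_V)u\bigr)v$ lies in $\mathrm{Im}(\tau+\iii_V)$, and similarly for the symmetric term. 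For the cross bracket, it suffices to check that $\theta\bigl((\tau+\iii_V)u\bigr)(\h)\subset \h$, whence the $\theta^*$-action preserves $\h^\perp$. For $y\in \h$, the bracket $[T(\tau+\iii_V)u, y]_\g$ lies in $\h$ because $\h$ is a Lie subalgebra containing $T(\tau+\iii_V)u$, and $T\bigl(\rho(y)(\tau+\iii_V)u\bigr)=T(\tau+\iii_V)\rho(y)u$ lies in $\h$ by the image property. Finally, $\h^\perp\subset \g^*$ is abelian inside the semi-direct product, so its self-bracket vanishes.

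The conceptual heart of the argument, on either route, is the annihilator identification $\frkl^\perp = \mathrm{Im}(\tau+\iii_V)$; once this is in hand, the result slots directly into Corollary \ref{relweight0}. The only genuine computation in the direct approach is the verification that $\theta\bigl((\tau+\iii_V)u\bigr)$ preserves $\h$, which I expect to be the main (mild) obstacle, requiring the combination of the commutation $\rho(\h)(\tau+\iii_V) = (\tau+\iii_V)\rho(\h)$ with the image condition $T(\tau+\iii_V)V\subset \h$.
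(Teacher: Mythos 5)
Your proposal is correct and follows essentially the same route as the paper: the paper likewise first identifies $\mathrm{Im}(\tau+\iii_V)\oplus\h^\perp$ as the annihilator $\ker(\tilde{\tau}-\iii)^\perp$ of the fixed point set of $\sigma\oplus(-\tau^*)$ and invokes Theorem \ref{thm:rRB-sol} and Corollary \ref{relweight0}, and then gives the same direct verification using $(\sigma-\iii_\g)\circ T\circ(\tau+\iii_V)=0$ (so $\mathrm{Im}(T\circ(\tau+\iii_V))\subset\h$) together with the commutation $\tau\circ\rho(x)=\rho(x)\circ\tau$ for $x\in\h$ coming from \eqref{eq1}. Both your closure computation in $V_T$ and your check that $\theta((\tau+\iii_V)u)$ preserves $\h$ (hence $\theta^*$ preserves $\h^\perp$) match the paper's argument.
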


\begin{proof}
Note that the annihilator space of the fixed point set of $\sigma\oplus (-\tau^*)$ in Theorem \ref{thm:rRB-sol} is
\[\ker(\tilde{\tau}-\iii_{\g\ltimes_{\rho^*} V^*})^\perp=\ker(-\tau^*-\iii_{V^*})^\perp\oplus \ker(\sigma-\iii_\g)^\perp=\mathrm{Im}(\tau+\iii_V)\oplus \h^\perp.\]
It is a Lie subalgebra by Theorem \ref{thm:rRB-sol} and Corollary \ref{relweight0}. But we could also prove it directly as follows.

First, by \eqref{eq2}, for any $u\in V$, we have $\sigma T(\tau u+u)=Tu+T\tau u$. This implies that
\begin{eqnarray}\label{IMT}
\mathrm{Im}(T\circ (\tau+\textbf{1}_V))\subset \h.
\end{eqnarray}
For $\xi\in \h^\perp$, we check that $\theta^*(\tau u+u)(\xi)\in \h^\perp$. Indeed, for $x\in \h$, by \eqref{eq1}, we have
\begin{eqnarray*}
\langle \theta^*(\tau u+u)(\xi),x\rangle&=&-\langle \xi, [T(\tau u+u),x]+T(\rho(x)(\tau u+u))\rangle\\
&=&-\langle \xi, [T(\tau u+u),x]+T(\tau(\rho(x)u)+\rho(x)u)\rangle\\ &=&0,
\end{eqnarray*}
where we have used the fact that  $\h\subset \g$ is a Lie subalgebra. This proves that $\theta^*(\tau u+u)(\xi)\in \h^\perp$.
It suffices to show $ \mathrm{Im}(\tau+\textbf{1}_V)\subset V_T$ is a Lie subalgebra. Indeed, for any $u, v\in V$, by \eqref{IMT} and \eqref{eq1}, we have
\begin{eqnarray*}
&&[\tau u+u,\tau v+v]_T\\&=&\rho(T(\tau u+u))(\tau v+v)-\rho(T(\tau v+v))(\tau u+u)\\ &=&\tau\big(\rho(\sigma T(\tau u+u))v\big)+\rho(T(\tau u+u))(v)-\tau\big(\rho(\sigma T(\tau v+v))u\big)-\rho(T(\tau v+v))(u)\\ &=&\tau\big(\rho(T(\tau u+u))v\big)+\rho(T(\tau u+u))(v)-\tau\big(\rho(T(\tau v+v))u\big)-\rho(T(\tau v+v))(u)\\ &=& \tau w+w,
\end{eqnarray*}
where $w=\rho(T(\tau u+u))(v)-\rho(T(\tau v+v))(u)$. This completes the proof.
\end{proof}

It is well-known that pre-Lie algebras are the underlying algebraic structures of relative Rota-Baxter operators of weight 0. At the end of this subsection, we use pre-Lie algebras to construct solutions of the classical reflection equation. Recall that  a {\bf pre-Lie algebra} is a pair $(\g,\triangleright)$, where $\g$ is a vector space and  $\triangleright:\g\otimes \g\longrightarrow \g$ is a bilinear multiplication
satisfying that for all $x,y,z\in \g$, the associator
$$(x,y,z):=(x\triangleright y)\triangleright z-x\triangleright(y\triangleright z)
$$
is symmetric in $x,y$, that is,
\vspace{-.1cm}
$$(x,y,z)=(y,x,z)\;\;{\rm or}\;\;{\rm
equivalently,}\;\;(x\triangleright y)\triangleright z-x\triangleright(y\triangleright z)=(y\triangleright x)\triangleright
z-y\triangleright(x\triangleright z).$$

Associated to a pre-Lie algebra $(\g,\triangleright)$, we have the subadjacent Lie algebra $\g_{\triangleright}$ with Lie bracket given by $[x,y]_\triangleright=x\triangleright y-y\triangleright x$, which has a representation on $\g$ given by
\[L_\triangleright: \g_{\triangleright}\to \End(\g),\qquad {L_\triangleright}_xy=x\triangleright y.\]
Moreover, the identity map $\iii_\g:\g\to\g_\triangleright$ is a relative Rota-Baxter operator on the subadjacent Lie algebra $\g_\triangleright$ with respect to the representation $L_\triangleright$.

Conversely, given a relative Rota-Baxter Lie algebra $(V\xrightarrow{T} \g,\rho)$, we have a pre-Lie structure on $V$ given by
$u\triangleright_T v:=\rho(Tu)v$.

Let $(\g,\triangleright)$ be a pre-Lie algebra. Since the identity map $\iii_\g:\g\to\g_\triangleright$ is a relative Rota-Baxter operator, it follows that
\begin{eqnarray}\label{eq:rrr}
  r_{\iii_\g}:=\sum^{n}_{i=1}(e_i\otimes e_i^*-e_i^*\otimes e_i)
\end{eqnarray}
is a skew-symmetric solution of the classical Yang-Baxter equation in the semi-direct product  Lie algebra $\g_\triangleright\ltimes_{L_\triangleright^*}\g^*,$ where $\{e_1,\cdots,e_n\}$ is a basis of $\g$ and $\{e^*_1,\cdots,e^*_n\}$ is its dual basis (\cite{Bai}).

\begin{pro}
Let $(\g,\triangleright)$ be a pre-Lie algebra and $\sigma: \g\to \g$ an involutive pre-Lie algebra automorphism. Then $(\sigma,\sigma)$ is a reflection on the corresponding relative Rota-Baxter Lie algebra $(\g \xrightarrow{\iii_{\g}} \g_{\triangleright}, L_\triangleright)$, and $\sigma\oplus (-\sigma^*)$ is a solution of the classical reflection equation in the triangular Lie bialgebra $(\g_\triangleright\ltimes_{L_\triangleright^*}\g^*,r_{\iii_\g})$, where $r_{\iii_\g}$ is given by \eqref{eq:rrr}.
\end{pro}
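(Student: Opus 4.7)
The plan is to reduce the statement to a direct application of Theorem \ref{thm:rRB-sol}, so it suffices to check that the pair $(\sigma,\sigma)$ satisfies the two axioms \eqref{eq1} and \eqref{eq2} of Definition \ref{refzero} for the relative Rota-Baxter Lie algebra $(\g\xrightarrow{\iii_\g}\g_\triangleright, L_\triangleright)$. Here the role of $\tau$ and the role of the Lie algebra automorphism are both played by $\sigma$, the representation is $\rho=L_\triangleright$, and the relative Rota-Baxter operator is $T=\iii_\g$.

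For \eqref{eq1}, I would unravel the definitions: the required identity reads
\[
\sigma\bigl(L_\triangleright(\sigma x)\,v\bigr)-L_\triangleright(x)(\sigma v)=\sigma(\sigma x\triangleright v)-x\triangleright\sigma v=0,\qquad x,v\in\g.
\]
Since $\sigma$ is a pre-Lie algebra automorphism, $\sigma(\sigma x\triangleright v)=\sigma(\sigma x)\triangleright\sigma v$; because $\sigma$ is involutive, $\sigma(\sigma x)=x$, and the identity holds. (As a byproduct, $\sigma$ is automatically an automorphism of the subadjacent Lie algebra $\g_\triangleright$.) For \eqref{eq2}, with $T=\iii_\g$ the equation collapses to
\[
\sigma\circ\sigma-\iii_\g+\sigma-\sigma=\sigma^2-\iii_\g=0,
\]
which again uses only $\sigma^2=\iii_\g$. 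Hence $(\sigma,\sigma)$ is a reflection on $(\g\xrightarrow{\iii_\g}\g_\triangleright, L_\triangleright)$.

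Now I would invoke Theorem \ref{thm:rRB-sol} to conclude that
\[
\widetilde{\sigma}:=\sigma\oplus(-\sigma^*):\g_\triangleright\ltimes_{L_\triangleright^*}\g^*\longrightarrow \g_\triangleright\ltimes_{L_\triangleright^*}\g^*
\]
is a solution of the classical reflection equation in the triangular Lie bialgebra $(\g_\triangleright\ltimes_{L_\triangleright^*}\g^*, r_{\iii_\g})$, where $r_{\iii_\g}=\iii_\g-\iii_\g^{21}=\sum_i(e_i\otimes e_i^*-e_i^*\otimes e_i)$ as in \eqref{eq:rrr}. There is no real obstacle here; the only mild subtlety is to recognize that the two occurrences of $\sigma$ in the pair $(\sigma,\sigma)$ are, from the perspective of Definition \ref{refzero}, playing distinct roles (a linear isomorphism of the module $V=\g$ and a Lie algebra automorphism of the underlying Lie algebra $\g_\triangleright$), but the pre-Lie automorphism hypothesis guarantees both simultaneously.
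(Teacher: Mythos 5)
Your proposal is correct and follows essentially the same route as the paper: verify \eqref{eq1} and \eqref{eq2} of Definition \ref{refzero} directly (using that a pre-Lie automorphism of $\g$ is automatically an automorphism of $\g_\triangleright$, and that with $T=\iii_\g$ and $\tau=\sigma$ equation \eqref{eq2} reduces to $\sigma^2=\iii_\g$), and then invoke Theorem \ref{thm:rRB-sol}. The only difference is cosmetic: the paper records the observation as an equivalence ($(\sigma,\sigma)$ is a reflection if and only if $\sigma$ is an involutive pre-Lie automorphism), while you prove just the implication the proposition requires.
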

\begin{proof}
Observing that in this circumstance, i.e. $\sigma=\tau$ and $T=\iii_\g$, the relation \eqref{eq2} holds if and only if $\sigma^2=\iii_\g$. Under this condition, \eqref{eq1} is equivalent to $\sigma (x\triangleright y)=(\sigma^{-1} x)\triangleright (\sigma y)=(\sigma x)\triangleright (\sigma y)$. Namely, $\sigma$ is a pre-Lie algebra homomorphism.  This implies that $\sigma:\g_\triangleright \to \g_\triangleright$ is a Lie algebra automorphism. Indeed, $(\sigma,\sigma)$ is a reflection on the relative Rota-Baxter Lie algebra if and only if $\sigma: \g\to \g$ is an involutive pre-Lie algebra automorphism. By Theorem \ref{thm:rRB-sol},  $\sigma\oplus (-\sigma^*)$ is a solution of the classical reflection equation in the triangular Lie bialgebra $(\g_\triangleright\ltimes_{L_\triangleright^*}\g^*,r_{\iii_\g})$.
\end{proof}


\subsection{Reflections on relative Rota-Baxter Lie algebras of nonzero weight}
Let $\g,\frkk$ be two Lie algebras and $\rho:\g\to \Der(\frkk)$ be a representation of $\g$ on $\frkk$ by derivations. A linear map $T: \frkk\to \g$ is called a {\bf relative Rota-Baxter operator} of weight $\lambda$ ($\lambda\neq 0$) if it satisfies
\[ [Tu,Tv]_\g=T(\rho(Tu)v-\rho(Tv)u+\lambda[u,v]_{\frkk}),\quad \forall u,v\in \frkk.\]
A relative Rota-Baxter Lie algebra (of nonzero weight) consists of a Lie algebra  $\g$, a representation $\rho:\g\to \Der(\frkk)$ and a relative Rota-Baxter operator $T: \frkk\to \g$ of weight $\lambda$, which is denoted by $(\frkk\xrightarrow{T}\g,\rho)$. Associated to a relative Rota-Baxter Lie algebra $(\frkk\xrightarrow{T}\g,\rho)$, we have a semi-direct product Lie algebra $\overline{\g}:=\g\ltimes_{\rho}\frkk$. Define
\begin{eqnarray*}
\overline{T}: \g\ltimes_{\rho}\frkk\to \g\ltimes_{\rho}\frkk,\qquad \overline{T}(x+u)=-\lambda x+Tu,\quad \forall x\in \g,u\in \frkk.
\end{eqnarray*}
It was proved in \cite[Proposition 4.1]{BNG}
 that $\overline{T}$ is a Rota-Baxter operator of weight $\lambda$ on the Lie algebra $\overline{\g}:=\g\ltimes_{\rho}\frkk$. Consequently, $\big(\overline{\g}\ltimes_{\ad^*} \overline{\g}^*,\overline{T}\oplus (-\lambda\iii-\overline{T}^*),\huaS\big)$ is a quadratic Rota-Baxter Lie algebra, where $\huaS$ is given by
$$
\huaS(x+u+\xi+\alpha,y+v+\eta+\beta)=\langle x,\eta\rangle+\langle u,\beta\rangle+\langle \xi, y\rangle+\langle \alpha,v\rangle,
$$where $x,y\in \g, u,v\in \frkk, \xi,\eta\in \g^*$ and $\alpha,\beta\in \frkk^*$.

By Proposition \ref{r}, we have the following corollary.
\begin{cor}\label{cor:explicit-formula}
  Let $(\frkk\xrightarrow{T} \g,\rho)$ be a relative Rota-Baxter Lie algebra of nonzero weight $\lambda$. Then
 $r_{\overline{T}}\in \otimes^2 (\overline{\g}\ltimes_{\ad^*} \overline{\g}^*)$ defined by
\[r_{\overline{T}}(x+u+\xi+\alpha,y+v+\eta+\beta)=\frac{1}{\lambda}\big(\langle Tu, \eta\rangle-\langle Tv,\xi\rangle+\lambda\langle \xi,y\rangle+\lambda\langle \beta,u\rangle\big)\]
is a solution of the classical Yang-Baxter equation in the Lie algebra $\overline{\g}\ltimes_{\ad^*} \overline{\g}^*$.

Furthermore, the induced Lie algebra structure $[\cdot,\cdot]_{r_{\overline{T}}}$ on the dual space $(\overline{\g}\ltimes_{\ad^*} \overline{\g}^*)^*=\g\oplus \frkk\oplus \g^*\oplus \frkk^*$ is:
\begin{eqnarray*}
{[x+u,y+v]_{r_{\overline{T}}}}&=&\frac{1}{\lambda} \big(-\lambda[x,y]_\g+[x,Tv]_\g+[Tu,y]_\g+\rho(Tu)v-\rho(Tv)u+\lambda[u,v]_\frkk)\big);\\
{[x+u,\xi+\alpha]_{r_{\overline{T}}}}&=&\frac{1}{\lambda}\big(\ad_{Tu}^*\xi+(\rho_{-\lambda x+Tu}^*\alpha-\ad^*_u T^*\xi-\rho_x^*T^*\xi)\big);\\
{[\xi+\alpha,\eta+\beta]_{r_{\overline{T}}}}&=&0.\end{eqnarray*}
\end{cor}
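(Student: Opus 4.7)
The strategy is to reduce the statement to Proposition \ref{r}(ii) applied to the Rota-Baxter Lie algebra $(\overline{\g},\overline{T})$ of weight $\lambda$, whose Rota-Baxter property is \cite[Proposition 4.1]{BNG} recalled just above. Proposition \ref{r}(ii) then produces, automatically, a solution $r_{\overline{T}}$ of the classical Yang-Baxter equation in $\overline{\g}\ltimes_{\ad^*}\overline{\g}^*$ together with the factorizable Lie bialgebra structure; only the explicit formulas in components remain to be verified.

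For the formula for $r_{\overline{T}}$, I would substitute $a=x+u$, $b=y+v$, $\chi=\xi+\alpha$, $\psi=\eta+\beta$ and $\overline{T}(x+u)=-\lambda x+Tu$ into
\[
r_{\overline{T}}(a+\chi,b+\psi)=\tfrac{1}{\lambda}\bigl(\langle \overline{T}a,\psi\rangle-\langle \chi,\overline{T}b\rangle+\lambda\langle a,\psi\rangle\bigr).
\]
Since $Tu,-\lambda x\in \g$ pair only with the $\g^*$ part $\eta$, expansion of $\langle \overline{T}(x+u),\eta+\beta\rangle$ yields $-\lambda\langle x,\eta\rangle+\langle Tu,\eta\rangle$, and similarly for $\langle \xi+\alpha,\overline{T}(y+v)\rangle$. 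The two $-\lambda\langle x,\eta\rangle$ contributions cancel against $\lambda\langle a,\psi\rangle=\lambda\langle x,\eta\rangle+\lambda\langle u,\beta\rangle$, leaving exactly the stated formula.

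For the induced brackets on the dual, I would invoke Lemma \ref{iso}(ii), which identifies $(\overline{\g}\oplus\overline{\g}^*)_{r_{\overline{T}}}$ with the descendent semidirect product $\overline{\g}_{\overline{T}}\ltimes_{\rho_{\overline{T}}}\overline{\g}^*$ via the rescaling $\tfrac{1}{\lambda}\iii$, giving $[a,b]_{r_{\overline{T}}}=\tfrac{1}{\lambda}[a,b]_{\overline{T}}$ for $a,b\in\overline{\g}$. Expanding $[a,b]_{\overline{T}}=[\overline{T}a,b]_{\overline{\g}}+[a,\overline{T}b]_{\overline{\g}}+\lambda[a,b]_{\overline{\g}}$ with the semi-direct bracket on $\overline{\g}=\g\ltimes_\rho\frkk$ and collecting the $\g$- and $\frkk$-components reproduces the first displayed bracket. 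The mixed bracket comes from the formula $[x,\xi]_{r_B}=\tfrac{1}{\lambda}(\add^*_{Bx}\xi-\add^*_x B^*\xi)$ extracted from the proof of Lemma \ref{iso}(ii), applied with $B=\overline{T}$; one then computes the coadjoint action of $\overline{\g}$ on $\overline{\g}^*$ componentwise and combines with $\overline{T}^*(\xi+\alpha)=-\lambda\xi+T^*\xi$. Finally, $[\chi,\psi]_{r_{\overline{T}}}=0$ is inherited from the fact that the $\overline{\g}^*$ summand is abelian in the descendent semidirect product.

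The main technical obstacle is the bookkeeping for the coadjoint action of $\overline{\g}=\g\ltimes_\rho\frkk$ on $\overline{\g}^*=\g^*\oplus\frkk^*$: four cross components arise, involving $\ad^*_x$ on $\g^*$, $\rho^*_x$ on $\frkk^*$, the transpose of $y\mapsto\rho(y)u$ landing in $\g^*$, and $\ad^*_u$ on $\frkk^*$. These must be assembled with $\overline{T}^*$ to match the precise mixed-bracket formula in the statement. Once this bookkeeping is in place, the remainder is a direct substitution.
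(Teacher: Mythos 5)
Your proposal follows the paper's proof essentially verbatim: apply Proposition \ref{r}(ii) to the Rota-Baxter Lie algebra $(\overline{\g},\overline{T})$ of weight $\lambda$, expand $\overline{T}(x+u)=-\lambda x+Tu$ in the resulting pairing to obtain the stated formula for $r_{\overline{T}}$, and then use Lemma \ref{iso}(ii) to identify the dual bracket with $\tfrac{1}{\lambda}$ times the descendent bracket of $\overline{T}\oplus(-\lambda\iii_{\overline{\g}^*}-\overline{T}^*)$ and compute componentwise, which is exactly the paper's route. One small slip in your narration of the first computation: only the $-\lambda\langle x,\eta\rangle$ coming from $\langle\overline{T}(x+u),\eta+\beta\rangle$ cancels against $\lambda\langle x,\eta\rangle$, whereas the term $+\lambda\langle\xi,y\rangle$ produced by $-\langle\xi+\alpha,\overline{T}(y+v)\rangle$ survives and is precisely the $\lambda\langle\xi,y\rangle$ in the stated formula; this does not affect the correctness of the argument.
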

\begin{proof}
  By (ii) in Proposition \ref{r}, we obtain
 \begin{eqnarray*}
 r_{\overline{T}}(x+u+\xi+\alpha,y+v+\eta+\beta)&=&\frac{1}{\lambda}\big(\langle \overline{T}(x+u), \eta+\beta\rangle-\langle \xi+\alpha,\overline{T}(y+v)\rangle+\lambda \langle x+u, \eta+\beta\rangle\big)\\&=&\frac{1}{\lambda}\big(\langle Tu, \eta\rangle-\langle Tv,\xi\rangle+\lambda\langle \xi,y\rangle+\lambda\langle \beta,u\rangle\big).
  \end{eqnarray*}
Based on Lemma \ref{iso}, we see that
\[\frac{1}{\lambda}\iii: (\g\oplus \frkk \oplus \g^*\oplus \frkk^*)_{r_{\overline{T}}}\to \big(\overline{\g}\ltimes_{\ad^*} \overline{\g}^*\big)_{\overline{T}\oplus (-\lambda \iii_{\overline{\g}^*}-\overline{T}^*)},\]
is an isomorphism from the dual Lie algebra induced by $r_{\overline{T}}$ to the descendent Lie algebra of the Rota-Baxter operator $\overline{T}\oplus (-\lambda \iii_{\overline{\g}^*}-\overline{T}^*)$. Using this isomorphism, it is straightforward to obtain the explicit formulas given above.
\end{proof}

Recall from Proposition \ref{weight0case} that  a relative Rota-Baxter Lie algebra of weight $0$ induces a triangular Lie bialgebra $(\g\ltimes_{\rho^*}V^*, \g^*\rtimes V_T)$.  In this following, we also find a  Lie bialgebra $(\g\ltimes_{\rho^*} \frkk^*, \g^*\rtimes \frkk_T)$ for the nonzero weight case.  Although this Lie bialgebra  is not coboundary, it is a Lie subalgebra of the  factorizable Lie bialgebra in Corollary \ref{cor:explicit-formula}.

A Lie bialgebra $(\h,\h^*)$ is called a {\bf Lie sub-bialgebra} of a Lie bialgebra $(\g,\g^*)$, if $\h\subset \g$ is a Lie subalgebra and the projection $\g^*\to \h^*$ is a Lie algebra homomorphism. The latter is equivalent to say that $\delta(\h)\subset \h\wedge \h$; see \cite{LPV}.

\begin{pro}\label{main}
Let $(\frkk\xrightarrow{T} \g,\rho)$ be a relative Rota-Baxter  Lie algebra of  weight $\lambda$. Then the associated factorizable Lie bialgebra  $\big((\g\ltimes_{\rho} \frkk)\ltimes_{\ad^*} (\g^*\oplus \frkk^*),r_{\overline{T}}\big)$  has  a Lie sub-bialgebra $(\g\ltimes_{\rho^*} \frkk^*, \g^*\rtimes \frkk_T)$, where the Lie bracket on $\g^*\rtimes \frkk_T$ is given by
\begin{eqnarray}\label{subbi}
[\xi+u,\eta+v]_*=\frac{1}{\lambda}(\ad^*_{Tu}\eta-\ad^*_{Tv}\xi+\rho(Tu)v-\rho(Tv)u+\lambda[u,v]_{\frkk}),
\end{eqnarray}
for $\xi,\eta\in \g^*$ and $u,v\in \frkk$.
\end{pro}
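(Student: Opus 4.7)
The plan is to apply the sub-bialgebra criterion recorded just before the proposition: $\h\subset\g$ is a sub-Lie-bialgebra if and only if $\h$ is a Lie subalgebra of $\g$ and the kernel $\h^\perp$ of the projection onto $\h^*$ is an ideal of the dual Lie algebra. I apply this to $\h:=\g\oplus\frkk^*\subset(\g\ltimes_\rho\frkk)\ltimes_{\ad^*}(\g^*\oplus\frkk^*)=\overline{\g}\ltimes_{\ad^*}\overline{\g}^*$. First I would check that $\g\oplus\frkk^*$ is indeed a Lie subalgebra of the big Lie algebra with induced bracket $\g\ltimes_{\rho^*}\frkk^*$. Unwinding the definition of the semidirect product $\overline{\g}\ltimes_{\ad^*}\overline{\g}^*$, the bracket of $x\in\g\subset\overline{\g}$ with $\alpha\in\frkk^*\subset\overline{\g}^*$ is the coadjoint action $\ad^*_x\alpha$, and since $[x,u]_{\overline{\g}}=\rho(x)u$ for $u\in\frkk$ and $[x,y]_{\overline{\g}}=[x,y]_\g\in\g$, one immediately gets $\ad^*_x\alpha=\rho^*(x)\alpha\in\frkk^*$. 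Combined with $[\alpha,\beta]=0$ and $[x,y]=[x,y]_\g$, this yields the semidirect product $\g\ltimes_{\rho^*}\frkk^*$.

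Second I would verify the ideal condition for $\h^\perp$ inside the dual Lie algebra $\bigl((\overline{\g}\oplus\overline{\g}^*)^*,[\cdot,\cdot]_{r_{\overline{T}}}\bigr)$. Using the decomposition $(\overline{\g}\oplus\overline{\g}^*)^*=\g\oplus\frkk\oplus\g^*\oplus\frkk^*$ employed in Corollary \ref{cor:explicit-formula}, the annihilator $\h^\perp$ is the subspace $\g\oplus\frkk^*$ sitting as the first and fourth summands of the dual. For an arbitrary element $x+\alpha\in\h^\perp$ and any $y+v+\eta+\beta$ in the dual, the four contributions to $[x+\alpha,y+v+\eta+\beta]_{r_{\overline{T}}}$ are read off from the three formulas of Corollary \ref{cor:explicit-formula}: one obtains $[x,y+v]_{r_{\overline{T}}}\in\g$, $[x,\eta+\beta]_{r_{\overline{T}}}\in\frkk^*$, $[\alpha,y+v]_{r_{\overline{T}}}\in\frkk^*$ (via antisymmetry applied to the second formula with $\xi=0$), and $[\alpha,\eta+\beta]_{r_{\overline{T}}}=0$. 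Each term lies in $\h^\perp$, so $\h^\perp$ is an ideal, and the sub-bialgebra criterion is satisfied.

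Third I would identify the quotient $V^*/\h^\perp\cong\g^*\oplus\frkk=\h^*$ and determine its induced Lie bracket, which must coincide with \eqref{subbi}. For $\xi+u,\eta+v\in\g^*\oplus\frkk$, embed them in the dual $V^*$ as $(0,u,\xi,0)$ and $(0,v,\eta,0)$, expand $[(0,u,\xi,0),(0,v,\eta,0)]_{r_{\overline{T}}}$ into the four sub-brackets via Corollary \ref{cor:explicit-formula}, and collect terms. The surviving $\g^*$-component is $\frac{1}{\lambda}(\ad^*_{Tu}\eta-\ad^*_{Tv}\xi)$ and the surviving $\frkk$-component is $\frac{1}{\lambda}(\rho(Tu)v-\rho(Tv)u+\lambda[u,v]_\frkk)$, while the $\frkk^*$- and $\g$-parts that appear lie in $\h^\perp$ and vanish upon projection. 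The surviving bracket is exactly \eqref{subbi}, completing the identification of the dual of $\g\ltimes_{\rho^*}\frkk^*$ with $\g^*\rtimes\frkk_T$.

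The calculation is conceptually straightforward; the main obstacle is simply the book-keeping across the four-fold decomposition $\g\oplus\frkk\oplus\g^*\oplus\frkk^*$ of the big Lie algebra and its dual, and correctly tracking the signs from antisymmetry of $[\cdot,\cdot]_{r_{\overline{T}}}$ (for instance, when passing from $[\xi+0,0+v]$ to $-[0+v,\xi+0]$) so that the $\frkk^*$-pieces generated in the two cross-bracket calculations genuinely lie in $\h^\perp$ and the surviving contributions recombine into \eqref{subbi}.
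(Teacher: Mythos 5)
Your proposal is correct and follows essentially the same route as the paper: exhibit $\g\ltimes_{\rho^*}\frkk^*$ as a Lie subalgebra of $(\g\ltimes_{\rho}\frkk)\ltimes_{\ad^*}(\g^*\oplus\frkk^*)$, use the explicit brackets of Corollary \ref{cor:explicit-formula} to check that its annihilator $\g\oplus\frkk^*$ is an ideal of the dual Lie algebra $(\g\oplus\frkk\oplus\g^*\oplus\frkk^*)_{r_{\overline{T}}}$, and identify the quotient bracket on $\g^*\oplus\frkk$ with \eqref{subbi}. The only difference is that you spell out the book-keeping (including the sign from antisymmetry in the cross brackets) that the paper leaves as ``straightforward to check,'' and your computations are accurate.
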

\begin{proof}
By definition, we see that
\[\g\ltimes_{\rho^*} \frkk^*\subset (\g\ltimes_{\rho} \frkk)\ltimes_{\ad^*} (\g^*\oplus \frkk^*)\]
is a Lie subalgebra.
Using the explicit formula presented in Corollary \ref{cor:explicit-formula},   it is straightforward to check that
\[(\g\ltimes_{\rho^*} \frkk^*)^\perp=\g\oplus \frkk^*\subset (\g\oplus \frkk \oplus \g^*\oplus \frkk^*)_{r_{\overline{T}}}\]
is an ideal.
Equipped $\g^*\rtimes \frkk_T=(\g\oplus \frkk \oplus \g^*\oplus \frkk^*)_{r_{\overline{T}}}/(\g\ltimes_{\rho^*} \frkk^*)^\perp$  with the quotient Lie algebra structure, we see that the projection
\[(\g\oplus \frkk \oplus \g^*\oplus \frkk^*)_{r_{\overline{T}}}\to \g^*\rtimes \frkk_T\]
is a Lie algebra homomorphism. Therefore, $(\g\ltimes_{\rho^*} \frkk^*, \g^*\rtimes \frkk_T)$ is a Lie sub-bialgebra.
\end{proof}
\begin{rmk}In \cite[Corollary 3.8 (b)]{BNG},  for a relative Rota-Baxter Lie algebra $(\frkk\xrightarrow{T} \g,\rho)$ of weight $\lambda$, the authors proved that  $r_T:=T-T^{21}\in \wedge^2 (\g\ltimes_{\rho^*}\frkk^*)$ is a skew-symmetric solution of the generalized Yang-Baxter equation: $[x, [r_T,r_T]]=0$ for all $x\in \g\ltimes_{\rho^*}\frkk^*$ if and only if two conditions are satisfied.
A coboundary Lie bialgebra $(\g\ltimes_{\rho^*} \frkk^*, (\g^*\oplus \frkk)_{r_T})$ is thereby induced. By direct calculation, the Lie bracket $[\cdot,\cdot]_{r_T}$ on $(\g^*\oplus \frkk)_{r_T}$ is
\[[\xi+u,\eta+v]_{r_T}=\ad_{Tu}^*\eta-\ad_{Tv}^*\xi+\langle \xi, T(\rho(\cdot)(v))\rangle-\langle \eta, T(\rho(\cdot)(u))\rangle+\rho(Tu)v-\rho(Tv)u.\]
Actually, the two extra conditions are needed to make $[\cdot,\cdot]_{r_T}$  a Lie bracket.

We find a  different Lie bracket  \eqref{subbi} on $\g^*\oplus \frkk$ so that $(\g\ltimes_{\rho^*} \frkk^*, \g^*\rtimes \frkk_T)$ is a Lie bialgebra automatically. Although this Lie bialgebra is not  coboundary, it is a Lie sub-bialgebra of a bigger factorizable Lie bialgebra $(\overline{\g}\ltimes_{\ad^*} \overline{\g}^*, r_{\overline{T}})$.
\end{rmk}

\emptycomment{
\hl{What is the relation of  $(G\ltimes_{\hat{\rho}} K)\ltimes (\g^*\oplus \frkk^*)$ and the double Poisson Lie group? It is different. In general, both $G$ and $G^*$ are Poisson subgroups of the double Poisson Lie group $(D,\pi_D)$. Here there are two Poisson structures on $K\ltimes \g^*$, write them down in terms of $T$.
}

\hl{Do we need to consider the weight 0 case?}
\yh{Since we consider $\frkk$ is a Lie algebra, so we do not consider weight 0 case here.
\begin{pro}\label{main}
Let $(\frkk\xrightarrow{T} \g,\rho)$ be a relative Rota-Baxter operator on a Lie algebra $\g$ of weight $0$. Then
\begin{itemize}
\item[\rm (i)] $\big(\overline{\g}\ltimes_{\ad^*} \overline{\g}^*,\overline{T}\oplus (-\overline{T}^*)\big)$ is a quadratic Rota-Baxter Lie algebra with the natural pairing.
\item[\rm (ii)] $r_{\overline{T}}\in \otimes^2 (\overline{\g}\ltimes_{\ad^*} \overline{\g}^*)$ defined by
\[r_{\overline{T}}(x+u+\xi+\alpha,y+v+\eta+\beta)=\langle Tu, \eta\rangle-\langle Tv,\xi\rangle,\qquad x,y\in \g, u,v\in \frkk, \xi,\eta\in \g^*, \alpha,\beta\in \frkk^*\]
is a solution of the classical Yang-Baxter equation of $\overline{\g}\ltimes_{\ad^*} \overline{\g}^*$.
\item[\rm (iii)] The above $r_{\overline{T}}$ gives a triangular Lie bialgebra
$\big(\overline{\g}\ltimes_{\ad^*} \overline{\g}^*, (\g\oplus \frkk \oplus \g^*\oplus \frkk^*)_{r_{\overline{T}}}\big)$, where
the dual Lie algebra is exactly the same with the descendent Lie algebra.
 Explicitly, we have
\begin{eqnarray*}
{[x+u,y+v]_{r_{\overline{T}}}}&=&[x,Tv]_\g+[Tu,y]_\g+\rho(Tu)v-\rho(Tv)u;\\
{[x+u,\xi+\alpha]_{r_{\overline{T}}}}&=&\ad_{Tu}^*\xi+(\rho_{Tu}^*\alpha-\ad^*_u T^*\xi-\rho_x^*T^*\xi);\\
{[\xi+\alpha,\eta+\beta]_{r_{\overline{T}}}}&=&0.\end{eqnarray*}
\item[\rm (iiii)] The above triangular Lie bialgebra has a sub Lie bialgebra $(\g\ltimes_{\rho^*} \frkk^*, \g^*\rtimes \frkk_T)$, where the Lie bracket $\g^*\rtimes \frkk_T$ is given by
\[[\xi+u,\eta+v]_*=\ad^*_{Tu}\eta-\ad^*_{Tv}\xi+\rho(Tu)v-\rho(Tv)u.\]
\end{itemize}
 \end{pro}
}
\begin{defi}
A {\bf reflection} on a relative Rota-Baxter operator $(\frkk\xrightarrow{t}\g, \rho)$ of weight $\lambda$ is consisting of two Lie algebra automorphisms  $\tau: \frkk\to \frkk$ and $\sigma: \g\to \g$ such that
\begin{eqnarray}
\label{req1}\tau(\rho(x)u)&=&\rho(\sigma x) (\tau u), \\
\label{req2}\sigma[\sigma x,y]_\g=[x,\sigma y]_\g, \quad \tau(\rho(\sigma x) u)&=&\rho(x)(\tau u), \\
\label{req3}\tau(\rho(x)(\tau u))=\rho(\sigma x) (u),\quad  \lambda \tau[\tau u,v]_\frkk&=&\lambda [u,\tau v]_\frkk,\\
\label{req4}\lambda(\tau^2-\textbf{1}_\frkk)=0, \quad  \lambda(\sigma^2-\textbf{1}_\g)&=&0,\\
\label{req5}\sigma \circ T\circ \tau-T+\sigma\circ T-T\circ \tau&=&0,
\end{eqnarray}
for $x,y\in \g$ and $u,v\in \frkk$.
\end{defi}
}

\begin{defi}\label{refrel}
A {\bf reflection} on a relative Rota-Baxter Lie algebra $(\frkk\xrightarrow{T}\g, \rho)$ of nonzero weight $\lambda$  consists of two involutive Lie algebra automorphisms  $\tau: \frkk\to \frkk$ and $\sigma: \g\to \g$ such that
\begin{eqnarray}
\label{req1}\tau(\rho(x)u)-\rho(\sigma x) (\tau u)&=&0, \\
\label{req5}\sigma \circ T\circ \tau-T+\sigma\circ T-T\circ \tau&=&0.
\end{eqnarray}
\end{defi}
\begin{thm}\label{rRBth}
Let $(\sigma,\tau)$ be a reflection on a relative Rota-Baxter Lie algebra $(\frkk\xrightarrow{T}\g, \rho)$ of weight $\lambda$. Then \[(\sigma\oplus \tau)\oplus (-\sigma^*\oplus -\tau^*): \overline{\g}\ltimes_{\ad^*} \overline{\g^*}\to \overline{\g}\ltimes_{\ad^*} \overline{\g^*}\] is a solution of the classical reflection equation of the Lie bialgebra $(\overline{\g}\ltimes_{\ad^*} \overline{\g^*}, r_{\overline{T}})$, where $\overline{\g}=\g\ltimes_\rho \frkk$ and $\overline{T}(x+u)=-\lambda x+Tu$.
\end{thm}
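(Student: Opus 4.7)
The plan is to reduce the statement to Theorem~\ref{rrr} by passing to the associated Rota-Baxter Lie algebra $(\overline{\g},\overline{T})$ already constructed in the preamble to Corollary~\ref{cor:explicit-formula}: namely $\overline{\g}=\g\ltimes_\rho\frkk$ with $\overline{T}(x+u)=-\lambda x+Tu$. Once I verify that $\sigma\oplus\tau:\overline{\g}\to\overline{\g}$ is a reflection on the Rota-Baxter Lie algebra $(\overline{\g},\overline{T})$ in the sense of Definition~\ref{reflectiononRB}, Theorem~\ref{rrr} delivers immediately that
\[
\widetilde{\tau}:=(\sigma\oplus\tau)\oplus\bigl(-(\sigma\oplus\tau)^*\bigr)=(\sigma\oplus\tau)\oplus(-\sigma^*\oplus-\tau^*)
\]
is a solution of the classical reflection equation for the Lie bialgebra $(\overline{\g}\ltimes_{\ad^*}\overline{\g}^*,r_{\overline{T}})$.

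The verification splits into the items of Definition~\ref{reflectiononRB}. First, I would show that $\sigma\oplus\tau$ is a Lie algebra automorphism of the semi-direct product $\overline{\g}$, whose bracket is $[x+u,y+v]_{\overline{\g}}=[x,y]_\g+\rho(x)v-\rho(y)u+[u,v]_\frkk$. Since $\sigma$ and $\tau$ are already Lie algebra automorphisms of $\g$ and $\frkk$ respectively, the only extra input needed is the intertwining relation $\tau\circ\rho(x)=\rho(\sigma x)\circ\tau$, which is precisely \eqref{req1}. Second, condition \eqref{involution}, i.e.\ $\lambda((\sigma\oplus\tau)^2-\iii_{\overline{\g}})=0$, is immediate from $\sigma^2=\iii_\g$ and $\tau^2=\iii_\frkk$. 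Third, condition \eqref{homo}, i.e.\ $(\sigma\oplus\tau)[(\sigma\oplus\tau)(a),b]_{\overline{\g}}=[a,(\sigma\oplus\tau)(b)]_{\overline{\g}}$, is a purely formal consequence of $\sigma\oplus\tau$ being an involutive Lie algebra automorphism, since for any such $\phi$ one has $\phi[\phi(a),b]=[\phi^2(a),\phi(b)]=[a,\phi(b)]$.

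The only substantive computation is the Rota-Baxter compatibility \eqref{ref} for $\overline{T}$. Evaluating
\[
(\sigma\oplus\tau)\circ\overline{T}\circ(\sigma\oplus\tau)-\overline{T}+(\sigma\oplus\tau)\circ\overline{T}-\overline{T}\circ(\sigma\oplus\tau)
\]
on $x+u$ and using that $\overline{T}$ lands in $\g$ (so $\sigma\oplus\tau$ acts there simply as $\sigma$), the $-\lambda x$ contributions are shuffled only by $\sigma$ and cancel thanks to $\sigma^2=\iii_\g$. What remains is exactly $(\sigma T\tau-T+\sigma T-T\tau)(u)$, which vanishes by hypothesis \eqref{req5}. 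The main obstacle is purely notational: one must carefully track that $\sigma\oplus\tau$ acts as $\sigma$ on elements like $Tu$ and $T\tau u$ (even though these arose by applying $T$ to elements of $\frkk$), so the $\tau$-factor never interacts with $T$ except through the composition in \eqref{req5}. With this bookkeeping in hand, Theorem~\ref{rrr} completes the proof.
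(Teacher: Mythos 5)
Your proposal is correct and follows essentially the same route as the paper: check that $\sigma\oplus\tau$ is a reflection on the Rota-Baxter Lie algebra $(\overline{\g},\overline{T})$ in the sense of Definition \ref{reflectiononRB} (automorphism via \eqref{req1}, \eqref{involution} and \eqref{homo} from involutivity, and \eqref{ref} by the cancellation of the $-\lambda$ terms via $\sigma^2=\iii_\g$ together with \eqref{req5}), and then invoke Theorem \ref{rrr}. The computation of \eqref{ref} matches the paper's factored form $(\tilde{\tau}-\textbf{1}_{\overline{\g}})\circ\overline{T}\circ(\tilde{\tau}+\textbf{1}_{\overline{\g}})=0$, so nothing further is needed.
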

\begin{proof}
We only need to prove that $\tilde{\tau}:=\sigma\oplus \tau$ is a reflection on the Rota-Baxter Lie algebra $\big(\overline{\g},\overline{T})$ as introduced in Definition \ref{reflectiononRB}. Then the result follows from Theorem \ref{rrr}.

First, based on the fact that $\sigma$ and $\tau$ are  Lie algebra automorphisms and the relation \eqref{req1}, we see that $\tilde{\tau}: \overline{\g}\to \overline{\g}$ is a Lie algebra automorphism.
Then  the condition \eqref{homo}, i.e., $\tilde{\tau}[\tilde{\tau} a,b]=[a,\tilde{\tau} b]$ follows from \eqref{req1} and the facts that $\sigma, \tau$ are  Lie algebra homomorphisms and involutions. Condition \eqref{involution} holds since $\sigma$ and $\tau$ are involutions.  It is left to show
\eqref{ref}, namely \[\tilde{\tau} \circ \overline{T}\circ \tilde{\tau}-\overline{T}+\tilde{\tau}\circ \overline{T}-\overline{T}\circ \tilde{\tau}=(\tilde{\tau}-\textbf{1}_{\overline{\g}})\circ \overline{T}\circ (\tilde{\tau}+\textbf{1}_{\overline{\g}})=0.\]
Indeed, we have
\begin{eqnarray*}
(\tilde{\tau}-\textbf{1}_{\overline{\g}})\circ \overline{T}\circ (\tilde{\tau}+\textbf{1}_{\overline{\g}})(x+u)&=&
(\tilde{\tau}-\textbf{1}_{\overline{\g}})(-\lambda (\sigma+\textbf{1}_\g) (x)+T(\tau+\textbf{1}_\frkk)(u))\\ &=&
-\lambda (\sigma-\textbf{1}_\g)(\sigma+\textbf{1}_\g) (x)+(\sigma-\textbf{1}_\g)T(\tau+\textbf{1}_\frkk)(u) \\ &=&0,\end{eqnarray*}
where we used \eqref{req5} and $\sigma^2=\iii_{\g}$ in the last step. So we obtain that $\tilde{\tau}$ is a reflection on the Rota-Baxter Lie algebra $\big(\overline{\g},\overline{T})$.\end{proof}
\begin{cor}
Let $(\sigma,\tau)$ be a reflection on a relative Rota-Baxter operator $(\frkk\xrightarrow{T}\g, \rho)$ of weight $\lambda$.
Denote by $\h_\sigma\subset \g, \h_\tau\subset \frkk, \frkl_\sigma\subset \g^*, \frkl_\tau\subset \frkk^*$ the fixed point sets of $ \sigma, \tau, -\sigma^*, -\tau^*$, respectively.
Then
\begin{itemize}
\item[\rm (i)]$\h_\sigma \oplus \h_{\tau}\oplus \frkl_\sigma\oplus \frkl_\tau$ is a coideal subalgebra of $\big((\g\ltimes_\rho \frkk)\ltimes_{\ad^*} (\g^*\oplus \frkk^*),r_{\overline{T}}\big)$;
\item[\rm (ii)] $\mathrm{Im}(\sigma+\iii_\g)\oplus \mathrm{Im}(\tau+\iii_\frkk)\oplus \h_\sigma^\perp\oplus \h_\tau^\perp \subset (\g\rtimes \frkk_T)\ltimes (\g^*\oplus \frkk^*)$
is a Rota-Baxter Lie subalgebra.
\end{itemize}
\end{cor}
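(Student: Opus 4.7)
The plan is to deduce both statements from the fact, established in Theorem \ref{rRBth}, that $\tilde{\tau}:=\sigma\oplus\tau$ is a reflection on the (ordinary) Rota-Baxter Lie algebra $(\overline{\g},\overline{T})$ of weight $\lambda$. Once this reduction is in place, (i) is a direct application of Corollary \ref{corRBcor} and (ii) of Corollary \ref{RBcor}, so the work is essentially bookkeeping about how the fixed-point data of $\sigma$ and $\tau$ assemble into the fixed-point data of $\tilde{\tau}$.

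For part (i), I would compute the fixed-point set of $\tilde{\tau}$ in $\overline{\g}=\g\ltimes_\rho\frkk$ and the fixed-point set of $-\tilde{\tau}^*$ in $\overline{\g}^*=\g^*\oplus\frkk^*$. Since $\tilde{\tau}$ is block-diagonal, $\ker(\tilde{\tau}-\iii_{\overline{\g}})=\h_\sigma\oplus\h_\tau$, and dually $\ker(-\tilde{\tau}^*-\iii_{\overline{\g}^*})=\frkl_\sigma\oplus\frkl_\tau$. Corollary \ref{corRBcor}, applied to the reflection $\tilde{\tau}$ on $(\overline{\g},\overline{T})$, then says that
\[(\h_\sigma\oplus\h_\tau)\oplus(\frkl_\sigma\oplus\frkl_\tau)\subset \overline{\g}\ltimes_{\ad^*}\overline{\g}^*\]
is a coideal subalgebra of the Lie bialgebra $(\overline{\g}\ltimes_{\ad^*}\overline{\g}^*,r_{\overline{T}})$. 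After regrouping the summands this is exactly the claimed subspace, and since by Corollary \ref{cor:explicit-formula} this is the Lie bialgebra appearing in the statement, we are done with (i).

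For part (ii), I would apply Corollary \ref{RBcor} to the same reflection $\tilde{\tau}$ on $(\overline{\g},\overline{T})$. That corollary produces the Rota-Baxter Lie subalgebra
\[\mathrm{Im}(\tilde{\tau}+\iii_{\overline{\g}})\oplus(\h_\sigma\oplus\h_\tau)^{\perp}\subset (\overline{\g}\ltimes_{\ad^*}\overline{\g}^*)_{\overline{T}\oplus(-\lambda\iii_{\overline{\g}^*}-\overline{T}^*)}=\overline{\g}_{\overline{T}}\ltimes_{\rho_{\overline{T}}}\overline{\g}^*.\]
Using again the block-diagonal form of $\tilde{\tau}$, one has $\mathrm{Im}(\tilde{\tau}+\iii_{\overline{\g}})=\mathrm{Im}(\sigma+\iii_\g)\oplus\mathrm{Im}(\tau+\iii_\frkk)$ and $(\h_\sigma\oplus\h_\tau)^{\perp}=\h_\sigma^\perp\oplus\h_\tau^\perp$ inside $\g^*\oplus\frkk^*$, giving the stated subspace. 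The remaining point is to identify the ambient Rota-Baxter Lie algebra $(\overline{\g}_{\overline{T}}\ltimes_{\rho_{\overline{T}}}\overline{\g}^*,\,\overline{T}\oplus(-\lambda\iii-\overline{T}^*))$ with $(\g\rtimes\frkk_T)\ltimes(\g^*\oplus\frkk^*)$ as written in the statement; this is a straightforward unpacking of the descendent bracket of $\overline{T}$ on $\g\oplus\frkk$ (cf.\ Proposition \ref{main} and the formulas of Corollary \ref{cor:explicit-formula}).

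The only step where a little care is required is matching notation: the statement writes $\g\rtimes\frkk_T$, which should be read as the descendent Lie algebra $\overline{\g}_{\overline{T}}$, and the semidirect product factor $\g^*\oplus\frkk^*$ carries the representation $\rho_{\overline{T}}$ dual to the descendent action; once these identifications are made explicit, no further computation beyond the two cited corollaries is needed. I do not expect any genuine obstacle here, since the hard analytic content, verifying that $\tilde{\tau}$ is indeed a reflection on $(\overline{\g},\overline{T})$, has already been carried out in the proof of Theorem \ref{rRBth}.
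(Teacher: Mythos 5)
Your proposal is correct and follows exactly the paper's route: the paper's proof simply cites Theorem \ref{rRBth} together with Corollaries \ref{corRBcor} and \ref{RBcor}, which is precisely your reduction via the reflection $\tilde{\tau}=\sigma\oplus\tau$ on $(\overline{\g},\overline{T})$. Your additional bookkeeping (block-diagonal computation of the fixed-point sets and the identification of the ambient descendent Lie algebra) is accurate and just makes explicit what the paper leaves implicit.
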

\begin{proof}
The results are deduced from Theorem \ref{rRBth} and Corollary \ref{corRBcor} and Corollary \ref{RBcor}.
\end{proof}

At the end of this subsection, we use involutive automorphisms on post-Lie algebras to construct reflections on relative Rota-Baxter Lie algebras of nonzero weight. The notion of a post-Lie algebra was introduced by Vallette from his study of Koszul duality of operads in \cite{Val}, and has been found various applications in Rota-Baxter algebras, numerical analysis and regularity structures.

\begin{defi} \cite{Val}\label{post-lie-defi}
A {\bf post-Lie algebra} $(\g,[\cdot,\cdot]_\g,\triangleright_\g)$ consists of a Lie algebra $(\g,[\cdot,\cdot]_\g)$ and a binary product $\triangleright_\g:\g\otimes\g\lon\g$ such that
\begin{eqnarray}
\label{Post-1}x\triangleright_\g[y,z]_\g&=&[x\triangleright_\g y,z]_\g+[y,x\triangleright_\g z]_\g,\\
\label{Post-2}([x,y]_\g+x\triangleright_\g y-y\triangleright_\g x) \triangleright_\g z&=&x\triangleright_\g(y\triangleright_\g z)-y\triangleright_\g(x\triangleright_\g z),
\end{eqnarray}
where $x, y, z\in\g$.
\end{defi}
A {\bf homomorphism} between two post-Lie algebras $(\g,[\cdot,\cdot]_\g,\triangleright_\g)$ and $(\h,[\cdot,\cdot]_\h,\triangleright_\h)$ is a linear map $\sigma: \g\to \h$ such that $\sigma[x,y]_\g=[\sigma x,\sigma y]_\h$ and $\sigma(x\triangleright_\g y)=(\sigma x)\triangleright_\h (\sigma y)$.

For a post-Lie algebra $(\g,[\cdot,\cdot]_\g,\triangleright_\g)$, it is obvious that if $[\cdot,\cdot]_\g=0$, then $(\g,\triangleright_\g)$ is a pre-Lie algebra.

For any post-Lie algebra $(\g,[\cdot,\cdot]_\g,\triangleright_\g)$, define a linear map $[\cdot,\cdot]_\triangleright:\g\wedge\g\to\g$ as follows:
\begin{eqnarray}\label{sub-adja-Lie}
&[x,y]_\triangleright=[x,y]_\g+x\triangleright_\g y-y\triangleright_\g x,\quad\forall x,y\in\g.
\end{eqnarray}
Then $(\g,[\cdot,\cdot]_\triangleright)$ is a Lie algebra, which is called the {\bf subadjacent Lie algebra}, and denoted by $\g_\triangleright$. Moreover, $L_\triangleright:\g_\triangleright\to\Der(\g)$ is an action of the Lie algebra $\g_\triangleright$ on the original Lie algebra $(\g,[\cdot,\cdot]_\g)$, and $\iii_\g:\g\to\g_\triangleright$ is a relative Rota-Baxter operator of weight 1.  Conversely, a relative Rota-Baxter operator of nonzero weight naturally induces a post-Lie algebra \cite{BGN2010}.

\begin{pro}
Let $(\g,[\cdot,\cdot]_\g, \triangleright)$ be a post-Lie algebra and $\sigma: \g\to \g$ an involutive post-Lie algebra automorphism. Then $(\sigma,\sigma)$ is a reflection on the corresponding relative Rota-Baxter Lie algebra $(\g \xrightarrow{\iii_{\g}} \g_{\triangleright}, L_\triangleright)$.
\end{pro}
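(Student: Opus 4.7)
The plan is to unwind Definition \ref{refrel} in the concrete setting $\frkk=(\g,[\cdot,\cdot]_\g)$, ambient Lie algebra $\g_\triangleright$ (so the ``$\g$'' in Definition \ref{refrel} is $\g_\triangleright$), $T=\iii_\g$, and $\rho=L_\triangleright$, and then simply check each of its requirements for the pair $(\tau,\sigma)=(\sigma,\sigma)$.

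First I would verify the preliminary requirement that $\sigma$ is an involutive Lie algebra automorphism on both sides. On the $\frkk$-side this is immediate from the assumption that $\sigma$ is a post-Lie algebra automorphism (it preserves $[\cdot,\cdot]_\g$) together with $\sigma^2=\iii_\g$. On the $\g_\triangleright$-side, the bracket is $[x,y]_\triangleright=[x,y]_\g+x\triangleright y-y\triangleright x$ by \eqref{sub-adja-Lie}; since $\sigma$ preserves both $[\cdot,\cdot]_\g$ and $\triangleright$, it preserves $[\cdot,\cdot]_\triangleright$, so $\sigma\colon\g_\triangleright\to\g_\triangleright$ is an involutive Lie algebra automorphism as well.

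Next I would check \eqref{req1}, which in this setting reads
\[
\sigma\bigl(L_\triangleright(x)u\bigr)-L_\triangleright(\sigma x)(\sigma u)=\sigma(x\triangleright u)-(\sigma x)\triangleright(\sigma u)=0,\qquad x,u\in\g.
\]
This is precisely the compatibility of $\sigma$ with $\triangleright$, hence a direct consequence of $\sigma$ being a post-Lie algebra homomorphism.

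Finally, condition \eqref{req5} with $T=\iii_\g$ and $\tau=\sigma$ becomes
\[
\sigma\circ\iii_\g\circ\sigma-\iii_\g+\sigma\circ\iii_\g-\iii_\g\circ\sigma=\sigma^2-\iii_\g+\sigma-\sigma=0,
\]
which holds by $\sigma^2=\iii_\g$. This exhausts the axioms of Definition \ref{refrel}, so $(\sigma,\sigma)$ is a reflection on $(\g\xrightarrow{\iii_\g}\g_\triangleright,L_\triangleright)$. There is no real obstacle here; the statement is essentially a direct translation of ``post-Lie algebra automorphism'' into the reflection axioms, with $T=\iii_\g$ making \eqref{req5} collapse automatically.
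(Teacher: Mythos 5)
Your proposal is correct and follows essentially the same route as the paper: both verify that $\sigma$ is an involutive Lie algebra automorphism of $(\g,[\cdot,\cdot]_\g)$ and of $\g_\triangleright$ (using that $\sigma$ preserves both $[\cdot,\cdot]_\g$ and $\triangleright$), observe that \eqref{req1} is exactly compatibility of $\sigma$ with $\triangleright$, and note that with $T=\iii_\g$ and $\tau=\sigma$ condition \eqref{req5} collapses to $\sigma^2=\iii_\g$. No gaps.
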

\begin{proof}
 Since $\sigma$ is an involutive post-Lie algebra automorphism, we see that
\[ \sigma[x,y]_\triangleright=\sigma([x,y]_\g+x\triangleright_\g y-y\triangleright_\g x)=[\sigma x,\sigma y]_\g+(\sigma x)\triangleright_\g (\sigma y)-(\sigma y)\triangleright_\g (\sigma x)=[\sigma x,\sigma y]_\triangleright.\]
So $\sigma: \g_\triangleright\to \g_\triangleright$ and $\sigma: \g\to \g$ are both Lie algebra automorphisms. Moreover, \eqref{req1} follows from $\sigma(x\triangleright_\g y)=(\sigma x)\triangleright_\g(\sigma y)$ and \eqref{req5} is exactly $\sigma^2=\iii_\g$. So we see that $(\sigma,\sigma)$ is a reflection on $(\g \xrightarrow{\iii_{\g}} \g_{\triangleright}, L_\triangleright)$. \end{proof}

\section{Reflections on Rota-Baxter Lie algebras and Poisson homogeneous spaces}\label{sec:PL}

In this section, from Rota-Baxter operators and reflections on Rota-Baxter operators, we  construct explicit Poisson Lie groups and Poisson homogeneous spaces, respectively.

For a Lie group $G$ and $g\in G$, we use the notations $\huaL_g$ and $\huaR_g$ to denote the tangent maps of the left and right translations by $g$ at identity.

\begin{pro}\label{quagroup}
Let $G$ be a connected and simply-connected Lie group such that its Lie algebra $(\g,B,S)$ is a quadratic Rota-Baxter Lie algebra of weight $\lambda$. Then
\begin{itemize}
\item[\rm (i)] $(G,\pi)$ is a Poisson Lie group, where $\pi\in \Gamma(\wedge^2 TG)$ is defined by
\begin{eqnarray*}
\pi_g&=&(\huaL_{g}\otimes \huaL_g)(B\circ I_S)-(\huaR_{g}\otimes \huaR_{g})(B\circ I_S),\quad \lambda=0;\\
\pi_g&=&\frac{1}{2\lambda}\big((\huaL_{g}\otimes \huaL_{g})(B\circ I_S-(B\circ I_S)^{21})-(\huaR_{g}\otimes \huaR_{g})(B\circ I_S-(B\circ I_S)^{21})\big),\quad \lambda\neq 0,
\end{eqnarray*}
where  the map  $B\circ I_S:\g^*\to \g$ is viewed as an element in $\g\otimes \g$.
\item[\rm (ii)] if $\tau:\g\to \g$ is a reflection on the quadratic Rota-Baxter Lie algebra as introduced in Definition \ref{refqua}, then
$(G/H,\mathrm{pr}_* \pi)$ is a Poisson homogeneous space of the Poisson Lie group $(G,\pi)$, where $H\subset G$ is the closed Lie subgroup with the Lie algebra $\h=\ker(\tau-\iii_{\g})$.
\end{itemize}
\end{pro}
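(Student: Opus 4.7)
The plan is to reduce both parts to Drinfeld's correspondence between Lie bialgebras and connected simply connected Poisson Lie groups together with the coideal construction of Poisson homogeneous spaces already recalled in Section~\ref{sec:base}.

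For (i), Theorem~\ref{converse} turns $(\g,B,S)$ into a coboundary Lie bialgebra $(\g,r_B)$, triangular when $\lambda=0$ and factorizable when $\lambda\neq 0$. Drinfeld's theorem then provides a unique Poisson Lie group structure on $G$ integrating $(\g,r_B)$, whose Poisson bivector is the Sklyanin-type expression
\[
  \pi_g = r_{\mathrm{sk}}^{L}(g) - r_{\mathrm{sk}}^{R}(g),
  \qquad
  r_{\mathrm{sk}} = \tfrac12(r_B - r_B^{21}) \in \wedge^2 \g,
\]
where the superscripts denote left/right translations by $\huaL_g$ and $\huaR_g$. To match the two formulas in the statement I would just unpack $r_{\mathrm{sk}}$. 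In the case $\lambda=0$, $r_B$ is already skew and corresponds to the operator $B\circ I_S$, giving the first formula. In the case $\lambda\neq 0$, the compatibility condition \eqref{RBmanin} translates, via $S(x,y)=\langle I_S^{-1}x,y\rangle$, into the operator identity
\[
  B\circ I_S + I_S\circ B^{*} = -\lambda\, I_S,
\]
which says that the symmetric part (in the tensor sense) of $(B+\lambda\iii_\g)\circ I_S$ equals $\lambda I_S$, i.e.\ a multiple of the Casimir. Consequently the symmetric piece $\lambda I_S$ cancels in $r_B - r_B^{21}$ and
\[
  r_B - r_B^{21} = \tfrac{1}{\lambda}\bigl(B\circ I_S - (B\circ I_S)^{21}\bigr),
\]
yielding the stated factor $\tfrac{1}{2\lambda}$.

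For (ii), I would invoke directly the construction recalled as item~(ii) at the end of Section~\ref{sec:base}: if $\h\subset \g$ is a coideal subalgebra and $H\subset G$ is a closed Lie subgroup integrating $\h$, then $G/H$ carries a unique Poisson structure for which the projection $\mathrm{pr}\colon G\to G/H$ is Poisson, and this structure is forced to be $\mathrm{pr}_*\pi$. Corollary~\ref{Sss} already guarantees that $\h = \ker(\tau - \iii_\g)$ is a coideal subalgebra of $(\g,r_B)$, so the general mechanism applies verbatim and produces the asserted Poisson homogeneous space.

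The only non-routine step is the bookkeeping in the factorizable case: because $r_B$ itself is not skew-symmetric, one must first identify and discard its symmetric Casimir component before applying the standard Sklyanin formula, and it is exactly the quadratic compatibility \eqref{RBmanin} that makes this separation clean. Once that identity is in hand, both (i) and (ii) are immediate from the general theory, with no additional verification of Poisson multiplicativity or of the Jacobi identity for $\mathrm{pr}_*\pi$ needed.
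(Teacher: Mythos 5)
Your proposal is correct and follows essentially the same route as the paper: part (i) is the Drinfeld integration of the Lie bialgebra $(\g,r_B)$ from Theorem \ref{converse} with the Sklyanin bivector, where the compatibility \eqref{RBmanin} in operator form ($B\circ I_S+I_S\circ B^*=-\lambda I_S$) shows the skew part of $r_B$ is $\tfrac{1}{2\lambda}(B\circ I_S-(B\circ I_S)^{21})$, exactly the paper's computation of $t_{B+}=\tfrac12(r_++r_-)$; part (ii) invokes the coideal criterion for Poisson homogeneous spaces, with the coideal property supplied by Corollary \ref{Sss}, which is the same content the paper obtains via Proposition \ref{h^0} and the identification $\h^0=I_S(\h^\perp)$.
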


\begin{proof}
For (i), following from Theorem \ref{quathm}, associated to a quadratic Rota-Baxter Lie algebra $(\g,B,S)$ of arbitrary weight, there is always a Lie bialgebra $(\g, r_B)$. The corresponding Poisson Lie group is $(G,\pi=\overleftarrow{r_B}-\overrightarrow{r_B})$.
Denote by $t_B$ the skew-symmetric part of $r_B$. Then $\pi=\overleftarrow{t_B}-\overrightarrow{t_B}$. When $\lambda=0$, we have $t_B=r_B=B\circ I_S\in \wedge^2 \g$. When $\lambda\neq 0$, we have
\[t_{B+}=\frac{1}{2}(r_++r_-)=\frac{1}{2}(\frac{1}{\lambda} B\circ I_S+I_S-\frac{1}{\lambda}I_S \circ B^*-I_S)=\frac{1}{2\lambda}(B\circ I_S-I_S\circ B^*).\]
So $t_B=\frac{1}{2\lambda} (B\circ I_S-(B\circ I_S)^{21})$ and we obtain the formula of $\pi$ in this case.

For (ii), it is known that the quotient manifold $G/H$ is a Poisson homogeneous space of $(G,\pi)$ such that the projection $G\to G/H$ is a Poisson map if and only if the annihilator space $\h^\perp\subset \g^*$ is a Lie subalgebra, which amounts to that $\h^0=I_S(\h^\perp)\subset \g_B$ is a Lie subalgebra.
This is exactly the case by Proposition  \ref{h^0}.
\end{proof}

\begin{ex}
For the quadratic Rota-Baxter Lie algebra $(\mathrm{sl}(n,\mathbb{C})\oplus \mathrm{sl}(n,\mathbb{C}), \hat{B},\hat{S})$ in Example \ref{sln} with the  reflection $\tau(x,u)=(u,x)$, we have a Poisson Lie group $\mathrm{SL}(n,\mathbb{C})\times \mathrm{SL}(n,\mathbb{C})$ and a Poisson homogeneous space $(\mathrm{SL}(n,\mathbb{C})\times \mathrm{SL}(n,\mathbb{C}))/\mathrm{SL}(n,\mathbb{C})$. This Poisson homogeneous space  is actually  Poisson diffeomorphic to the dual Poisson Lie group $$\mathrm{SL}(n,\mathbb{C})^*=B_+* B_-:=\{(b,c)\in B_+\times B_-|b_{ii}c_{ii}=1\},$$ where $B_{\pm}$  consist of  upper and lower triangular matrices in $\mathrm{SL}(n,\mathbb{C})$.
\end{ex}
\begin{ex}
For the quadratic Rota-Baxter Lie algebra $(\mathrm{sl}(n,\mathbb{C}),B,S)$ in Example \ref{sln2} with the  reflection $\tau(x)=-\overline{x}^T$, we have a real Poisson Lie group $\mathrm{SL}(n,\mathbb{C})$ and a Poisson homogeneous space $\mathrm{SL}(n,\mathbb{C})/\mathrm{SU}(n)$, where the latter is actually the dual Poisson Lie group $\mathrm{SU}(n)^*=\mathrm{SB}(n)$, consisting of upper triangular matrices  with real diagonals in $\mathrm{SL}(n,\mathbb{C})$.\end{ex}

\begin{ex}
For the quadratic Rota-Baxter Lie algebra $(\mathrm{sl}(2,\mathbb{C}),B,S)$ in Example \ref{invariant} with the reflection $\tau(x)=-x^T$, we have a complex Poisson Lie group
$\mathrm{SL}(2,\mathbb{C})$ and a Poisson homogeneous space $\mathrm{SL}(2,\mathbb{C})/\mathrm{SO}(2,\mathbb{C})$, consisting of symmetric matrices in $\mathrm{SL}(2,\mathbb{C})$. This Poisson homogeneous space is the fixed locus of the Poisson involution $\mathrm{SL}(2,\mathbb{C})\to \mathrm{SL}(2,\mathbb{C}), x\mapsto x^T$ studied in \cite{Xu}.
\end{ex}

For simplicity, we denote by $\rho$ both the Lie group action of $G$ on $V$ and its infinitesimal Lie algebra action of $\g$ on $V$. Associated to such an action, we have the semi-direct product Lie group $G\ltimes_\rho V$ with the group multiplication $(g,x)(h,y)=(gh,x+\rho_g y)$.

\begin{thm}\label{o}
Let $G$ be a Lie group with a representation $\rho: G\to \mathrm{GL}(V)$ on $V$. Suppose that  $(V\xrightarrow{T} \g,\rho)$ is a  relative Rota-Baxter Lie algebra. Then
\begin{itemize}
\item[\rm (i)] we have a Poisson Lie group  $(G\ltimes_{\rho^*} V^*, \pi)$, where
\[\pi_{(g,\alpha)}\in (T_g G\wedge V^*)\oplus \wedge^2 V^*\subset \wedge^2 T_{(g,\alpha)} (G\ltimes_{\rho^*}V^*)\] is given by
\begin{eqnarray*}
\pi_{(g,\alpha)}&=&(\rho^*_g\otimes \huaL_{g})T-(\huaL_{g}\otimes \rho^*_g) T^{21}-(\iii_{V^*}\otimes \huaR_{g})T+(\huaR_{g}\otimes \iii_{V^*})T^{21}\\ &&-(\iii_{V^*}\otimes f_{\rho,\alpha})T+ (f_{\rho,\alpha}\otimes \iii_{V^*})T^{21},
\end{eqnarray*}
where $f_{\rho,\alpha}: \g\to V$ is $f_{\rho,\alpha}(x):=\rho^*_x(\alpha)$ and we treat $T\in V^*\otimes \g$. Note that $V^*$ with the linear Poisson structure coming from the descendent Lie algebra $V_T$ is a Poisson subgroup of $G\ltimes_{\rho^*} V^*$. \item[\rm (ii)]
 suppose $(\sigma,\tau)$ is a reflection on $(V\xrightarrow{T} \g,\rho)$ as defined in Definition \ref{refzero}.
Then we have a  Poisson homogeneous space
\[\big((G\ltimes_{\rho^*} V^*)/(H\ltimes_{\rho^*} \ker(\tau^*+\iii_{V^*})),\mathrm{pr}_*\pi\big),\]
of the Poisson Lie group $(G\ltimes_{\rho^*} V^*, \pi)$, where $H$ is the Lie group of the Lie algebra $\h=\ker(\sigma-\iii_\g)\subset \g$.
\end{itemize}
\end{thm}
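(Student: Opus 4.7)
The plan for part (i) is to apply the general construction of a coboundary Poisson Lie group to the triangular Lie bialgebra $(\g\ltimes_{\rho^*}V^*,r_T)$ recalled from \cite{Bai}, where $r_T=T-T^{21}$ is a skew-symmetric solution of the classical Yang-Baxter equation; the Poisson-Lie property of $\pi=\overleftarrow{r_T}-\overrightarrow{r_T}$ is then automatic from triangularity. The substantive work is an explicit tangent-space calculation. Using the group law $(g,\alpha)(h,\beta)=(gh,\alpha+\rho^*_g\beta)$, I first compute the differentials at the identity,
\begin{equation*}
L_{(g,\alpha)*}(X,\xi)=(\huaL_g X,\ \rho^*_g\xi),\qquad R_{(g,\alpha)*}(X,\xi)=(\huaR_g X,\ \xi+f_{\rho,\alpha}(X)),
\end{equation*}
where the affine term $f_{\rho,\alpha}(X)=\rho^*_X\alpha$ arises from differentiating $\rho^*_{\exp(tX)}\alpha$ in $t$. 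Writing $T=\sum_i\alpha_i\otimes x_i\in V^*\otimes\g$ and applying these formulas termwise to $r_T=T-T^{21}$ then reproduces the six-term expression for $\pi_{(g,\alpha)}$ in the statement.

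To check that the restriction of $\pi$ to the subgroup $\{e\}\times V^*$ coincides with the linear (Kirillov--Kostant) Poisson structure on $V^*$ dual to the descendent Lie algebra $V_T$, I specialize the formula to $g=e$: the $\huaL_g$, $\huaR_g$ and $\rho^*_g$ terms collapse, leaving only $-(\iii_{V^*}\otimes f_{\rho,\alpha})T+(f_{\rho,\alpha}\otimes\iii_{V^*})T^{21}$. Pairing this bivector with $u\otimes v\in V\otimes V\cong T^*_\alpha V^*\otimes T^*_\alpha V^*$ and using $\langle\rho^*_x\alpha,v\rangle=-\langle\alpha,\rho_xv\rangle$ gives exactly $\alpha(\rho(Tu)v-\rho(Tv)u)=\alpha([u,v]_T)$, the Lie-Poisson bracket of $V_T^*$.

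For part (ii), the strategy is to combine Theorem \ref{thm:rRB-sol}, Corollary \ref{relweight0} and the criterion (ii) from Section \ref{sec:base}. By Theorem \ref{thm:rRB-sol} the automorphism $\tilde\tau=\sigma\oplus(-\tau^*)$ solves the classical reflection equation, and Corollary \ref{relweight0} identifies $\h\oplus\frkl$, with $\h=\ker(\sigma-\iii_\g)$ and $\frkl=\ker(\tau^*+\iii_{V^*})$, as a coideal subalgebra of $(\g\ltimes_{\rho^*}V^*,r_T)$. Since $\frkl\subset V^*$ is a vector subspace preserved by $\rho^*(\h)$ (which is immediate from $[\h,\frkl]\subset\frkl$ inside the semi-direct product), it integrates to the closed Lie subgroup $H\ltimes_{\rho^*}\ker(\tau^*+\iii_{V^*})$, whose Lie algebra is precisely $\h\oplus\frkl$. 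The standard Drinfeld criterion then produces the Poisson structure $\mathrm{pr}_*\pi$ on the quotient, with the projection Poisson by construction.

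The only delicate step is the semi-direct product tangent calculation in part (i): the right-translation differential carries the affine correction $f_{\rho,\alpha}(X)$, and this correction is exactly what produces the base-point dependence in the last two summands of $\pi_{(g,\alpha)}$. A naive treatment that forgets this term would fail even to reproduce the known linear Poisson structure on the $V^*$ factor. Once this bookkeeping is carefully executed, both parts reduce to routine applications of the Drinfeld correspondence between Lie bialgebras and Poisson Lie groups.
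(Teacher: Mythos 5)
Your proposal is correct and follows essentially the same route as the paper: part (i) via the triangular Lie bialgebra $(\g\ltimes_{\rho^*}V^*,r_T)$, the same left/right translation formulas for the semi-direct product (including the affine term $f_{\rho,\alpha}$ in the right translation), termwise evaluation of $\overleftarrow{r_T}-\overrightarrow{r_T}$, and the specialization at $g=e$ recovering the Lie--Poisson structure of $V_T$. For part (ii) you invoke Corollary \ref{relweight0} plus the Drinfeld criterion, which is exactly the content the paper packages as Theorem \ref{perpzero} (whose proof itself notes it follows from Theorem \ref{thm:rRB-sol} and Corollary \ref{relweight0}), so the arguments coincide.
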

\begin{proof}
For (i), recall that there is a  triangular Lie bialgebra $(\g\ltimes_{\rho^*} V^*, r_T=T-T^{21})$ associated to a relative Rota-Baxter Lie algebra $(V\xrightarrow{T} \g,\rho)$. Its corresponding Poisson Lie group is $(G\ltimes_{\rho^*} V^*, \pi=\overleftarrow{r_T}-\overrightarrow{r_T})$.
Now we compute the left and right translations of the Lie group $G\ltimes_{\rho^*} V^*$. For $x\in \g$ and $\beta\in V^*$, we have
\begin{eqnarray*}
\huaL_{(g,\alpha)} x&=&\frac{d}{dt}|_{t=0}(g,\alpha)(\exp^{tx},0)=\frac{d}{dt}|_{t=0}(g\exp^{tx},\alpha)=\huaL_{g}x;\\
\huaL_{(g,\alpha)} \beta&=&\frac{d}{dt}|_{t=0}(g,\alpha)(e,t\beta)=\frac{d}{dt}|_{t=0}(g,\alpha+t\rho^*_g\beta)=\rho^*_g\beta;\\
\huaR_{(g,\alpha)} x&=&\frac{d}{dt}|_{t=0}(\exp^{tx},0)(g,\alpha)=\frac{d}{dt}|_{t=0}(\exp^{tx}g,\rho^*_{\exp^{tx}}\alpha)=(\huaR_{g}x,\rho^*_x \alpha);\\
\huaR_{(g,\alpha)} \beta&=&\frac{d}{dt}|_{t=0}(e,t\beta)(g,\alpha)=\frac{d}{dt}|_{t=0}(g,t\beta+\alpha)=\beta.
\end{eqnarray*}
Therefore, by calculating $\overleftarrow{r_T}-\overrightarrow{r_T}$, we obtain the desired formula of $\pi_{(g,\alpha)}$.
Observing  that $\huaL_e=\huaR_e=\iii_\g$ and $\rho_e^*=\iii_{V^*}$, we see \[\pi_{(e,\alpha)}=-(\iii_{V^*}\otimes f_{\rho,\alpha})T+ (f_{\rho,\alpha}\otimes \iii_{V^*})T^{21}\in \wedge^2 V^*,\quad \forall \alpha\in V^*.\]
Acting on linear functions $u,v\in V=C^\infty_{\mathrm{lin}}(V^*)$, we have
\[\{u,v\}(\alpha)=\pi_{(e,\alpha)}(du,dv)=\langle \alpha, \rho(T(u))v-\rho(T(v))u\rangle=\langle \alpha,[u,v]_T\rangle.\]
This shows that $V^*$ is a Poisson subgroup of $G\ltimes_{\rho^*} V^*$ and the induced Poisson structure on $V^*$ is  the linear Poisson structure from the descendent Lie algebra $V_T$.

For (ii), first note that $H\ltimes_{\rho^*} \ker(\tau^*+\iii_{V^*})\subset G\ltimes_{\rho^*} V^*$ is a Lie subgroup with Lie subalgebra $\h\ltimes_{\rho^*} \ker(\tau^*+\iii_{V^*})=\ker (\sigma\oplus (-\tau^*)-\iii_{\g\ltimes_{\rho^*} V^*})$. It is known that the quotient space $(G\ltimes_{\rho^*} V^*)/(H\ltimes_{\rho^*} \ker(\tau^*+\iii_{V^*}))$ is a Poisson homogeneous space such that the projection \[\mathrm{pr}:  G\ltimes_{\rho^*} V^*\to (G\ltimes_{\rho^*} V^*)/(H\ltimes_{\rho^*} \ker(\tau^*+\iii_{V^*}))\]
is a Poisson map if and only if the annihilator space
\[(\h\ltimes_{\rho^*} \ker(\tau^*+\iii_{V^*}))^\perp=\h^\perp \oplus \mathrm{Im}(\tau+\iii_{V}) \subset  (\g^*\oplus V)_{r_T}\]
is a Lie subalgebra of the dual Lie algebra. By Theorem \ref{perpzero}, we obtain the result.
\end{proof}
As a consequence, we obtain a Poisson algebra structure on the function space $C^\infty(G\ltimes_{\rho^*} V^*)$. It has a Poisson subalgebra $C^\infty(M)\rtimes \mathrm{Sym}(V)$, where $\mathrm{Sym}(V)$ is the symmetric algebra on $V$.

\begin{cor}
Let $G$ be a Lie group with a representation $\rho: G\to \mathrm{GL}(V)$ on $V$ and let $(V\xrightarrow{T} \g,\rho)$ be a  relative Rota-Baxter Lie algebra. Then we have a Poisson algebra $\big(C^\infty(G)\rtimes \mathrm{Sym}(V), \{\cdot,\cdot\}\big)$, where $\{\cdot,\cdot\}$ is determined by
\begin{eqnarray*}
\{F,u\}(g)
&=&\frac{d}{dt}|_{t=0}F\big(\exp^{-tT(\rho_g u)}g\exp^{tT(\rho_g u)}\big), \quad g\in G,\\ \{u,v\}&=&\rho(Tu)v-\rho(Tv)u,\\
\{F,F'\}&=&0,\end{eqnarray*}
for $F,F'\in C^\infty(G)$ and $u,v\in V$.
\end{cor}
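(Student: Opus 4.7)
The approach is to restrict the Poisson bracket on $C^\infty(G\ltimes_{\rho^*}V^*)$ coming from the Poisson Lie group structure of Theorem \ref{o}(i) to the subalgebra $C^\infty(G)\otimes\mathrm{Sym}(V)$, where $\mathrm{Sym}(V)$ is identified with polynomial functions on $V^*$ via the natural pairing $V=(V^*)^*$. Since the Poisson bracket satisfies the Leibniz rule, to verify that this is a Poisson subalgebra and to establish the three claimed formulas it suffices to compute $\{F,F'\}$, $\{u,v\}$, and $\{F,u\}$ for $F,F'\in C^\infty(G)$ and $u,v\in V\subset\mathrm{Sym}(V)$; closure under Leibniz then follows automatically.

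The first observation is structural: inspecting the explicit formula for $\pi_{(g,\alpha)}$ given in Theorem \ref{o}(i), every one of the six summands lives in either $(T_gG\otimes V^*)\oplus(V^*\otimes T_gG)$ or in $V^*\otimes V^*$. In particular, the bivector $\pi_{(g,\alpha)}$ has no component in $T_gG\wedge T_gG$. Since for $F\in C^\infty(G)$ (pulled back to $G\ltimes V^*$) the differential $dF$ annihilates the $V^*$ direction, this immediately yields $\{F,F'\}=\pi(dF,dF')=0$.

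For $\{u,v\}$ with $u,v\in V$, the differentials $du,dv$ only pair nontrivially with the $V^*$ direction, so only the $\wedge^2 V^*$ part of $\pi$ contributes, namely $-(\iii_{V^*}\otimes f_{\rho,\alpha})T+(f_{\rho,\alpha}\otimes\iii_{V^*})T^{21}$. Writing $T=\sum_i e_i^*\otimes T(e_i)$ in a basis of $V$ and pairing, the identities $\sum_i e_i^*(u)T(e_i)=T(u)$ and $\langle f_{\rho,\alpha}(T(u)),v\rangle=\langle\alpha,\rho(T(u))v\rangle$ give $\{u,v\}(\alpha)=\langle\alpha,\rho(Tu)v-\rho(Tv)u\rangle$, i.e.\ the element $\rho(Tu)v-\rho(Tv)u$ of $V$. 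This matches the Lie-Poisson structure on $V^*$ induced by the descendent Lie algebra $V_T$, in agreement with the remark in Theorem \ref{o}(i). For $\{F,u\}$ only the four mixed summands of $\pi_{(g,\alpha)}$ contribute (the $f_{\rho,\alpha}$-terms are purely in $V^*\otimes V^*$). Pairing $dF$ with the $T_gG$-factor and $du=u$ with the $V^*$-factor and using the adjointness $\langle\rho^*_g e_i^*,u\rangle=\langle e_i^*,\rho_gu\rangle$ to collapse the basis sums via $T(\rho_gu)=\sum_i\langle e_i^*,\rho_gu\rangle T(e_i)$, the four mixed terms reduce to an expression of the form $\langle dF_g,(\huaL_g-\huaR_g)T(\rho_gu)\rangle$, which is precisely $\frac{d}{dt}|_{t=0}F\bigl(e^{-tT(\rho_gu)}g\,e^{tT(\rho_gu)}\bigr)$. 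In particular the result depends only on $g$, lying in $C^\infty(G)\subset C^\infty(G)\rtimes\mathrm{Sym}(V)$.

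The main obstacle is bookkeeping: one must carefully track the four distinct mixed terms in $\pi_{(g,\alpha)}$, identify which factor of each tensor sits in $T_gG$ versus $V^*$, and use the compatibility between the coadjoint action $\rho^*_g$ on $V^*$ and the duality $V^{**}\cong V$ to rewrite everything as translations of the single Lie algebra element $T(\rho_gu)$. Once this is done, the three computations combine to show that $C^\infty(G)\otimes\mathrm{Sym}(V)$ is closed under $\{\cdot,\cdot\}$ (since Leibniz extends the brackets on generators to the full algebra, and $\mathrm{Sym}(V)$ is closed under the Lie-Poisson bracket of $V_T$), and the Jacobi identity is inherited from $\pi$.
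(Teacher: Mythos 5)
Your overall strategy is the intended one: the paper offers no separate proof, and the corollary is meant to follow by restricting the bracket of the Poisson Lie group in Theorem \ref{o}(i) to $C^\infty(G)\otimes\mathrm{Sym}(V)$ and computing on generators; your treatment of $\{F,F'\}=0$ and of $\{u,v\}$ (only the $\wedge^2V^*$ part of $\pi$ contributes, giving the Lie--Poisson bracket of $V_T$) is fine. The gap is precisely the step you dismiss as ``bookkeeping''. First, of the four mixed summands of $\pi_{(g,\alpha)}$ only two pair nontrivially with $(dF,du)$, namely $-(\huaL_{g}\otimes \rho^*_g)T^{21}$ and $(\huaR_{g}\otimes \iii_{V^*})T^{21}$ (the other two have their first leg in the $V^*$ direction and are killed by $dF$). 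Second, and crucially, these two terms do \emph{not} collapse to a single element $T(\rho_g u)$ in both slots: in the right-translated term the $V^*$ leg of $T^{21}$ is untouched (since $\huaR_{(g,\alpha)}\beta=\beta$), so pairing with $du$ produces $\langle dF_g,\huaR_g T(u)\rangle$ with argument $u$, whereas only the left-translated term acquires a $\rho^*_g$-twist, producing $-\langle dF_g,\huaL_g T(\rho_{g^{-1}}u)\rangle$ under the contragredient convention $\langle\rho^*_g\beta,v\rangle=\langle\beta,\rho_{g^{-1}}v\rangle$ (or $T(\rho_g u)$ under the transposed convention you use, but still only in the left slot). Hence the restriction of $\pi$ gives
\begin{equation*}
\{F,u\}(g)=\langle dF_g,\ \huaR_g T(u)-\huaL_g T(\rho_{g^{-1}}u)\rangle,
\end{equation*}
in which the generators on the two sides of $g$ are different Lie algebra elements, so it is not of the conjugation form $\frac{d}{dt}\big|_{t=0}F\big(\exp^{-tT(\rho_g u)}g\exp^{tT(\rho_g u)}\big)$ in general. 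A concrete check: take $\g$ the $ax+b$ Lie algebra with $[a,b]=b$, $V=\mathbb{R}$ with $\rho(a)=\mu$, $\rho(b)=0$, $T(1)=b$; a direct coordinate computation of $\pi=\overleftarrow{r_T}-\overrightarrow{r_T}$ gives $\{F,u\}=(1-e^{(1-\mu)s})\partial_yF$, while the conjugation formula gives $(e^{(1+\mu)s}-e^{\mu s})\partial_yF$, and for $\mu=1$ the restricted bracket even vanishes identically (the $r$-matrix becomes $\ad$-invariant) while the conjugation formula does not.

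So your derivation of the first displayed formula is not carried out, and carrying it out does not yield the expression you assert; the identity $\langle\rho^*_g e_i^*,u\rangle=\langle e_i^*,\rho_g u\rangle$ you invoke is also not the contragredient action used to form $G\ltimes_{\rho^*}V^*$ (with your convention $g\mapsto\rho^*_g$ would be an anti-homomorphism and the semidirect product would fail to be associative). To repair the argument you must either record the honest formula $\{F,u\}(g)=\langle dF_g,\huaR_g T(u)-\huaL_g T(\rho_{g^{-1}}u)\rangle$ (which still closes in $C^\infty(G)\rtimes\mathrm{Sym}(V)$ and proves a corrected version of the statement), or identify an extra hypothesis or a different embedding of $\mathrm{Sym}(V)$ into $C^\infty(G\ltimes_{\rho^*}V^*)$ under which both slots acquire the same generator $T(\rho_g u)$; as written, the claimed reduction is exactly the point where the proof fails.
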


If $\sigma:\g\to \g$ is a Lie algebra automorphism such that $\rho(\sigma x)v=\rho(x)v$ and $\sigma\circ T=T$, then $(\sigma,\iii_{V})$ is a reflection on the relative Rota-Baxter Lie algebra $(V\xrightarrow{T} \g, \rho)$. Denote by $H$ the Lie group of the Lie algebra $\h=\ker (\sigma-\iii_\g)$. Then the above Poisson algebra has a Poisson subalgebra $C^\infty(G)^H\rtimes \mathrm{Sym}(V)^H$ on the $H$-invariant functions.

\begin{cor}
Let $G$ be a connected and simply-connected Lie group such that its Lie algebra $(\g,B)$ is a Rota-Baxter Lie algebra of weight $\lambda$. Then
\begin{itemize}
\item[\rm (i)] $(G\ltimes_{\Ad^*} \g^*,\pi)$ is a Poisson Lie group, where $\pi_{(g,\alpha)}\in \wedge^2 T_{(g,\alpha)}(G\ltimes_{\Ad^*} \g^*)$ is defined by
\begin{eqnarray*}
\pi_{(g,\alpha)}&=&(\Ad_{g}^*\otimes \huaL_{g})B-(\huaL_{g}\otimes \Ad_{g}^*) B^{21}-(\iii_{\g^*}\otimes \huaR_{g})B+(\huaR_{g}\otimes \iii_{\g^*})B^{21}\\ &&-(\iii_{\g^*}\otimes f_{\ad,\alpha})B+ (f_{\ad,\alpha}\otimes \iii_{\g^*})B^{21},\quad \lambda=0;\\
\pi_{(g,\alpha)}&=&\frac{1}{2\lambda}\big((\Ad_{g}^*\otimes \huaL_{g})B-(\huaL_{g}\otimes \Ad_{g}^*) B^{21}-(\iii_{\g^*}\otimes \huaR_{g})B+(\huaR_{g}\otimes \iii_{\g^*})B^{21}\\ &&-(\iii_{\g^*}\otimes f_{\ad,\alpha})B+ (f_{\ad,\alpha}\otimes \iii_{\g^*})B^{21}\big),\quad \lambda\neq 0,
\end{eqnarray*}
where  $B:\g\to \g$ is viewed as an element in $\g^*\otimes \g$ with $B^{21}\in \g\otimes \g^*$ and $f_{\ad,\alpha}: \g\to \g^*$ is $f_{\ad,\alpha}(x):=\ad^*_x(\alpha)$.\item[\rm (ii)] if $\tau:\g\to \g$ is a reflection on the Rota-Baxter Lie algebra as introduced in Definition \ref{reflectiononRB}, then
\[\big((G\ltimes_{\Ad^*} \g^*)/(H\ltimes_{\Ad^*} \ker(\tau^*+\iii_{\g^*})), \mathrm{pr}_* \pi\big)\] is a Poisson homogeneous space of the Poisson Lie group $(G\ltimes_{\Ad^*} \g^*,\pi)$, where $H\subset G$ is the closed Lie subgroup with Lie algebra $\h=\ker(\tau-\iii_{\g})$.
\end{itemize}
\end{cor}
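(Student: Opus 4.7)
The plan is to reduce the corollary to Proposition~\ref{quagroup} by passing through the quadratic Rota-Baxter Lie algebra attached to $(\g,B)$, namely $\g':=\g\ltimes_{\ad^*}\g^*$ with $B':=B\oplus(-\lambda\iii_{\g^*}-B^*)$ and the natural pairing $\huaS$, as recalled at the start of Subsection~3.2. Since $G$ is connected and simply connected, $G\ltimes_{\Ad^*}\g^*$ is the connected simply-connected integration of $\g'$.

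For (i), I would apply Proposition~\ref{quagroup}(i) to $(\g',B',\huaS)$ and then unfold the resulting formula for $\pi$. First I would record the left and right translations on the semidirect product group: $\huaL_{(g,\alpha)}$ acts as $\huaL_g$ on the $\g$-summand of the Lie algebra and as $\Ad_g^*$ on the $\g^*$-summand, while $\huaR_{(g,\alpha)}$ restricts to $\iii_{\g^*}$ on $\g^*$ and sends $x\in\g$ to $\huaR_g x + f_{\ad,\alpha}(x)$ with $f_{\ad,\alpha}(x)=\ad_x^*\alpha$. Next I would compute the $r$-matrix of Theorem~\ref{converse} explicitly in $\g'\otimes\g'$; using the self-dual basis provided by $\huaS$, one finds that $r_{B'}$ equals $B-B^{21}$ in the triangular case and differs from $\tfrac{1}{\lambda}(B-B^{21})$ only by an $\ad$-invariant canonical piece (coming from $I_\huaS$ itself) in the factorizable case. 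Because any $\ad$-invariant element $C\in(\g')^{\otimes 2}$ satisfies $\overleftarrow{C}=\overrightarrow{C}$ on the connected group $G\ltimes_{\Ad^*}\g^*$, such pieces cancel in $\pi=\overleftarrow{t}-\overrightarrow{t}$, so only the $B-B^{21}$ contribution survives. Substituting the translation formulas above then reproduces the expression stated in the corollary.

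For (ii), a reflection $\tau$ on $(\g,B)$ in the sense of Definition~\ref{reflectiononRB} yields, by Theorem~\ref{rrr}, a reflection $\tilde{\tau}=\tau\oplus(-\tau^*)$ on the quadratic Rota-Baxter Lie algebra $(\g',B',\huaS)$. Its fixed-point Lie subalgebra is
\[\ker(\tilde{\tau}-\iii_{\g'})=\ker(\tau-\iii_\g)\oplus\ker(\tau^*+\iii_{\g^*})=\h\oplus\ker(\tau^*+\iii_{\g^*}),\]
which integrates inside $G\ltimes_{\Ad^*}\g^*$ to $H\ltimes_{\Ad^*}\ker(\tau^*+\iii_{\g^*})$; the semidirect form and the $\Ad_H^*$-stability of $\ker(\tau^*+\iii_{\g^*})$ are automatic because $\tilde{\tau}$ being a Lie algebra automorphism forces its fixed-point set to be a Lie subalgebra of $\g\ltimes_{\ad^*}\g^*$. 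Proposition~\ref{quagroup}(ii) applied to $(\g',B',\huaS,\tilde{\tau})$ then delivers the Poisson homogeneous space structure.

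The main obstacle is the bookkeeping in the explicit bivector computation: tracking how left and right translations act differently on the two semidirect factors, distinguishing $B\in\g^*\otimes\g$ from $B^{21}\in\g\otimes\g^*$, and verifying that the non-skew piece of $r_{B'}$ is $\ad$-invariant so that it cancels in $\overleftarrow{\phantom{X}}-\overrightarrow{\phantom{X}}$. Once this is checked, the passage to the quotient in (ii) is essentially immediate from the earlier propositions.
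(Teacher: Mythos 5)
Your strategy coincides with the paper's: reduce to Proposition \ref{quagroup} via the quadratic Rota--Baxter Lie algebra $(\g\ltimes_{\ad^*}\g^*,\,B\oplus(-\lambda\iii_{\g^*}-B^*),\,\huaS)$, compute the left and right translations of $G\ltimes_{\Ad^*}\g^*$ as in the proof of Theorem \ref{o}, and obtain (ii) from Theorem \ref{rrr} together with Proposition \ref{quagroup}(ii). Your part (ii) and the $\lambda=0$ half of part (i) are fine. The gap is in the $\lambda\neq 0$ half of (i). Writing $\Omega=\sum_i e_i^*\otimes e_i$ for the canonical element of $\g^*\otimes\g$, the $r$-matrix of Theorem \ref{converse} for this quadratic Rota--Baxter Lie algebra is $r=\frac{1}{\lambda}(B-B^{21})+\Omega$ (this is Proposition \ref{r}(ii) in tensor form). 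The discrepancy $\Omega$ is \emph{not} $\ad$-invariant: only its symmetric part $\frac12(\Omega+\Omega^{21})$, the tensor form of $I_\huaS$, is invariant and drops out of $\overleftarrow{\,\cdot\,}-\overrightarrow{\,\cdot\,}$, whereas for $\xi\in\g^*$ one has $\ad_\xi\cdot\Omega=-\sum_i e_i^*\otimes\ad_{e_i}^*\xi\neq 0$ whenever $\g$ is nonabelian. Hence the skew-symmetric part that must be fed into Proposition \ref{quagroup} is $\frac{1}{\lambda}(B-B^{21})+\frac12(\Omega-\Omega^{21})$, and after substituting the translation formulas the second summand contributes the purely vertical bivector $\frac12\big(-(\iii_{\g^*}\otimes f_{\ad,\alpha})\Omega+(f_{\ad,\alpha}\otimes\iii_{\g^*})\Omega^{21}\big)$ (the group-direction parts cancel because $\sum_i\Ad_g^*e_i^*\otimes\huaL_g e_i=\sum_i e_i^*\otimes\huaR_g e_i$). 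This term cannot be discarded: in the paper's own example $B=-\iii_\g$, $\lambda=1$, the entire Poisson structure is exactly such a vertical term, namely $-\pi_{\g^*}$.

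Consequently your sentence ``such pieces cancel \dots so only the $B-B^{21}$ contribution survives'' is false, and the concluding claim that the substitution ``reproduces the expression stated in the corollary'' is unsubstantiated; note that even granting your cancellation you would obtain the prefactor $\frac{1}{\lambda}$, not the $\frac{1}{2\lambda}$ displayed. What your reduction actually yields for $\lambda\neq0$ is $\frac{1}{\lambda}$ times the six-term expression in the statement plus $\frac12\big(-(\iii_{\g^*}\otimes f_{\ad,\alpha})\Omega+(f_{\ad,\alpha}\otimes\iii_{\g^*})\Omega^{21}\big)$; a useful consistency check is that linearizing this at the identity reproduces the dual Lie bracket of Lemma \ref{iso}, in particular the $\lambda[x,y]_\g$ term of $[x,y]_B$, which the six-term expression alone does not produce for general $B$. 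So the ``bookkeeping'' you defer at the end is precisely where the proof currently breaks: you must carry the $\frac12(\Omega-\Omega^{21})$ term through the computation explicitly and then reconcile the outcome with the formula printed in the statement, rather than dismissing it by an invariance argument that does not hold for the semidirect product Lie algebra.
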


\begin{proof}
As a Rota-Baxter Lie algebra $(\g,B)$ gives rise to a quadratic Rota-Baxter Lie algebra $(\g\ltimes_{\ad^*} \g^*, B\oplus (-\lambda\iii_{\g^*}-B^*), \huaS)$, this result is a consequence of Proposition \ref{quagroup}. Based on a similar computation as in the proof of Proposition \ref{o}, we obtain the explicit formula of $\pi$.
\end{proof}

\begin{ex}
Let $G$ be a Lie group with Lie algebra $\g$. The Rota-Baxter Lie algebra $(\g,-\iii_\g)$ induces a Poisson Lie group
$(G\ltimes_{\Ad^*}\g^*, \pi)$, where $\pi_{(g,\alpha)}=-\pi_{\g^*}|_\alpha$ and $\pi_{\g^*}$ is the linear Poisson structure on $\g^*$.  Moreover,  any involutive Lie algebra automorphism $\tau: \g\to \g$ gives a Poisson homogeneous space
$\big((G\ltimes_{\Ad^*}\g^*)/(H\ltimes_{\Ad^*} \h^\perp),\mathrm{pr}_*\pi\big)$.
\end{ex}

 \begin{pro}
Suppose that $G$ is a Lie group and  $\rho: G\to \Aut(K)$ is an action of $G$ on another Lie group $K$ by automorphisms. Let $(\frkk\xrightarrow{T} \g,\rho)$ be a relative Rota-Baxter Lie algebra. Then we have a coboundary Poisson Lie group
\[\big((G\ltimes_{\rho} K)\ltimes_{\Ad^*} (\g^*\oplus \frkk^*), \overleftarrow{r_{\overline{T}}}-\overrightarrow{r_{\overline{T}}}\big),\]
which has a Poisson subgroup
\[\big(G\ltimes \frkk^*,(\overleftarrow{r_{\overline{T}}}-\overrightarrow{r_{\overline{T}}})|_{G\ltimes \frkk^*}\big),\]
and a Poisson homogeneous space
\[\big(K\ltimes \g^*,\mathrm{pr}_*(\overleftarrow{r_{\overline{T}}}-\overrightarrow{r_{\overline{T}}})\big),\]
where $\mathrm{pr}: (G\ltimes_{\rho} K)\ltimes (\g^*\oplus \frkk^*)\to K\ltimes \g^*$ is the natural projection.
\end{pro}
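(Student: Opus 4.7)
The plan is to assemble the three assertions of the proposition by combining Corollary \ref{cor:explicit-formula}, Proposition \ref{main}, and the general coideal/homogeneous-space correspondence recalled in Section \ref{sec:base}. No new calculation is really needed beyond identifying the quotient manifold at the end; the structural work has been done already.

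First I would argue that $(G\ltimes_{\rho} K)\ltimes_{\Ad^*}(\g^*\oplus \frkk^*)$ is a coboundary Poisson Lie group. By Corollary \ref{cor:explicit-formula}, $r_{\overline{T}}\in\otimes^2(\overline{\g}\ltimes_{\ad^*}\overline{\g}^*)$ is a solution of the classical Yang--Baxter equation on the Lie algebra $\overline{\g}\ltimes_{\ad^*}\overline{\g}^*$, hence it endows this Lie algebra with a coboundary Lie bialgebra structure. By Drinfeld's correspondence between Lie bialgebras and connected, simply-connected Poisson Lie groups, this integrates to the coboundary Poisson bivector field $\pi=\overleftarrow{r_{\overline{T}}}-\overrightarrow{r_{\overline{T}}}$ on the Lie group $(G\ltimes_{\rho} K)\ltimes_{\Ad^*}(\g^*\oplus \frkk^*)$ whose Lie algebra is $\overline{\g}\ltimes_{\ad^*}\overline{\g}^*$.

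Second, to see that $G\ltimes \frkk^*$ is a Poisson subgroup I would invoke Proposition \ref{main}: it asserts that $(\g\ltimes_{\rho^*}\frkk^*,\g^*\rtimes\frkk_T)$ is a Lie sub-bialgebra of $(\overline{\g}\ltimes_{\ad^*}\overline{\g}^*,r_{\overline{T}})$. By the very definition of Lie sub-bialgebra recalled before Proposition \ref{main}, this means that $\g\oplus\frkk^*$ is a Lie subalgebra whose annihilator in the dual Lie algebra is an ideal. Criterion (i) from Section \ref{sec:base} then says exactly that the corresponding closed subgroup $G\ltimes\frkk^*$ is a Poisson subgroup of $\big((G\ltimes K)\ltimes(\g^*\oplus\frkk^*),\pi\big)$, with Poisson structure the restriction of $\pi$.

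Third, since $G\ltimes\frkk^*$ is a Poisson subgroup, its Lie algebra annihilator is in particular a Lie subalgebra of the dual Lie algebra, so criterion (ii) from Section \ref{sec:base} yields a unique Poisson structure on the coset manifold $\big((G\ltimes K)\ltimes(\g^*\oplus\frkk^*)\big)\big/\big(G\ltimes\frkk^*\big)$ making the quotient map Poisson, and the residual transitive action of $(G\ltimes K)\ltimes(\g^*\oplus\frkk^*)$ turns it into a Poisson homogeneous space. Finally, using the standard coordinates $(g,k,\xi,\alpha)$ on the ambient group and the explicit iterated semidirect-product multiplication, one identifies this coset space, via the natural projection $\mathrm{pr}\colon(g,k,\xi,\alpha)\mapsto(k,\xi)$, with the manifold $K\ltimes\g^*$, so that the induced Poisson structure is precisely $\mathrm{pr}_*(\overleftarrow{r_{\overline{T}}}-\overrightarrow{r_{\overline{T}}})$.

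The main obstacle is the last identification: one must carefully write down the multiplication in the iterated semidirect product $(G\ltimes_{\rho} K)\ltimes_{\Ad^*}(\g^*\oplus\frkk^*)$ and check that the cosets of $G\ltimes\frkk^*$ are precisely the fibres of $\mathrm{pr}$, so that $\mathrm{pr}$ factors through a diffeomorphism with $K\ltimes\g^*$. Once this bookkeeping is in place, the three claims of the proposition follow at once from Drinfeld's correspondence, Proposition \ref{main}, and the coideal/Poisson homogeneous space criterion from Section \ref{sec:base}.
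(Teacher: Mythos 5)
Your proposal follows the paper's own route in all essentials: the coboundary Poisson Lie group comes from $r_{\overline{T}}$ being a solution of the classical Yang--Baxter equation (Corollary \ref{cor:explicit-formula}), the Poisson subgroup $G\ltimes \frkk^*$ from Proposition \ref{main} (the annihilator $\g\oplus\frkk^*$ is an ideal of the dual Lie algebra) combined with criterion (i) of Section \ref{sec:base}, and the homogeneous space from criterion (ii) applied to the quotient by $G\ltimes\frkk^*$, since an ideal is in particular a Lie subalgebra. Two remarks on the details you defer. First, you neither need nor can literally invoke Drinfeld's correspondence, since $G$ and $K$ are not assumed connected and simply connected; for a coboundary $r$-matrix the bivector $\overleftarrow{r_{\overline{T}}}-\overrightarrow{r_{\overline{T}}}$ is multiplicative on any Lie group with the given Lie algebra, and the Yang--Baxter equation gives $[\pi,\pi]=0$, which is all that is used. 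Second, the step you call the main obstacle would not go through as you describe it: the cosets of $G\ltimes\frkk^*$ are in general \emph{not} the fibres of the coordinate projection $(g,k,\xi,\alpha)\mapsto (k,\xi)$. Indeed, right multiplication by $((g',e),(0,\alpha'))$ adds $\Ad^*_{(g,k)}(0,\alpha')$ to the abelian component, and the coadjoint action of $G\ltimes_\rho K$ does not preserve the splitting $\g^*\oplus\frkk^*$: since $\Ad_{(e,k)}(x,u)=(x,\Ad_k u+\theta_k(x))$ with $\theta_k(x)=\frac{d}{dt}\big|_{t=0}k\,\rho_{\exp tx}(k^{-1})$ generally nonzero, the element $\Ad^*_{(e,k)}(0,\alpha')$ has a nontrivial $\g^*$-component, so the $\xi$-coordinate is not constant along cosets. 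The clean identification is the one implicit in the paper: $\frkk\oplus\g^*$ is a Lie subalgebra of $(\g\ltimes_\rho\frkk)\ltimes_{\ad^*}(\g^*\oplus\frkk^*)$ complementary to $\g\oplus\frkk^*$, and every group element factors uniquely as an element of the corresponding subgroup $K\ltimes\g^*$ times an element of $G\ltimes\frkk^*$, so $K\ltimes\g^*$ meets each coset exactly once and $\mathrm{pr}$ is the coset projection transported through this factorization (it agrees with the naive coordinate projection only up to the $\Ad^*$-twist above). With that correction your argument coincides with the paper's; the paper's additional observation that $(\frkk\oplus\g^*)^\perp$ is only a subalgebra, not an ideal, of the dual serves to explain why $K\ltimes\g^*$ is merely a Poisson homogeneous space and not a Poisson subgroup.
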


\begin{proof}
From Proposition  \ref{main}, we see that $\g\ltimes_{\rho^*} \frkk^*\subset (\g\ltimes_{\rho} \frkk)\ltimes_{\ad^*} (\g^*\oplus \frkk^*)$ is a Lie subalgebra and \[(\g\ltimes_{\rho^*} \frkk^*)^\perp=\g\oplus \frkk^*\subset (\g\oplus \frkk \oplus \g^*\oplus \frkk^*)_{r_{\overline{T}}}\] is an ideal. Thus $G\ltimes \frkk^*$ is a Poisson subgroup. Note that
$r_{\overline{T}}(\g^*\oplus  \frkk,\g^*\oplus \frkk)=0$. The Poisson structure $(\overleftarrow{r_{\overline{T}}}-\overrightarrow{r_{\overline{T}}})|_{G\ltimes \frkk^*}$ is in general not coboundary.

Unlike the above situation, $\frkk\ltimes \g^*\subset (\g\ltimes_{\rho} \frkk)\ltimes_{\ad^*} (\g^*\oplus \frkk^*)$ is a Lie subalgebra, and its annihilator space \[(\frkk\ltimes \g^*)^\perp=\frkk \oplus \g^*\subset (\g\oplus \frkk \oplus \g^*\oplus \frkk^*)_{r_{\overline{T}}}\] is just a Lie subalgebra, not an ideal, since $[u, y]=[Tu,y]\in \g$ for $u\in \frkk$ and $y\in \g$. Hence $K\ltimes \g^*=(G\ltimes_{\rho} K)\ltimes_{\Ad^*} (\g^*\oplus \frkk^*)/G\ltimes \frkk^*$ with the quotient Poisson structure is a Poisson homogeneous space.
\end{proof}
By \cite{ LM}, for two manifolds $M_1$ and $M_2$, a {\bf mixed product Poisson structure} on the product manifold $M_1\times M_2$
is a Poisson bivector field on $M_1\times M_2$ that projects to well-defined Poisson structures on $M_1$ and $M_2$.  The above Poisson structure on  $(G\ltimes_{\rho} K)\ltimes_{\Ad^*} (\g^*\oplus \frkk^*)$ is a mixed product Poisson structure on the product manifold $(G\ltimes \frkk^*)\times (K\ltimes \g^*)$.

\begin{cor}
If the pair $\tau:\frkk\to \frkk$ and $\sigma: \g\to \g$ is a reflection on the relative Rota-Baxter Lie algebra $(\frkk\xrightarrow{T}\g, \rho)$ as introduced in Definition \ref{refrel}, then
\[\big(((G\ltimes_{\rho} K)\ltimes_{\Ad^*} (\g^*\oplus \frkk^*))/(H\ltimes_{\Ad^*} \ker(\sigma^*\oplus\tau^*+\iii_{\g^*\oplus \frkk^*})), \mathrm{pr}_* \pi\big)\] is a Poisson homogeneous space of the Poisson Lie group $\big((G\ltimes_{\rho} K)\ltimes_{\Ad^*} (\g^*\oplus \frkk^*),\pi\big)$, where $H\subset G\ltimes K$ is the closed Lie subgroup with Lie algebra $\h=\ker(\sigma\oplus\tau-\iii_{\g\oplus\frkk})$.
\end{cor}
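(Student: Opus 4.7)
The plan is to deduce this corollary by specializing the general mechanism ``reflection $\Rightarrow$ coideal subalgebra $\Rightarrow$ Poisson homogeneous space'' that has already been set up for the quadratic (Rota--Baxter) case. Concretely, I would first identify the Lie algebra of the ambient Poisson Lie group as $\overline{\g}\ltimes_{\ad^*}\overline{\g}^*$ with $\overline{\g}=\g\ltimes_\rho\frkk$, and recall from Theorem~\ref{rRBth} that the map
\[
\tilde{\tau}:=(\sigma\oplus\tau)\oplus(-\sigma^*\oplus-\tau^*):\overline{\g}\ltimes_{\ad^*}\overline{\g}^*\longrightarrow\overline{\g}\ltimes_{\ad^*}\overline{\g}^*
\]
is a solution of the classical reflection equation for the factorizable Lie bialgebra $(\overline{\g}\ltimes_{\ad^*}\overline{\g}^*,r_{\overline{T}})$. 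Thus everything that follows is an application of the Schrader-type machinery together with Drinfeld's correspondence between Lie bialgebras and connected simply-connected Poisson Lie groups.

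Next I would compute the fixed-point set of $\tilde{\tau}$ on $\overline{\g}\oplus\overline{\g}^*$. By construction, the $+1$-eigenspace of $\sigma\oplus\tau$ on $\overline{\g}$ is $\h=\ker(\sigma\oplus\tau-\iii_{\g\oplus\frkk})$, and the $+1$-eigenspace of $-\sigma^*\oplus-\tau^*$ on $\overline{\g}^*$ is $\ker(\sigma^*\oplus\tau^*+\iii_{\g^*\oplus\frkk^*})$. Hence the fixed-point Lie subalgebra of $\tilde{\tau}$ is exactly
\[
\h\oplus\ker(\sigma^*\oplus\tau^*+\iii_{\g^*\oplus\frkk^*})\subset \overline{\g}\ltimes_{\ad^*}\overline{\g}^*,
\]
whose integrating Lie subgroup inside $(G\ltimes_\rho K)\ltimes_{\Ad^*}(\g^*\oplus\frkk^*)$ is precisely $H\ltimes_{\Ad^*}\ker(\sigma^*\oplus\tau^*+\iii_{\g^*\oplus\frkk^*})$.

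Then I would invoke Theorem~\ref{Theorem 1} (in the form of Corollary~\ref{Sss}, applied to the quadratic Rota-Baxter Lie algebra $(\overline{\g}\ltimes_{\ad^*}\overline{\g}^*,\overline{T}\oplus(-\lambda\iii-\overline{T}^*),\huaS)$, cf.\ the proof of Proposition~\ref{ex}) to conclude that this fixed-point set is a coideal subalgebra of $(\overline{\g}\ltimes_{\ad^*}\overline{\g}^*,r_{\overline{T}})$. By the general criterion recalled in Section~\ref{sec:base}, the annihilator of a coideal subalgebra being a Lie subalgebra of the dual Lie algebra is exactly the condition needed for the quotient of the corresponding Poisson Lie group by the integrating subgroup to inherit a Poisson structure such that the projection is a Poisson map.

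Finally I would combine these facts to conclude that $\mathrm{pr}_*\pi$ is a well-defined Poisson structure on the quotient, where $\pi=\overleftarrow{r_{\overline{T}}}-\overrightarrow{r_{\overline{T}}}$ is the Sklyanin bivector, and that the quotient is a Poisson homogeneous space of $((G\ltimes_\rho K)\ltimes_{\Ad^*}(\g^*\oplus\frkk^*),\pi)$. No serious obstacle is anticipated: the core analytic work is in Theorem~\ref{rRBth}, and what remains is bookkeeping of the fixed-point sets on the Lie algebra level and an appeal to the standard Drinfeld/Lu--Weinstein correspondence. The only point to be careful about is that $H\ltimes_{\Ad^*}\ker(\sigma^*\oplus\tau^*+\iii_{\g^*\oplus\frkk^*})$ is indeed closed in $(G\ltimes_\rho K)\ltimes_{\Ad^*}(\g^*\oplus\frkk^*)$, which follows from $\sigma,\tau$ being involutions (so the eigenspaces are closed subspaces and $H$ is closed as the fixed-point set of a Lie group involution integrating $\sigma\oplus\tau$).
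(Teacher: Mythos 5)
Your argument is correct and is essentially the paper's (implicit) one: Theorem \ref{rRBth} makes $(\sigma\oplus\tau)\oplus(-\sigma^*\oplus-\tau^*)$ a solution of the classical reflection equation on $\overline{\g}\ltimes_{\ad^*}\overline{\g}^*$, its fixed-point set $\h\oplus\ker(\sigma^*\oplus\tau^*+\iii_{\g^*\oplus\frkk^*})$ is then a coideal subalgebra (as in the corollary following Theorem \ref{rRBth}), and Drinfeld's criterion from Section \ref{sec:base} yields the Poisson homogeneous quotient. The only cosmetic point is that you could cite Corollary \ref{corRBcor} (or part (i) of the corollary after Theorem \ref{rRBth}) directly instead of routing through Corollary \ref{Sss}; the content is the same.
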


\end{document}